\newtheorem*{rep@theorem}{\rep@title}
\newcommand{\newreptheorem}[2]{%
\newenvironment{rep#1}[1]{%
 \def\rep@title{#2 \ref{##1}}%
 \begin{rep@theorem}}%
 {\end{rep@theorem}}}
\theoremstyle{definition}
\newtheorem{definition}{Definition}
\newtheorem{construction}{Construction}
\newtheorem{example}[definition]{Example}
\theoremstyle{plain}
\newtheorem{theorem}{Theorem}
\newtheorem{proposition}[definition]{Proposition}
\newtheorem{lemma}[definition]{Lemma}
\newtheorem{remark}[definition]{Remark}
\newtheorem{corollary}[definition]{Corollary}
\begin{document}
%
\title{Relative generalized matrix weights of matrix codes for universal security on wire-tap networks}
%
%
%
\author{Umberto~Mart{\'i}nez-Pe\~{n}as,~\IEEEmembership{Student Member,~IEEE,} 
		and Ryutaroh~Matsumoto,~\IEEEmembership{Member,~IEEE,}
\thanks{The work of U.~Mart{\'i}nez-Pe\~{n}as was supported by The Danish Council for Independent Research under Grant No. DFF-4002-00367 and Grant No. DFF-5137-00076B (``EliteForsk-Rejsestipendium''). The work of R.~Matsumoto was supported by the Japan Society for the Promotion of Science under Grant No. 26289116.}
\thanks{Parts of this paper have been presented at the 54th Annual Allerton Conference on Communication, Control, and Computing, Monticello, IL, USA, Oct 2016 \cite{allertonversion}.}
\thanks{U. Mart{\'i}nez-Pe\~{n}as is with the Department of Mathematical Sciences, Aalborg University, Aalborg 9220, Denmark (e-mail: umberto@math.aau.dk). At the time of submission, he was visiting The Edward S. Rogers Sr. Department of Electrical and Computer Engineering, University of Toronto, Toronto, ON M5S 3G4, Canada.}
\thanks{R. Matsumoto is with the Department of Information and Communication Engineering, Nagoya University, Nagoya, Japan.}}


%
%

\markboth{ }%
{Shell \MakeLowercase{\textit{et al.}}: Bare Demo of IEEEtran.cls for IEEE Journals}
%



\maketitle

\begin{abstract}
Universal security over a network with linear network coding has been intensively studied. However, previous linear codes and code pairs used for this purpose were linear over a larger field than that used on the network{\color{black}, which restricts the possible packet lengths of optimal universal secure codes, does not allow to apply known list-decodable rank-metric codes and requires performing operations over a large field}. In this work, we introduce new parameters (relative generalized matrix weights and relative dimension/rank support profile) for code pairs that are linear over the field used in the network, {\color{black}and show that they measure} the universal security performance of these code pairs. For one code and non-square matrices, generalized matrix weights coincide with the existing Delsarte generalized weights, hence we prove the connection between these latter weights and secure network coding{\color{black}, which was left open}. {\color{black}As main applications,} the proposed new parameters enable us to{\color{black}: 1)} Obtain optimal universal secure linear codes on noiseless networks for all possible {\color{black}packet lengths, in particular for packet lengths not considered before, 2) Obtain the first universal secure list-decodable rank-metric code pairs with polynomial-sized lists, based on a recent construction} by Guruswami et al{\color{black}, and 3) Obtain new characterizations of security equivalences of linear codes}. Finally, we show that our parameters extend relative generalized Hamming weights and relative dimension/length profile, respectively, and relative generalized rank weights and relative dimension/intersection profile, respectively. \\
\end{abstract}

\begin{IEEEkeywords}
Network coding, rank weight, relative dimension/rank support profile, relative generalized matrix weight, universal secure network coding.
\end{IEEEkeywords}

%
\IEEEpeerreviewmaketitle

\section{Introduction}
%
%
%
%
\IEEEPARstart{L}{inear} network coding was first studied in \cite{ahlswede}, {\color{black}\cite{Koetter2003}} and \cite{linearnetwork}, and enables us to realize higher throughput than the conventional storing and forwarding. {\color{black}Error correction in this context was first studied in \cite{cai-yeung}, and security, }meaning information leakage to an adversary wire-tapping links in the network, was first considered in \cite{secure-network}. {\color{black}In that work, the authors give outer codes with optimal information rate for the given security performance, although {\color{black}using large fields on the network}. The field size was later reduced in \cite{feldman} by reducing the information rate.} In addition, the approach in \cite{wiretapnetworks} allows us to see {\color{black}secure network coding} as a generalization of secret sharing {\color{black}\cite{blakley-safe, shamir}}, which is a generalization of the wire-tap channel of type II \cite{ozarow}{\color{black}. }

However, {\color{black}these} approaches {\color{black}\cite{secure-network, wiretapnetworks, feldman}} require knowing and/or modifying the underlying linear network code, which does not allow us to perform, for instance, random linear network coding \cite{random}{\color{black}, which achieves capacity in a decentralized manner and is robust to network changes. }Later, the use of {\color{black}pairs of} linear (block) codes as outer codes was proposed in \cite{silva-universal} to protect messages from errors together with information leakage to a wire-tapping adversary {\color{black}(see Remark \ref{remark noiseless coset})}, depending only on the number of errors and wire-tapped links, respectively, and not depending on the underlying linear network code, which is referred to as \textit{universal security} in \cite{silva-universal}. 

{\color{black}In \cite{silva-universal}, the encoded message consists of $ n $ (number of outgoing links from the source) vectors in $ \mathbb{F}_{q^m} $ or $ \mathbb{F}_q^m $, called packets, where $ m $ is called the packet length and where $ \mathbb{F}_q $ is the field used for the underlying linear network code, as opposed to previous works \cite{secure-network, wiretapnetworks, feldman}, where $ m = 1 $. The universal performance of the proposed linear codes in \cite{silva-universal} is measured by the rank metric \cite{delsartebilinear}, and the authors in \cite{silva-universal} prove that linear codes in $ \mathbb{F}_{q^m}^n $ with optimal rank-metric parameters when $ n \leq m $ \cite{gabidulin, roth} are also optimal for universal security. This approach was already proposed in \cite{on-metrics, error-control} for error correction, again not depending on the underlying network code. Later the authors in \cite{list-decodable-rank-metric} obtained the first list-decodable rank-metric codes whose list sizes are polynomial in the code length and which are able to list-decode universally on linearly coded networks roughly twice as many errors as optimal rank-metric codes \cite{gabidulin, roth} can correct. The rank metric was then generalized} in \cite{rgrw} {\color{black}to} relative generalized rank weights (RGRWs) and relative dimension/intersection profiles (RDIPs){\color{black}, which were proven in \cite{rgrw}} to measure {\color{black}exactly and} simultaneously the universal security performance and error-correction capability of pairs of linear codes, {\color{black}in the same way as relative generalized Hamming weights (RGHWs) and relative dimension/length profiles (RDLPs) \cite{luo, wei} do on wire-tap channels of type II.}

Unfortunately, the codes studied and proposed in {\color{black}\cite{rgrw, on-metrics, silva-universal, error-control} for universal security} are linear over the extension field $ \mathbb{F}_{q^m} $. This restricts the {\color{black}possible packet lengths of optimal universal secure codes}, requires performing computations over the larger field $ \mathbb{F}_{q^m} $ and leaves out important codes, such as the {\color{black}list-decodable rank-metric codes in \cite{list-decodable-rank-metric}, which are only linear over $ \mathbb{F}_q $.} 

In this work, {\color{black}we introduce new parameters, called relative generalized matrix weights (RGMWs) and relative dimension/rank support profiles (RDRPs), for codes and code pairs that are linear over the smaller field $ \mathbb{F}_q $, and prove that they measure their universal security performance in terms of the worst-case information leakage. As main applications, we obtain the first optimal universal secure linear codes on noiseless networks for all possible packet lengths, we obtain the first universal secure list-decodable rank-metric code pairs with polynomial-sized lists, and obtain new characterizations of security equivalences of linear codes.}

{\color{black}
\subsection{Notation}

Let $ q $ be a prime power and $ m $ and $ n $, two positive integers. We denote by $ \mathbb{F}_q $ the finite field with $ q $ elements, which we will consider to be the field used for the underlying linear network code (see \cite[Definition 1]{Koetter2003}).

Most of our technical results hold for an arbitrary field, which we denote by $ \mathbb{F} $ and which mathematically plays the role of $ \mathbb{F}_q $. $ \mathbb{F}^n $ denotes the vector space of row vectors of length $ n $ with components in $ \mathbb{F} $, and $ \mathbb{F}^{m \times n} $ denotes the vector space of $ m \times n $ matrices with components in $ \mathbb{F} $. Throughout the paper, a (block) code in $ \mathbb{F}^{m \times n} $ (respectively, in $ \mathbb{F}^n $) is a subset of $ \mathbb{F}^{m \times n} $ (respectively, of $ \mathbb{F}^n $), and it is called linear if it is a vector space over $ \mathbb{F} $. In all cases, dimensions of vector spaces over $ \mathbb{F} $ will be denoted by $ \dim $.

Finally, we recall that we may identify $ \mathbb{F}_{q^m}^n $ and $ \mathbb{F}_q^{m \times n} $ as vector spaces over $ \mathbb{F}_q $. Fix a basis $ \alpha_1, \alpha_2, \ldots, \alpha_m $ of $ \mathbb{F}_{q^m} $ as a vector space over $ \mathbb{F}_q $. We define the \textit{matrix representation map} $ M_{\boldsymbol\alpha} : \mathbb{F}_{q^m}^n \longrightarrow \mathbb{F}_q^{m \times n} $ associated to the previous basis by
\begin{equation}
M_{\boldsymbol\alpha} (\mathbf{c}) = (c_{i,j})_{1 \leq i \leq m, 1 \leq j \leq n},
\label{equation matrix representation}
\end{equation}
where $ \mathbf{c}_i = (c_{i,1}, c_{i,2}, \ldots, c_{i,n}) \in \mathbb{F}_q^n $, for $ i = 1,2, \ldots, m $, are the unique vectors in $ \mathbb{F}_q^n $ such that $ \mathbf{c} = \sum_{i=1}^m \alpha_i \mathbf{c}_i $. The map $ M_{\boldsymbol\alpha} : \mathbb{F}_{q^m}^n \longrightarrow \mathbb{F}_q^{m \times n} $ is an $ \mathbb{F}_q $-linear vector space isomorphism. 

The works \cite{rgrw, on-metrics, silva-universal, error-control} consider $ \mathbb{F}_{q^m} $-linear codes in $ \mathbb{F}_{q^m}^n $, which are a subfamily of $ \mathbb{F}_q $-linear codes in $ \mathbb{F}_q^{m \times n} $ through the map given in (\ref{equation matrix representation}). In this paper, we will consider arbitrary linear (meaning $ \mathbb{F} $-linear) codes in $ \mathbb{F}^{m \times n} $.

\begin{table*}[!t]
\caption{New and existing notions of generalized weights}
\label{table notions gen weights}
\centering
\begin{tabular}{|c|c|c|c|}
\hline
Work & Paremeters & Codes they are used on & Measured security \\
\hline\hline
Definitions \ref{def RGMW} \& \ref{def RDRP}, \& Theorem \ref{maximum secrecy} & RGMW \& RDRP & $ \mathcal{C}_2 \varsubsetneqq \mathcal{C}_1 \subseteq \mathbb{F}_q^{m \times n} $, $ \mathbb{F}_q $-linear & Universal security on networks \\
\hline
\cite{rgrw, oggier} & RGRW \& RDIP & $ \mathcal{C}_2 \varsubsetneqq \mathcal{C}_1 \subseteq \mathbb{F}_{q^m}^n $, $ \mathbb{F}_{q^m} $-linear & Universal security on networks \\
\hline
\cite{ravagnaniweights} & DGW & $ \mathcal{C}_2 = \{ 0 \} \varsubsetneqq \mathcal{C}_1 \subseteq \mathbb{F}_q^{m \times n} $, $ \mathbb{F}_q $-linear & Universal security on networks for $ \mathbb{F}_{q^m} $-linear \\
\hline
\cite{luo, wei} & RGHW \& RDLP & $ \mathcal{C}_2 \varsubsetneqq \mathcal{C}_1 \subseteq \mathbb{F}_q^n $, $ \mathbb{F}_q $-linear & Security on wire-tap channels II \\
\hline
\cite{nghw, application-nghw} & RNGHW & $ \mathcal{C}_2 \varsubsetneqq \mathcal{C}_1 \subseteq \mathbb{F}_q^n $, $ \mathbb{F}_q $-linear & Non-universal security on networks \\
\hline
\end{tabular}
\end{table*}

\begin{table*}[!t]
\caption{New and existing optimal secure codes for noiseless networks ($ N = \# $ links, $ \mu = \# $ observations, $ t = \# $ destinations)}
\label{table optimal on noiseless}
\centering
\begin{tabular}{|c|c|c|c|}
\hline
Work & Universality & Field size ($ q $) used over the network & Packet length ($ m $) \\
\hline\hline
Theorem \ref{th optimal for noiseless networks} & Yes & Any & Any \\
\hline
\cite{silva-universal} & Yes & Any & $ m \geq n $ or $ n = lm $ \\
\hline
\cite{secure-network} & No & $ q > \binom{N}{\mu} $ & -- \\
\hline
\cite{feldman} & No & $ q = \Theta(N^{\mu / 2}) $ & -- \\
\hline
\cite{wiretapnetworks} & No & $ q > \binom{N-1}{\mu-1} + t $ or $ q > \binom{2n^3t^2 -1}{\mu-1} + t $ & -- \\
\hline
\end{tabular}
\end{table*}

\begin{table*}[!t]
\caption{New and existing characterizations of linear isomorphisms between vector spaces of matrices preserving certain properties}
\label{table isomorphism charact}
\centering
\begin{tabular}{|c|c|c|c|}
\hline
Work & Domain \& codomain & Linearity & Properties preserved \\
\hline\hline
Theorem \ref{theorem characterization} & $ \phi : \mathcal{V} \longrightarrow \mathcal{W} $, $ \mathcal{V} \in {\rm RS}(\mathbb{F}^{m \times n}) $, $ \mathcal{W} \in {\rm RS}(\mathbb{F}^{m \times n^\prime}) $ & $ \mathbb{F} $-linear & Universal security on networks \\
\hline
\cite{similarities} & $ \phi : \mathcal{V} \longrightarrow \mathcal{W} $, $ \mathcal{V} \in {\rm RS}(\mathbb{F}_q^{m \times n}) $, $ \mathcal{W} \in {\rm RS}(\mathbb{F}_q^{m \times n^\prime}) $ & $ \mathbb{F}_{q^m} $-linear & Ranks \& universal security on networks \\
\hline
\cite{berger} & $ \phi : \mathbb{F}_{q^m}^n \longrightarrow \mathbb{F}_{q^m}^n $ & $ \mathbb{F}_{q^m} $-linear & Ranks \\
\hline
\cite{marcus, morrison} & $ \phi : \mathbb{F}^{m \times n} \longrightarrow \mathbb{F}^{m \times n} $ & $ \mathbb{F} $-linear & Ranks, determinants \& eigenvalues \\
\hline
\cite{dieudonne} & $ \phi : \mathbb{F}^{n \times n} \longrightarrow \mathbb{F}^{n \times n} $ & $ \mathbb{F} $-linear & Invertible matrices \\
\hline
\end{tabular}
\end{table*}

\subsection{Our motivations}

Our main motivation to study universal secure network coding is to avoid knowing and/or modifying the underlying linear network code, and in particular be able to apply our theory on random linearly coded networks \cite{random}, which achieve capacity in a decentralized manner and are robust to network changes. 

Our main motivation to study pairs of linear codes is to be able to protect messages simultaneously from errors, erasures and information leakage to a wire-tapper. See also Section \ref{sec preliminaries} and more concretely, Remark \ref{remark noiseless coset}.

Our main motivations to study codes which are linear over the base field $ \mathbb{F}_q $ instead of the extension field $ \mathbb{F}_{q^m} $ are the following:

1) $ \mathbb{F}_q $-linear codes with optimal rank-metric parameters \cite{delsartebilinear}, and thus with optimal universal security and error-correction capability, cannot be $ \mathbb{F}_{q^m} $-linear for most packet lengths $ m $ when $ m < n $. In many applications, packet lengths satisfying $ m < n $ are required (see the discussion in \cite[Subsection I-A]{rgrw}, for instance).

2) The only known list-decodable rank-metric codes \cite{list-decodable-rank-metric} with polynomial-sized lists are linear over $ \mathbb{F}_q $, but not over $ \mathbb{F}_{q^m} $. Hence the previous studies on universal security cannot be applied on these codes. In particular, no construction of universal secure list-decodable rank-metric coding schemes with polynomial-sized lists are known.

3) In previous works \cite{on-metrics, silva-universal, error-control}, the proposed codes are $ \mathbb{F}_{q^m} $-linear and $ m \geq n $. In many cases, this requires performing operations over a very large field, instead of the much smaller field $ \mathbb{F}_q $.

\subsection{Related works and considered open problems}

We consider the following four open problems in the literature, which correspond to the main four contributions listed in the following subsection:

1) Several parameters have been introduced to measure the security performance of linear codes and code pairs on different channels, in terms of the worst-case information leakage. The original RGHWs and RDLPs \cite{luo, wei} measure security performance over wire-tap channels of type II, and relative network generalized Hamming weigths (RNGHWs) \cite{nghw, application-nghw} measure security performance over networks depending on the underlying linear network code (non-universal security). Later, RGRWs and RDIPs were introduced in \cite{rgrw, oggier} to measure universal security performance of $ \mathbb{F}_{q^m} $-linear code pairs $ \mathcal{C}_2 \varsubsetneqq \mathcal{C}_1 \subseteq \mathbb{F}_{q^m}^n $. A notion of generalized weight for one $ \mathbb{F}_q $-linear code (that is, for an arbitrary $ \mathcal{C}_1 $ and for $ \mathcal{C}_2 = \{ 0 \} $) in $ \mathbb{F}_q^{m \times n} $, called Delsarte generalized weights (DGWs), was introduced in \cite{ravagnaniweights}, but its connection with universal security was only given for $ \mathbb{F}_{q^m} $-linear codes. Thus, no measure of universal security performance for all $ \mathbb{F}_q $-linear codes or code pairs is known. See also Table \ref{table notions gen weights}.

2) The first optimal universal secure linear codes for noiseless networks were obtained in \cite[Section V]{silva-universal}, whose information rate attain the information-theoretical limit given in \cite{secure-network}. However, these codes only exist when $ m \geq n $. The cartesian products in \cite[Subsection VII-C]{silva-universal} are also optimal among $ \mathbb{F}_q $-linear codes (see Remark \ref{remark other optimal on noiseless}), but only exist when $ m $ divides $ n $. No optimal universal secure $ \mathbb{F}_q $-linear codes for noiseless networks have been obtained for the rest of values of $ m $. See also Table \ref{table optimal on noiseless} for an overview of existing optimal constructions, including non-universal codes \cite{secure-network, wiretapnetworks, feldman}.

3) In \cite{list-decodable-rank-metric}, the authors introduce the first list-decodable rank-metric codes in $ \mathbb{F}_{q^m}^n $ able to list-decode close to the information-theoretical limit and roughly twice as many errors as optimal rank-metric codes \cite{gabidulin, roth} are able to correct, in polynomial time and with polynomial-sized lists (on the length $ n $). However, no universal secure coding schemes with such list-decoding capabilities are known. Observe that list-decoding rank errors implies list-decoding errors in linear network coding in a universal manner \cite{on-metrics}.

4) Several characterizations of maps between vector spaces of matrices preserving certain properties have been given in the literature \cite{berger, dieudonne, marcus, similarities, morrison}. The maps considered in \cite{berger} are linear over the extension field $ \mathbb{F}_{q^m} $ and preserve ranks, and the maps considered in \cite{dieudonne, marcus, morrison} are linear over the base field ($ \mathbb{F}_q $ or an arbitrary field) and preserve fundamental properties of matrices, such as ranks, determinants, eigenvalues or invertible matrices. Characterizations of maps preserving universal security performance were first given in \cite{similarities}, although the considered maps were only linear over $ \mathbb{F}_{q^m} $. No characterizations of general $ \mathbb{F}_q $-linear maps preserving universal security are known. See also Table \ref{table isomorphism charact}.

\subsection{Our contributions and main results}

In the following, we list our four main contributions together with our main result summarizing each of them. Each contribution tackles each open problem listed in the previous subsection, respectively.

1) We introduce new parameters, RGMWs and RDRPs, in Definitions \ref{def RGMW} and \ref{def RDRP}, respectively, which measure the universal security performance of $ \mathbb{F}_q $-linear code pairs $ \mathcal{C}_2 \varsubsetneqq \mathcal{C}_1 \subseteq \mathbb{F}_q^{m \times n} $, in terms of the worst-case information leakage. The main result is Theorem \ref{maximum secrecy} and states the following: The $ r $-th RGMW of the code pair is the minimum number of links that an adversary needs to wire-tap in order to obtain at least $ r $ bits of information (multiplied by $ \log_2(q) $) about the sent message. The $ \mu $-th RDRP of the code pair is the maximum number of bits of information (multiplied by $ \log_2(q) $) about the sent message that can be obtained by wire-tapping $ \mu $ links of the network. 

Since $ \mathbb{F}_{q^m} $-linear codes in $ \mathbb{F}_{q^m}^n $ are also $ \mathbb{F}_q $-linear codes in $ \mathbb{F}_q^{m \times n} $, RGMWs and RDRPs must coincide with RGRWs and RDIPs \cite{rgrw}, respectively, for $ \mathbb{F}_{q^m} $-linear codes in $ \mathbb{F}_{q^m}^n $, which we prove in Theorem \ref{th RGMW extend RGRW}.

When $ \mathcal{C}_2 = \{ 0 \} $ and $ m \neq n $, we will also show in Theorem \ref{th GMW extend DGW} that the RGMWs of the pair coincide with their DGWs, given in \cite{ravagnaniweights}, hence proving the connection between DGWs and universal security for general $ \mathbb{F}_q $-linear codes, which was left open.

2) We obtain optimal universal secure $ \mathbb{F}_q $-linear codes for noiseless networks for any value of $ m $ and $ n $, not only when $ m \geq n $ or $ m $ divides $ n $, as in previous works \cite{silva-universal}. The main result is Theorem \ref{th optimal for noiseless networks}, which states the following: Denote by $ \ell $ the number of packets in $ \mathbb{F}_q^m $ that the source can transmit and by $ t $ the number of links the adversary may wire-tap without obtaining any information about the sent packets. For any $ m $ and $ n $, and a fixed value of $ \ell $ (respectively $ t $), we obtain a coding scheme with optimal value of $ t $ (respectively $ \ell $).

3) We obtain the first universal secure list-decodable rank-metric code pairs with polynomial-sized lists. The main result is Theorem \ref{th secure list-decodable}, and states the following: Defining $ \ell $ and $ t $ as in the previous item, assuming that $ n $ divides $ m $, and fixing $ 1 \leq k_2 < k_2 \leq n $, $ \varepsilon > 0 $ and a positive integer $ s $ such that $ 4sn \leq \varepsilon m $ and $ m/n = \mathcal{O}(s/ \varepsilon) $, we obtain an $ \mathbb{F}_q $-linear code pair such that $ \ell \geq m(k_1 - k_2)(1 - 2\varepsilon) $, $ t \geq k_2 $ and which can list-decode $ \frac{s}{s+1}(n-k_1) $ rank errors in polynomial time, where the list size is $ q^{\mathcal{O}(s^2 / \varepsilon^2)} $.

4) We obtain characterizations of vector space isomorphisms between certain spaces of matrices over $ \mathbb{F}_q $ that preserve universal security performance over networks. The main result is Theorem \ref{theorem characterization}, which gives several characterizations of $ \mathbb{F}_q $-linear vector space isomorphisms $ \phi : \mathcal{V} \longrightarrow \mathcal{W} $, where $ \mathcal{V} $ and $ \mathcal{W} $ are rank support spaces in $ \mathbb{F}_q^{m \times n} $ and $ \mathbb{F}_q^{m \times n^\prime} $ (see Definition \ref{def rank support spaces}), respectively. 

As application, we obtain in Subsection \ref{subsec minimum parameters} ranges of possible parameters $ m $ and $ n $ that given linear codes and code pairs can be applied to without changing their universal security performance.

}

\subsection{Organization of the paper}

{\color{black}First, all of our main results are stated as \textit{Theorems}.} After some preliminaries in Section II, we introduce in Section III the new parameters of linear code pairs (RGMWs and RDRPs), give their connection {\color{black}with the rank metric}, and prove that they exactly measure the worst-case information leakage universally on networks {\color{black}(Theorem \ref{maximum secrecy})}. In Section IV, we give optimal universal secure linear codes for noiseless networks for all possible parameters {\color{black}(Theorem \ref{th optimal for noiseless networks})}. In Section V, we show how to add universal security to the list-decodable rank-metric codes in \cite{list-decodable-rank-metric} {\color{black}(Theorem \ref{th secure list-decodable})}. In Section VI, we define and give characterizations of security equivalences of linear codes {\color{black}(Theorem \ref{theorem characterization})}, and then obtain ranges of possible parameters of linear codes up to these equivalences. In Section VII, we give upper and lower Singleton-type bounds {\color{black}(Theorems \ref{th upper singleton} and \ref{lower singleton bound})} and study when they can be attained, when the dimensions are divisible by $ m $. Finally, in Section VIII, we prove that RGMWs extend RGRWs \cite{rgrw} and RGHWs \cite{luo, wei}, and we prove that RDRPs extend RDIPs \cite{rgrw} and RDLPs \cite{forney, luo} {\color{black}(Theorems \ref{th RGMW extend RGRW} and \ref{th RGMW extend RGHW}, respectively)}. We conclude the section by showing that GMWs coincide with DGWs \cite{ravagnaniweights} for non-square matrices, and are strictly larger otherwise {\color{black}(Theorem \ref{th GMW extend DGW})}. 

\section{Coset coding schemes for universal security in linear network coding} \label{sec preliminaries}

{\color{black}This section serves as a brief summary of the model of linear network coding that we consider (Subsection \ref{subsec linear network model}), the concept of universal security under this model (Subsection \ref{subsection secure communication}) and the main definitions concerning coset coding schemes used for this purpose (Subsection \ref{subsection coding schemes}). The section only contains definitions and facts known in the literature, which will be used throughout the paper.
}

\subsection{Linear network coding model} \label{subsec linear network model}

Consider a network with several sources and several sinks. A given source transmits a message $ \mathbf{x} \in \mathbb{F}_q^\ell $ through the network to multiple sinks. To that end, that source encodes the message as a collection of $ n $ packets of length $ m $, seen as a matrix $ C \in \mathbb{F}_q^{m \times n} $, where $ n $ is the number of outgoing links from this source. We consider linear network coding on the network, first considered in \cite{ahlswede, linearnetwork} and formally defined in \cite[Definition 1]{Koetter2003}, which allows us to reach higher throughput than just storing and forwarding on the network. This means that a given sink receives a matrix of the form
\begin{equation*}
Y = CA^T \in \mathbb{F}_q^{m \times N},
\end{equation*}
where $ A \in \mathbb{F}_q^{N \times n} $ is called the transfer matrix corresponding to the considered source and sink{\color{black}, and $ A^T $ denotes its transpose}. This matrix may be randomly chosen if random linear network coding is applied \cite{random}.

\subsection{Universal secure communication over networks} \label{subsection secure communication}

In secure {\color{black}and} reliable network coding, two of the main problems addressed in the literature are the following:
\begin{enumerate}
\item
Error and erasure correction \cite{cai-yeung, rgrw, on-metrics, silva-universal, error-control}: An adversary and/or a noisy channel may introduce errors on some links of the network and/or modify the transfer matrix. In this case, the sink receives the matrix 
$$ Y = CA^{\prime T} + E \in \mathbb{F}_q^{m \times N}, $$
where $ A^\prime \in \mathbb{F}_q^{N \times n} $ is the modified transfer matrix, and $ E \in \mathbb{F}_q^{m \times N} $ is the final error matrix. In this case, we say that $ t = {\rm Rk}(E) $ errors and $ \rho = n - {\rm Rk}(A^\prime) $ erasures occurred{\color{black}, where {\rm Rk} denotes the rank of a matrix}.
\item
Information leakage \cite{secure-network, wiretapnetworks, feldman, rgrw, silva-universal}: A wire-tapping adversary listens to $ \mu > 0 $ links of the network, obtaining a matrix of the form $ CB^T \in \mathbb{F}_q^{m \times \mu} $, for some matrix $ B \in \mathbb{F}_q^{\mu \times n} $. 
\end{enumerate}

Outer coding in the source node is usually applied to tackle the previous problems, and it is called \textit{universal secure} \cite{silva-universal} if it provides reliability and security as in the previous items for fixed numbers of wire-tapped links $ \mu $, errors $ t $ and erasures $ \rho $, independently of the transfer matrix $ A $ used. This implies that no previous knowledge or modification of the transfer matrix is required and random linear network coding \cite{random} may be applied.

\subsection{Coset coding schemes for outer codes} \label{subsection coding schemes}

Coding techniques for protecting messages simultaneously from errors and information leakage to a wire-tapping adversary were first studied by Wyner in \cite{wyner}. In \cite[p. 1374]{wyner}, the general concept of coset coding scheme, as we will next define, was first introduced for this purpose. We use the formal definition in \cite[Definition 7]{rgrw}:

\begin{definition}[\textbf{Coset coding schemes \cite{rgrw, wyner}}]
A coset coding scheme over the field $ \mathbb{F} $ with message set $ \mathcal{S} $ is a family of disjoint nonempty subsets of $ \mathbb{F}^{m \times n} $, $ \mathcal{P}_\mathcal{S} = \{ \mathcal{C}_\mathbf{x} \}_{\mathbf{x} \in \mathcal{S}} $.

{\color{black}If $ \mathbb{F} = \mathbb{F}_q $,} each $ \mathbf{x} \in \mathcal{S} $ is encoded by the source by choosing uniformly at random an element $ C \in \mathcal{C}_\mathbf{x} $. 
\end{definition}

\begin{definition}[\textbf{Linear coset coding schemes \cite[Definition 2]{similarities}}]
A coset coding scheme as in the previous definition is said to be linear if $ \mathcal{S} = \mathbb{F}^\ell $, for some $ 0 < \ell \leq mn $, and
$$ a \mathcal{C}_\mathbf{x} + b \mathcal{C}_\mathbf{y} \subseteq \mathcal{C}_{a \mathbf{x} + b \mathbf{y}}, $$
for all $ a, b \in \mathbb{F} $ and all $ \mathbf{x}, \mathbf{y} \in \mathbb{F}^\ell $. 
\end{definition}

With these definitions, the concept of coset coding scheme generalizes the concept of (block) code, since a code is a coset coding scheme where $ | \mathcal{C}_\mathbf{x} | = 1 $, for each $ \mathbf{x} \in \mathcal{S} $. In the same way, linear coset coding schemes generalize linear (block) codes.

An equivalent way to describe linear coset coding schemes is by nested linear code pairs, introduced in \cite[Section III.A]{zamir}. We use the description in \cite[Subsection 4.2]{secure-computation}.

\begin{definition}[\textbf{Nested linear code pairs \cite{secure-computation, zamir}}] \label{definition NLCP}
A nested linear code pair is a pair of linear codes $ \mathcal{C}_2 \subsetneqq \mathcal{C}_1 \subseteq \mathbb{F}^{m \times n} $. Choose a vector space $ \mathcal{W} $ such that $ \mathcal{C}_1 = \mathcal{C}_2 \oplus \mathcal{W} ${\color{black}, where $ \oplus $ denotes the direct sum of vector spaces,} and a vector space isomorphism $ \psi : \mathbb{F}^\ell \longrightarrow \mathcal{W} $, where $ \ell = \dim(\mathcal{C}_1/\mathcal{C}_2) $. Then we define $ \mathcal{C}_\mathbf{x} = \psi(\mathbf{x}) + \mathcal{C}_2 $, for $ \mathbf{x} \in \mathbb{F}^\ell $. They form a linear coset coding scheme called nested coset coding scheme \cite{rgrw}.
\end{definition}
 
\begin{remark} \label{remark noiseless coset}
As observed in \cite{ozarow} for the wire-tap channel of type II, linear code pairs {\color{black}where $ \mathcal{C}_1 = \mathbb{F}^{m \times n} $} are suitable for protecting information from leakage on noiseless channels. {\color{black}Analogously, linear code pairs where $ \mathcal{C}_2 = \{ 0 \} $ are suitable for error correction without the presence of eavesdroppers. Observe that these two types of linear code pairs are \textit{dual} to each other (see Definition \ref{def trace inner product} and Appendix \ref{app duality theory}): If $ \mathcal{C}_1^\prime = \mathcal{C}_2^\perp $ and $ \mathcal{C}_2^\prime = \mathcal{C}_1^\perp $, then $ \mathcal{C}_1 = \mathbb{F}^{m \times n} $ if, and only if, $ \mathcal{C}_2^\prime = \{ 0 \} $. To treat both error correction and information leakage, we need general linear coset coding schemes.}
\end{remark}

We recall here that the concept of linear coset coding schemes and nested coset coding schemes are exactly the same. An object in the first family uniquely defines an object in the second family and vice-versa. This is formally proven in \cite[Proposition 1]{similarities}.

Finally, we recall that the exact universal error and erasure correction capability of a nested coset coding scheme was found, in terms of the rank metric, first in \cite[Section IV.C]{on-metrics} for the case of one code {\color{black}($ \mathcal{C}_2 = \{ 0 \} $)} that is maximum rank distance, then in \cite[Theorem 2]{silva-universal} for the general case of one linear code {\color{black}(again $ \mathcal{C}_2 = \{ 0 \} $)}, then in \cite[Theorem 4]{rgrw} for the case where both codes are linear over an extension field $ \mathbb{F}_{q^m} $, and finally in \cite[Theorem 9]{similarities} for arbitrary coset coding schemes (linear over $ \mathbb{F}_q $ and non-linear).

\section{New parameters of linear coset coding schemes for universal security on networks} \label{sec introduction of RGMWs}

{\color{black}This is the main section of the paper, which serves as a basis for the rest of sections. The next sections can be read independently of each other, but all of them build on the results in this section. Here we introduce rank support spaces (Subsection \ref{subsec rank supports}), which are the main technical building blocks of our theory, then we define of our main parameters and connect them with the rank metric (Subsection \ref{subsec def of RGMWs}), and we conclude by showing (Theorem \ref{maximum secrecy}) that these parameters measure the worst-case information leakage universally on linearly coded networks (Subsection \ref{subsec measuring info leakage}). }

\subsection{Rank supports and rank support spaces} \label{subsec rank supports}

{\color{black}In this subsection, we introduce rank support spaces, which are the mathematical building blocks of our theory. The idea is to attach to each linear code its rank support, given in \cite[Definition 1]{slides}, and based on this rank support, define a vector space of matrices containing the original code that can be seen as its ambient space with respect to the rank metric. 

We remark here that the family of rank support spaces can be seen as the family of vector spaces in \cite[Notation 25]{ravagnani} after transposition of matrices, or the family of vector spaces in \cite[Definition 6]{slides} taking $ \mathcal{C} = \mathbb{F}_q^{m \times n} $. We start with the definitions: }
 
\begin{definition} [\textbf{Row space and rank}]
For a matrix $ C \in \mathbb{F}^{m \times n} $, we define its row space $ {\rm Row}(C) $ as the vector space in $ \mathbb{F}^n $ generated by its rows. As usual, we define its rank as $ {\rm Rk}(C) = \dim({\rm Row}(C)) $.
\end{definition}

\begin{definition} [\textbf{Rank support and rank weight \cite[Definition 1]{slides}}] 
Given a vector space $ \mathcal{C} \subseteq \mathbb{F}^{m \times n} $, we define its rank support as
\begin{equation*}
{\rm RSupp}(\mathcal{C}) = \sum_{C \in \mathcal{C}} {\rm Row}(C) \subseteq \mathbb{F}^n.
\end{equation*}
We also define the rank weight of the space $ \mathcal{C} $ as 
\begin{equation*}
{\rm wt_R}(\mathcal{C}) = \dim({\rm RSupp}(\mathcal{C})).
\end{equation*}
\end{definition}

{\color{black}Observe that} $ {\rm RSupp}(\langle \{ C \} \rangle) = {\rm Row}(C) $ and $ {\rm wt_R}(\langle \{ C \} \rangle) $ $ = {\rm Rk}(C) $, for every matrix $ C \in \mathbb{F}^{m \times n} ${\color{black}, where $ \langle \mathcal{A} \rangle $ denotes the vector space generated by a set $ \mathcal{A} $ over $ \mathbb{F} $.} 

\begin{definition} [\textbf{Rank support spaces}] \label{def rank support spaces}
Given a vector space $ \mathcal{L} \subseteq \mathbb{F}^n $, we define its rank support space $ \mathcal{V}_\mathcal{L} \subseteq \mathbb{F}^{m \times n} $ as
\begin{equation*}
\mathcal{V}_\mathcal{L} = \{ V \in \mathbb{F}^{m \times n} \mid {\rm Row}(V) \subseteq \mathcal{L} \}.
\end{equation*}
We denote by $ RS(\mathbb{F}^{m \times n}) $ the family of rank support spaces in $ \mathbb{F}^{m \times n} $.
\end{definition}

{\color{black}The following lemma shows that rank support spaces behave as a sort of ambient spaces for linear codes and can be attached bijectively to vector spaces in $ \mathbb{F}^n $, which correspond to the rank supports of the original linear codes. }

\begin{lemma} \label{basic lemma matrix modules}
Let $ \mathcal{L} \subseteq \mathbb{F}^n $ be a vector space. The following hold:
\begin{enumerate}
\item
$ \mathcal{V}_\mathcal{L} $ is a vector space and the correspondence $ \mathcal{L} \mapsto \mathcal{V}_\mathcal{L} $ between subspaces of $ \mathbb{F}^n $ and rank support spaces is a bijection with inverse $ \mathcal{V}_\mathcal{L} \mapsto {\rm RSupp}(\mathcal{V}_\mathcal{L}) = \mathcal{L} $.
\item
If $ \mathcal{C} \subseteq \mathbb{F}^{m \times n} $ is a vector space and $ \mathcal{L} = {\rm RSupp}(\mathcal{C}) $, then $ \mathcal{V}_\mathcal{L} $ is the smallest rank support space containing $ \mathcal{C} $.
\end{enumerate}
\end{lemma}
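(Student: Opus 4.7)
The claim is essentially a Galois-type correspondence between subspaces of $\mathbb{F}^n$ and rank support spaces in $\mathbb{F}^{m \times n}$. My plan is to establish part (1) by checking the three ingredients separately (vector-space structure, the identity $\mathrm{RSupp}(\mathcal{V}_\mathcal{L}) = \mathcal{L}$, and bijectivity), and then to derive part (2) from part (1) together with the monotonicity of the maps involved.

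\emph{Part (1).} First I would show $\mathcal{V}_\mathcal{L}$ is a vector space over $\mathbb{F}$: for $V, W \in \mathcal{V}_\mathcal{L}$ and $a, b \in \mathbb{F}$, the row space of $aV + bW$ is contained in $\mathrm{Row}(V) + \mathrm{Row}(W) \subseteq \mathcal{L}$, so $aV + bW \in \mathcal{V}_\mathcal{L}$. Next I would verify the identity $\mathrm{RSupp}(\mathcal{V}_\mathcal{L}) = \mathcal{L}$. The inclusion ``$\subseteq$'' is immediate from the definition, since each $V \in \mathcal{V}_\mathcal{L}$ has $\mathrm{Row}(V) \subseteq \mathcal{L}$, whence the sum of these row spaces still lies in $\mathcal{L}$. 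For ``$\supseteq$'', given $v \in \mathcal{L}$, the matrix $V \in \mathbb{F}^{m \times n}$ having $v$ as its first row and zeros elsewhere satisfies $\mathrm{Row}(V) = \langle v \rangle \subseteq \mathcal{L}$, so $v \in \mathrm{Row}(V) \subseteq \mathrm{RSupp}(\mathcal{V}_\mathcal{L})$. With this identity in hand, the map $\mathcal{L} \mapsto \mathcal{V}_\mathcal{L}$ has $\mathrm{RSupp}(\cdot)$ as a left inverse, so it is injective; it is surjective onto $RS(\mathbb{F}^{m \times n})$ by the very definition of rank support spaces, hence bijective, and the inverse is necessarily $\mathcal{V} \mapsto \mathrm{RSupp}(\mathcal{V})$.

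\emph{Part (2).} Let $\mathcal{C} \subseteq \mathbb{F}^{m \times n}$ be a vector space and set $\mathcal{L} = \mathrm{RSupp}(\mathcal{C})$. For every $C \in \mathcal{C}$ one has $\mathrm{Row}(C) \subseteq \mathrm{RSupp}(\mathcal{C}) = \mathcal{L}$, so $C \in \mathcal{V}_\mathcal{L}$, showing $\mathcal{C} \subseteq \mathcal{V}_\mathcal{L}$. To see this is the smallest rank support space with this property, suppose $\mathcal{V}_{\mathcal{L}'}$ is any rank support space containing $\mathcal{C}$. Applying $\mathrm{RSupp}$, which is clearly monotone under inclusion, I obtain $\mathcal{L} = \mathrm{RSupp}(\mathcal{C}) \subseteq \mathrm{RSupp}(\mathcal{V}_{\mathcal{L}'}) = \mathcal{L}'$ by part (1). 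Since the map $\mathcal{L} \mapsto \mathcal{V}_\mathcal{L}$ is also monotone (a matrix whose rows lie in $\mathcal{L}$ has its rows in any larger $\mathcal{L}'$), this yields $\mathcal{V}_\mathcal{L} \subseteq \mathcal{V}_{\mathcal{L}'}$, completing the proof.

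\emph{Main obstacle.} There is no substantial obstacle here; the lemma is a foundational bookkeeping statement. The only subtlety is conceptual rather than technical: one must be careful to use the definitions in the right order, namely that a rank support space is \emph{defined} as an object of the form $\mathcal{V}_\mathcal{L}$, so surjectivity of the correspondence is tautological and the real content is injectivity, which reduces to the identity $\mathrm{RSupp}(\mathcal{V}_\mathcal{L}) = \mathcal{L}$ produced by the ``standard basis'' matrices constructed above.
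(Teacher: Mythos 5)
Your proof is correct and complete; the paper actually states this lemma without proof, treating it as immediate from the definitions, and your argument (closure of $\mathcal{V}_\mathcal{L}$ under linear combinations via $\mathrm{Row}(aV+bW)\subseteq \mathrm{Row}(V)+\mathrm{Row}(W)$, the single-nonzero-row matrices to get $\mathrm{RSupp}(\mathcal{V}_\mathcal{L})\supseteq\mathcal{L}$, and monotonicity of both maps for part (2)) is exactly the natural one the authors intend. No gaps.
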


{\color{black}We conclude the subsection with the following characterizations of rank support spaces, which we will use throughout the paper. In particular, item 2 will be useful to prove Theorem \ref{theorem characterization}, and item 3 will be useful to prove Theorem \ref{maximum secrecy}. }

\begin{proposition} \label{prop characterization}
Fix a set $ \mathcal{V} \subseteq \mathbb{F}^{m \times n} $. The following are equivalent:
\begin{enumerate}
\item
$ \mathcal{V} $ is a rank support space. That is, there exists a subspace $ \mathcal{L} \subseteq \mathbb{F}^n $ such that $ \mathcal{V} = \mathcal{V}_\mathcal{L} $.
\item
$ \mathcal{V} $ is linear and has a basis of the form $ B_{i,j} $, for $ i = 1,2, \ldots, m $ and $ j = 1,2, \ldots, k $, where there are vectors $ \mathbf{b}_1, \mathbf{b}_2, \ldots, \mathbf{b}_k \in \mathbb{F}^n $ such that $ B_{i,j} $ has the vector $ \mathbf{b}_j $ in the $ i $-th row and the rest of its rows are zero vectors.
\item
There exists a matrix $ B \in \mathbb{F}^{\mu \times n} $, for some positive integer $ \mu $, such that
$$ \mathcal{V} = \{ V \in \mathbb{F}^{m \times n} \mid VB^T = 0 \}. $$
\end{enumerate}
In addition, the relation between items 1, 2 and 3 is that $ \mathbf{b}_1, \mathbf{b}_2, \ldots, \mathbf{b}_k $ are a basis of $ \mathcal{L} $, $ B $ is a (possibly not full-rank) parity check matrix of $ \mathcal{L} $ and $ \dim(\mathcal{L}) = n - {\rm Rk}(B) $. 

In particular, it holds that
\begin{equation}
\dim(\mathcal{V}_\mathcal{L}) = m \dim(\mathcal{L}).
\label{dimension matrix modules}
\end{equation}
\end{proposition}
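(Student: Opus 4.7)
The plan is to establish the cycle $1 \Rightarrow 2 \Rightarrow 3 \Rightarrow 1$ and then read off the dimension formula from the basis exhibited in item 2. All three conditions express the same geometric fact, namely that a rank support space is the set of matrices whose rows are constrained to lie in a fixed subspace of $\mathbb{F}^n$, so the proof amounts to translating this constraint between three languages: a subspace $\mathcal{L}$, an explicit row-by-row basis, and a system of linear equations given by a parity check matrix.

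For $1 \Rightarrow 2$, I would start with $\mathcal{V} = \mathcal{V}_\mathcal{L}$ and pick any basis $\mathbf{b}_1, \ldots, \mathbf{b}_k$ of $\mathcal{L}$. For each pair $(i,j)$ with $1 \leq i \leq m$ and $1 \leq j \leq k$, define $B_{i,j}$ as the matrix with $\mathbf{b}_j$ in row $i$ and zeros elsewhere; clearly $B_{i,j} \in \mathcal{V}_\mathcal{L}$ since its row space is contained in $\mathcal{L}$. To show these matrices span $\mathcal{V}_\mathcal{L}$, take any $V \in \mathcal{V}_\mathcal{L}$ and write its $i$-th row as $\sum_{j=1}^k a_{i,j} \mathbf{b}_j$, which is possible because each row lies in $\mathcal{L}$; then $V = \sum_{i,j} a_{i,j} B_{i,j}$. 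Linear independence is immediate: a vanishing linear combination forces $\sum_j a_{i,j} \mathbf{b}_j = 0$ in each row $i$, and linear independence of the $\mathbf{b}_j$ gives $a_{i,j} = 0$. For $2 \Rightarrow 1$, simply set $\mathcal{L} = \langle \mathbf{b}_1, \ldots, \mathbf{b}_k \rangle$ and check both inclusions $\mathcal{V} \subseteq \mathcal{V}_\mathcal{L}$ (every basis element has its single nonzero row in $\mathcal{L}$) and $\mathcal{V}_\mathcal{L} \subseteq \mathcal{V}$ (decompose row-by-row as above).

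For $1 \Leftrightarrow 3$, the key observation is that if $B \in \mathbb{F}^{\mu \times n}$ is any matrix whose row space equals $\mathcal{L}^\perp$ (equivalently, a possibly non-full-rank parity check matrix for $\mathcal{L}$), then for any $V \in \mathbb{F}^{m \times n}$ the product $VB^T$ vanishes if and only if each row of $V$ is orthogonal to every row of $B$, which by the parity check characterization of $\mathcal{L}$ is equivalent to every row of $V$ lying in $\mathcal{L}$, i.e.\ $V \in \mathcal{V}_\mathcal{L}$. This gives $\mathcal{V}_\mathcal{L} = \{V : VB^T = 0\}$, proving $1 \Rightarrow 3$. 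Conversely, given $\mathcal{V} = \{V : VB^T = 0\}$ for some $B$, one sets $\mathcal{L} = \{\mathbf{x} \in \mathbb{F}^n : \mathbf{x} B^T = 0\}$ (the kernel of $B$ acting on row vectors, of dimension $n - {\rm Rk}(B)$) and verifies $\mathcal{V} = \mathcal{V}_\mathcal{L}$ row-by-row.

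Finally, the dimension formula (\ref{dimension matrix modules}) is a direct consequence of the basis constructed in $1 \Rightarrow 2$: the $mk$ matrices $B_{i,j}$ form a basis of $\mathcal{V}_\mathcal{L}$, so $\dim(\mathcal{V}_\mathcal{L}) = mk = m \dim(\mathcal{L})$. I do not expect any genuine obstacle here; the statement is a structural clarification rather than a deep theorem, and the only care needed is in the bookkeeping between the three descriptions and in allowing $B$ in item 3 to be non-full-rank, which is why the relation $\dim(\mathcal{L}) = n - {\rm Rk}(B)$ is stated in terms of the rank of $B$ rather than its number of rows $\mu$.
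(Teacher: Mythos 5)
Your proposal is correct and follows essentially the same route as the paper: both equivalences are pivoted through item 1, with $1\Leftrightarrow 2$ established via the row-by-row basis $B_{i,j}$ built from a basis of $\mathcal{L}$, $1\Leftrightarrow 3$ via the parity-check characterization of $\mathcal{L}$, and the dimension formula read off from the $mk$ basis elements. The only difference is that you spell out the spanning and linear-independence verifications that the paper leaves implicit.
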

\begin{proof}
We prove the following implications:
\begin{itemize}
\item
$ 1 \Longleftrightarrow 2 $: Assume item 1, let $ \mathbf{b}_1, \mathbf{b}_2, \ldots, \mathbf{b}_k $ be a basis of $ \mathcal{L} $, and let $ B_{i,j} $ be as in item 2. {\color{black}Then we} see that $ \mathcal{V} = \langle \{ B_{i,j} \mid 1 \leq i \leq m, 1 \leq j \leq k \} \rangle $. The reversed implication follows in the same way by defining $ \mathcal{L} = \langle \mathbf{b}_1, \mathbf{b}_2, \ldots, \mathbf{b}_k \rangle \subseteq \mathbb{F}^n $.
\item
$ 1 \Longleftrightarrow 3 $: Assume item 1 and let $ B \in \mathbb{F}^{\mu \times n} $ be a parity check matrix of $ \mathcal{L} $. That is, a generator matrix of the dual $ \mathcal{L}^\perp \subseteq \mathbb{F}^n $. Then it holds by definition that $ V \in \mathbb{F}^{m \times n} $ has all its rows in $ \mathcal{L} $ if, and only if, $ VB^T = 0 $. Conversely, assuming item 3 and defining $ \mathcal{L} $ as the code with parity check matrix $ B $, we see that $ \mathcal{V} = \mathcal{V}_\mathcal{L} $ by the same argument. Hence the result follows.
\end{itemize}
\end{proof}

{\color{black}
\subsection{Definition and basic properties of the new parameters} \label{subsec def of RGMWs}
}

\begin{definition} [\textbf{Relative Generalized Matrix Weight}] \label{def RGMW}
Given nested linear codes $ \mathcal{C}_2 \subsetneqq \mathcal{C}_1 \subseteq \mathbb{F}^{m \times n} $, and $ 1 \leq r \leq \ell = \dim(\mathcal{C}_1 / \mathcal{C}_2) $, we define their $ r $-th relative generalized matrix weight (RGMW) as
\begin{equation*}
\begin{split}
d_{M,r}(\mathcal{C}_1, \mathcal{C}_2) = \min \{ & \dim(\mathcal{L}) \mid \mathcal{L} \subseteq \mathbb{F}^n, \\
 & \dim(\mathcal{C}_1 \cap \mathcal{V}_\mathcal{L}) - \dim(\mathcal{C}_2 \cap \mathcal{V}_\mathcal{L}) \geq r \}.
\end{split}
\end{equation*}
For a linear code $ \mathcal{C} \subseteq \mathbb{F}^{m \times n} $, and $ 1 \leq r \leq \dim(\mathcal{C}) $, we define its $ r $-th generalized matrix weight (GMW) as
\begin{equation}
d_{M,r}(\mathcal{C}) = d_{M,r}(\mathcal{C},\{ 0 \}).
\label{GMW}
\end{equation}
\end{definition}

Observe that it holds that
\begin{equation}
d_{M,r}(\mathcal{C}_1, \mathcal{C}_2) \geq d_{M,r}(\mathcal{C}_1),
\label{RGMW higher than GMW}
\end{equation}
for all nested linear codes $ \mathcal{C}_2 \subsetneqq \mathcal{C}_1 \subseteq \mathbb{F}^{m \times n} $, and all $ 1 \leq r \leq \ell = \dim(\mathcal{C}_1 / \mathcal{C}_2) $.

\begin{definition} [\textbf{Relative Dimension/Rank support Profile}] \label{def RDRP}
Given nested linear codes $ \mathcal{C}_2 \subsetneqq \mathcal{C}_1 \subseteq \mathbb{F}^{m \times n} $, and $ 0 \leq \mu \leq n $, we define their $ \mu $-th relative dimension/rank support profile (RDRP) as
\begin{equation*}
\begin{split}
K_{M,\mu}(\mathcal{C}_1, \mathcal{C}_2) = \max \{ & \dim(\mathcal{C}_1 \cap \mathcal{V}_\mathcal{L}) - \dim(\mathcal{C}_2 \cap \mathcal{V}_\mathcal{L}) \mid \\
 & \mathcal{L} \subseteq \mathbb{F}^n, \dim(\mathcal{L}) \leq \mu \}.
\end{split}
\end{equation*}
\end{definition}

{\color{black}Now, if $ \mathcal{U} \subseteq \mathcal{V} \subseteq \mathbb{F}^{m \times n} $ are vector spaces, the natural linear map $ \mathcal{C}_1 \cap \mathcal{U} / \mathcal{C}_2 \cap \mathcal{U} \longrightarrow \mathcal{C}_1 \cap \mathcal{V} / \mathcal{C}_2 \cap \mathcal{V} $ is one to one. Therefore,} since we are taking maximums, it holds that
\begin{equation*}
\begin{split}
K_{M,\mu}(\mathcal{C}_1, \mathcal{C}_2) = \max \{ & \dim(\mathcal{C}_1 \cap \mathcal{V}_\mathcal{L}) - \dim(\mathcal{C}_2 \cap \mathcal{V}_\mathcal{L}) \mid \\
 & \mathcal{L} \subseteq \mathbb{F}^n, \dim(\mathcal{L}) = \mu \}.
\end{split}
\end{equation*}

We remark here that some existing notions of relative generalized weights from the literature are particular cases of RGMWs. The corresponding connections are given in Section \ref{sec relation with others}. In particular, GMWs of one linear code coincide with DGWs (introduced in \cite{ravagnaniweights}) for non-square matrices.

We next obtain the following characterization of RGMWs that gives an analogous description to the original definition of GHWs by Wei \cite{wei}:

\begin{proposition} \label{proposition as minimum rank weights}
Given nested linear codes $ \mathcal{C}_2 \subsetneqq \mathcal{C}_1 \subseteq \mathbb{F}^{m \times n} $, and an integer $ 1 \leq r \leq \dim(\mathcal{C}_1 / \mathcal{C}_2) $, it holds that
\begin{equation*}
\begin{split}
d_{M,r}(\mathcal{C}_1, \mathcal{C}_2) = \min \{ & {\rm wt_R}(\mathcal{D}) \mid \mathcal{D} \subseteq \mathcal{C}_1, \mathcal{D} \cap \mathcal{C}_2 = \{ 0 \}, \\
 & \dim(\mathcal{D}) = r \}.
\end{split}
\end{equation*}
\end{proposition}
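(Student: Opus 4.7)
The plan is to prove the equality by showing the two inequalities separately, using Lemma \ref{basic lemma matrix modules} (in particular item 2) to relate rank supports to rank support spaces.

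For the inequality $d_{M,r}(\mathcal{C}_1, \mathcal{C}_2) \leq \min\{\mathrm{wt_R}(\mathcal{D}) \mid \ldots\}$: I would take any subspace $\mathcal{D} \subseteq \mathcal{C}_1$ with $\dim(\mathcal{D}) = r$ and $\mathcal{D} \cap \mathcal{C}_2 = \{0\}$, and set $\mathcal{L} = \mathrm{RSupp}(\mathcal{D}) \subseteq \mathbb{F}^n$, so that $\dim(\mathcal{L}) = \mathrm{wt_R}(\mathcal{D})$. By Lemma \ref{basic lemma matrix modules}, $\mathcal{V}_\mathcal{L}$ is the smallest rank support space containing $\mathcal{D}$, so $\mathcal{D} \subseteq \mathcal{C}_1 \cap \mathcal{V}_\mathcal{L}$. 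Combined with $\mathcal{D} \cap (\mathcal{C}_2 \cap \mathcal{V}_\mathcal{L}) \subseteq \mathcal{D} \cap \mathcal{C}_2 = \{0\}$, the natural quotient map $\mathcal{D} \hookrightarrow (\mathcal{C}_1 \cap \mathcal{V}_\mathcal{L})/(\mathcal{C}_2 \cap \mathcal{V}_\mathcal{L})$ is injective, so $\dim(\mathcal{C}_1 \cap \mathcal{V}_\mathcal{L}) - \dim(\mathcal{C}_2 \cap \mathcal{V}_\mathcal{L}) \geq r$. Hence $\mathcal{L}$ is admissible in Definition \ref{def RGMW}, and $d_{M,r}(\mathcal{C}_1, \mathcal{C}_2) \leq \dim(\mathcal{L}) = \mathrm{wt_R}(\mathcal{D})$.

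For the reverse inequality: I would take a subspace $\mathcal{L} \subseteq \mathbb{F}^n$ attaining the minimum in Definition \ref{def RGMW}, so that $\dim(\mathcal{C}_1 \cap \mathcal{V}_\mathcal{L}) - \dim(\mathcal{C}_2 \cap \mathcal{V}_\mathcal{L}) \geq r$ and $\dim(\mathcal{L}) = d_{M,r}(\mathcal{C}_1, \mathcal{C}_2)$. Since $\mathcal{C}_2 \cap \mathcal{V}_\mathcal{L} \subseteq \mathcal{C}_1 \cap \mathcal{V}_\mathcal{L}$, I can choose a complement $\mathcal{D}$ of $\mathcal{C}_2 \cap \mathcal{V}_\mathcal{L}$ inside $\mathcal{C}_1 \cap \mathcal{V}_\mathcal{L}$ (or any $r$-dimensional subspace of such a complement), so that $\dim(\mathcal{D}) = r$ and $\mathcal{D} \cap (\mathcal{C}_2 \cap \mathcal{V}_\mathcal{L}) = \{0\}$. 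Since $\mathcal{D} \subseteq \mathcal{V}_\mathcal{L}$, the last identity upgrades to $\mathcal{D} \cap \mathcal{C}_2 = \{0\}$. Finally, $\mathcal{D} \subseteq \mathcal{V}_\mathcal{L}$ means $\mathrm{RSupp}(\mathcal{D}) \subseteq \mathcal{L}$, so $\mathrm{wt_R}(\mathcal{D}) \leq \dim(\mathcal{L}) = d_{M,r}(\mathcal{C}_1, \mathcal{C}_2)$, completing the argument.

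There is no real obstacle here: the proof is a direct translation between rank supports and rank support spaces via Lemma \ref{basic lemma matrix modules}. The only subtle point to state carefully is that $\mathcal{D} \cap \mathcal{C}_2 = \{0\}$ is equivalent to $\mathcal{D} \cap \mathcal{C}_2 \cap \mathcal{V}_\mathcal{L} = \{0\}$ whenever $\mathcal{D} \subseteq \mathcal{V}_\mathcal{L}$, which is what makes the reverse direction work.
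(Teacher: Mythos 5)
Your proof is correct and follows essentially the same route as the paper's: for one inequality take $\mathcal{L} = \mathrm{RSupp}(\mathcal{D})$ and use the injection $\mathcal{D} \hookrightarrow (\mathcal{C}_1 \cap \mathcal{V}_\mathcal{L})/(\mathcal{C}_2 \cap \mathcal{V}_\mathcal{L})$, and for the other pick an $r$-dimensional $\mathcal{D} \subseteq \mathcal{C}_1 \cap \mathcal{V}_\mathcal{L}$ avoiding $\mathcal{C}_2$ and note $\mathrm{RSupp}(\mathcal{D}) \subseteq \mathcal{L}$. You merely make explicit the existence of $\mathcal{D}$ in the second step, which the paper leaves as a one-line assertion.
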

\begin{proof}
Denote by $ d_r $ the number on the left-hand side and by $ d^\prime_r $ the number on the right-hand side. We prove both inequalities:

$ d_r \leq d^\prime_r $: Take a vector space $ \mathcal{D} \subseteq \mathcal{C}_1 $ such that $ \mathcal{D} \cap \mathcal{C}_2 = \{ 0 \} $, $ \dim(\mathcal{D}) = r $ and $ {\rm wt_R}(\mathcal{D}) = d^\prime_r $. Define $ \mathcal{L} = {\rm RSupp}(\mathcal{D}) $.

Since $ \mathcal{D} \subseteq \mathcal{V}_\mathcal{L} $, we have that $ \dim((\mathcal{C}_1 \cap \mathcal{V}_\mathcal{L})/(\mathcal{C}_2 \cap \mathcal{V}_\mathcal{L})) \geq \dim((\mathcal{C}_1 \cap \mathcal{D})/(\mathcal{C}_2 \cap \mathcal{D})) = \dim(\mathcal{D}) = r $. Hence
$$ d_r \leq \dim(\mathcal{L}) = {\rm wt_R}(\mathcal{D}) = d^\prime_r. $$

$ d_r \geq d^\prime_r $: Take a vector space $ \mathcal{L} \subseteq \mathbb{F}^n $ such that $ \dim((\mathcal{C}_1 \cap \mathcal{V}_\mathcal{L})/(\mathcal{C}_2 \cap \mathcal{V}_\mathcal{L})) \geq r $ and $ \dim(\mathcal{L}) = d_r $.

There exists a vector space $ \mathcal{D} \subseteq \mathcal{C}_1 \cap \mathcal{V}_\mathcal{L} $ with $ \mathcal{D} \cap \mathcal{C}_2 = \{ 0 \} $ and $ \dim(\mathcal{D}) = r $. We have that $ {\rm RSupp}(\mathcal{D}) \subseteq \mathcal{L} $, since $ \mathcal{D} \subseteq \mathcal{V}_\mathcal{L} $, and hence
$$ d_r = \dim(\mathcal{L}) \geq {\rm wt_R}(\mathcal{D}) \geq d^\prime_r. $$
\end{proof}

Thanks to this characterization, we may connect RGMWs with the rank distance \cite{delsartebilinear}. {\color{black}This will be crucial in the next section, where we will use maximum rank distance codes from \cite{delsartebilinear} to obtain optimal universal secure linear codes for noiseless networks.} Recall the definition of minimum rank distance of a linear coset coding scheme, which is a particular case of \cite[Equation (1)]{similarities}, and which is based on the analogous concept for the Hamming metric given in \cite{duursma}:
\begin{equation}
d_R(\mathcal{C}_1,\mathcal{C}_2) = \min \{ {\rm Rk}(C) \mid C \in \mathcal{C}_1, C \notin \mathcal{C}_2 \}.
\label{rank coset distance}
\end{equation}

The following result follows from the previous theorem and the definitions:

\begin{corollary}[\textbf{Minimum rank distance of linear coset coding schemes}] \label{minimum rank distance}
Given nested linear codes $ \mathcal{C}_2 \subsetneqq \mathcal{C}_1 \subseteq \mathbb{F}^{m \times n} $, it holds that
$$ d_R(\mathcal{C}_1,\mathcal{C}_2) = d_{M,1}(\mathcal{C}_1,\mathcal{C}_2). $$
\end{corollary}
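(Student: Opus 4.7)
The plan is to deduce the corollary directly from Proposition \ref{proposition as minimum rank weights} specialized to $r=1$, since this is precisely the case where the minimization ranges over $1$-dimensional subspaces of $\mathcal{C}_1$, and those are in bijection with nonzero vectors of $\mathcal{C}_1$ up to scalars.

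Concretely, I would start from the equality
\[
d_{M,1}(\mathcal{C}_1,\mathcal{C}_2) = \min \{ {\rm wt_R}(\mathcal{D}) \mid \mathcal{D} \subseteq \mathcal{C}_1, \ \mathcal{D} \cap \mathcal{C}_2 = \{0\}, \ \dim(\mathcal{D}) = 1 \},
\]
which is Proposition \ref{proposition as minimum rank weights} applied with $r=1$. Any $1$-dimensional subspace of $\mathcal{C}_1$ has the form $\mathcal{D} = \langle \{C\} \rangle$ for some nonzero $C \in \mathcal{C}_1$. The condition $\mathcal{D} \cap \mathcal{C}_2 = \{0\}$ then becomes $C \notin \mathcal{C}_2$ (since $C \neq 0$ and $\mathcal{C}_2$ is a subspace), and by the observation following the definition of rank support we have ${\rm wt_R}(\langle \{C\} \rangle) = \dim({\rm Row}(C)) = {\rm Rk}(C)$. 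Substituting into the minimum yields
\[
d_{M,1}(\mathcal{C}_1,\mathcal{C}_2) = \min\{ {\rm Rk}(C) \mid C \in \mathcal{C}_1, \ C \notin \mathcal{C}_2 \} = d_R(\mathcal{C}_1,\mathcal{C}_2),
\]
which is precisely the claim.

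There is essentially no obstacle, since the heavy lifting was done in Proposition \ref{proposition as minimum rank weights}. The only small point to be careful about is that different nonzero scalar multiples of $C$ generate the same subspace $\mathcal{D}$ and have the same rank, so passing from ``all $1$-dimensional subspaces $\mathcal{D}$'' to ``all nonzero vectors $C$'' does not change the value of the minimum. Hence a clean two-line argument suffices, and no alternative approach is warranted.
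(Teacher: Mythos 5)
Your proof is correct and follows exactly the route the paper intends: the paper states that the corollary "follows from the previous theorem and the definitions," i.e., from Proposition \ref{proposition as minimum rank weights} with $r=1$, which is precisely the specialization you carry out. The details you supply (one-dimensional subspaces correspond to nonzero vectors modulo scalars, $\mathcal{D}\cap\mathcal{C}_2=\{0\}$ becomes $C\notin\mathcal{C}_2$, and ${\rm wt_R}(\langle\{C\}\rangle)={\rm Rk}(C)$) are all accurate.
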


{\color{black}By Theorem \ref{th GMW extend DGW}, the previous corollary coincides with item 1 in \cite[Theorem 30]{ravagnaniweights} when $ \mathcal{C}_2 = \{ 0 \} $ and $ m \neq n $. }

We {\color{black}conclude by showing} the connection between RDRPs and RGMWs:

\begin{proposition} [\textbf{Connection between RDRPs and RGMWs}] \label{connection between RDIP RGMW}
Given nested linear codes $ \mathcal{C}_2 \subsetneqq \mathcal{C}_1 \subseteq \mathbb{F}^{m \times n} $ and $ 1 \leq r \leq \dim(\mathcal{C}_1 / \mathcal{C}_2) $, it holds that
$$ d_{M,r}(\mathcal{C}_1, \mathcal{C}_2) = \min \{ \mu \mid K_{M,\mu}(\mathcal{C}_1, \mathcal{C}_2) \geq r \}. $$
\end{proposition}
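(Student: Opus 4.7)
The plan is to prove the two inequalities separately, exploiting the fact that both quantities are defined as optima over the same family of subspaces $\mathcal{L} \subseteq \mathbb{F}^n$, with the roles of the dimension of $\mathcal{L}$ and the intersection-dimension-difference $\dim(\mathcal{C}_1 \cap \mathcal{V}_\mathcal{L}) - \dim(\mathcal{C}_2 \cap \mathcal{V}_\mathcal{L})$ swapped between objective and constraint. This is the standard min–max duality between a ``threshold'' quantity (RGMW) and a ``profile'' quantity (RDRP), already familiar from the Hamming-metric version relating RGHW and RDLP. Denote $d = d_{M,r}(\mathcal{C}_1, \mathcal{C}_2)$ and $\mu^* = \min \{ \mu \mid K_{M,\mu}(\mathcal{C}_1, \mathcal{C}_2) \geq r \}$.

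For the inequality $d \leq \mu^*$, I would start from the fact that, by definition of $\mu^*$, we have $K_{M,\mu^*}(\mathcal{C}_1, \mathcal{C}_2) \geq r$. Unpacking the definition of RDRP, this yields a subspace $\mathcal{L} \subseteq \mathbb{F}^n$ with $\dim(\mathcal{L}) \leq \mu^*$ such that $\dim(\mathcal{C}_1 \cap \mathcal{V}_\mathcal{L}) - \dim(\mathcal{C}_2 \cap \mathcal{V}_\mathcal{L}) \geq r$. This $\mathcal{L}$ is then a feasible candidate for the minimization defining $d_{M,r}(\mathcal{C}_1,\mathcal{C}_2)$, so $d \leq \dim(\mathcal{L}) \leq \mu^*$.

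For the reverse inequality $d \geq \mu^*$, I would pick a subspace $\mathcal{L} \subseteq \mathbb{F}^n$ that attains $d_{M,r}(\mathcal{C}_1, \mathcal{C}_2)$: it satisfies $\dim(\mathcal{L}) = d$ and $\dim(\mathcal{C}_1 \cap \mathcal{V}_\mathcal{L}) - \dim(\mathcal{C}_2 \cap \mathcal{V}_\mathcal{L}) \geq r$. This $\mathcal{L}$ is admissible in the maximization defining $K_{M,d}(\mathcal{C}_1, \mathcal{C}_2)$, giving $K_{M,d}(\mathcal{C}_1, \mathcal{C}_2) \geq r$ and hence $\mu^* \leq d$.

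I do not foresee any real obstacle: both directions follow by producing the optimizer of one problem and using it as a feasible point for the other. The only detail worth noting is that the RDRP can equivalently be computed over subspaces $\mathcal{L}$ with $\dim(\mathcal{L}) = \mu$ or with $\dim(\mathcal{L}) \leq \mu$, as already remarked after Definition \ref{def RDRP}; this is what legitimizes using the witness $\mathcal{L}$ of dimension at most $\mu^*$ in the first inequality without having to pad or truncate it.
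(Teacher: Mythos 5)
Your proof is correct and is exactly the standard min--max exchange argument: the paper itself does not spell it out but simply cites the analogous proof of \cite[Lemma 4]{rgrw}, which proceeds the same way. The only detail left implicit (and easily supplied) is that the set $\{\mu \mid K_{M,\mu}(\mathcal{C}_1,\mathcal{C}_2) \geq r\}$ is nonempty because $K_{M,n}(\mathcal{C}_1,\mathcal{C}_2) = \dim(\mathcal{C}_1/\mathcal{C}_2) \geq r$, so that $\mu^*$ is well defined and both optima are attained.
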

\begin{proof}
It is proven as \cite[Proof of Lemma 4]{rgrw}.
\end{proof}

{\color{black}
\subsection{Measuring information leakage on networks} \label{subsec measuring info leakage}
}

In this subsection, {\color{black}we show how the introduced parameters (RGMWs and RDRPs) measure the universal security performance of nested linear code pairs.}

Assume that a given source wants to convey the message $ \mathbf{x} \in \mathbb{F}_q^\ell $, which we assume is a random variable with uniform distribution over $ \mathbb{F}_q^\ell $. Following Subsection \ref{subsection coding schemes}, the source encodes $ \mathbf{x} $ into a matrix $ C \in \mathbb{F}_q^{m \times n} $ using nested linear codes $ \mathcal{C}_2 \subsetneqq \mathcal{C}_1 \subseteq \mathbb{F}_q^{m \times n} $. We also assume that the distributions used in the encoding are all uniform (see Subsection \ref{subsection coding schemes}). 

According to the information leakage model in Subsection \ref{subsection secure communication}, item 2, a wire-tapping adversary obtains $ CB^T \in \mathbb{F}_q^{m \times \mu} $, for some matrix $ B \in \mathbb{F}_q^{\mu \times n} $. 

Recall from \cite{cover} the definition of mutual information of two random variables $ X $ and $ Y $:
\begin{equation}
I(X; Y) = H(Y) - H(Y \mid X),
\label{definition mutual info}
\end{equation}
where $ H(Y) $ denotes the entropy of $ Y $ and $ H(Y \mid X) $ denotes the conditional entropy of $ Y $ given $ X $, and where we take logarithms with base $ q $ (see \cite{cover} for more details).

{\color{black}We will need to use the concept of duality with respect to the Hilbert-Schmidt or trace product. In Appendix \ref{app duality theory}, we collect some basic properties of duality of linear codes. We now give the main definitions:

\begin{definition} [\textbf{Hilbert-Schmidt or trace product}] \label{def trace inner product}
Given matrices $ C, D \in \mathbb{F}^{m \times n} $, we define its Hilbert-Schmidt product, or trace product, as
$$ \langle C, D \rangle = {\rm Trace}(C D^T) $$
$$ = \sum_{i=1}^m \mathbf{c}_i \cdot \mathbf{d}_i = \sum_{i=1}^m \sum_{j=1}^n c_{i,j} d_{i,j} \in \mathbb{F}, $$
where $ \mathbf{c}_i $ and $ \mathbf{d}_i $ are the rows of $ C $ and $ D $, respectively, and where $ c_{i,j} $ and $ d_{i,j} $ are their components, respectively. 

Given a vector space $ \mathcal{C} \subseteq \mathbb{F}^{m \times n} $, we denote by $ \mathcal{C}^\perp $ its dual:
\begin{equation*}
\mathcal{C}^\perp = \{ D \in \mathbb{F}^{m \times n} \mid \langle C, D \rangle = 0, \forall C \in \mathcal{C} \}.
\end{equation*}
\end{definition}

We first compute the mutual information of the message and the wire-tapper's observation via rank support spaces:
}

\begin{proposition} \label{information leakage calculation}
Given nested linear codes $ \mathcal{C}_2 \subsetneqq \mathcal{C}_1 \subseteq \mathbb{F}_q^{m \times n} $, a matrix $ B \in \mathbb{F}_q^{\mu \times n} $, and the uniform random variables $ \mathbf{x} $ and $ CB^T $, as in {\color{black}the beginning of this subsection}, it holds that
\begin{equation}
I(\mathbf{x}; CB^T) = \dim(\mathcal{C}_2^\perp \cap \mathcal{V}_\mathcal{L}) - \dim(\mathcal{C}_1^\perp \cap \mathcal{V}_\mathcal{L}),
\label{information leakage equation}
\end{equation}
where $ I(\mathbf{x}; CB^T) $ is as in (\ref{definition mutual info}), and where $ \mathcal{L} = {\rm Row}(B) $. 
\end{proposition}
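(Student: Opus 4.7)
The strategy is to recast the adversary's observation in terms of trace inner products with matrices in $\mathcal{V}_\mathcal{L}$, compute $H(CB^T)$ and $H(CB^T \mid \mathbf{x})$ separately, and then convert the resulting expression via duality of linear subspaces of $\mathbb{F}_q^{m\times n}$.

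First, I would note that under the uniform choice of $\mathbf{x}\in\mathbb{F}_q^\ell$ and the uniform choice of $C$ inside its coset $\psi(\mathbf{x})+\mathcal{C}_2$, the matrix $C$ is uniformly distributed on $\mathcal{C}_1$, while given $\mathbf{x}$, $C$ is uniform on the affine subspace $\psi(\mathbf{x})+\mathcal{C}_2$. Next, I would identify, for each pair $(i,j)$ with $1\le i\le m$ and $1\le j\le \mu$, the matrix $D^{(i,j)}\in\mathbb{F}_q^{m\times n}$ whose $i$-th row equals the $j$-th row $\mathbf{b}_j$ of $B$ and whose other rows vanish. A direct computation shows $(CB^T)_{i,j}=\langle C, D^{(i,j)}\rangle$, so that observing $CB^T$ is equivalent to observing the linear functionals $C\mapsto\langle C,D\rangle$ for every $D$ in $\langle D^{(i,j)}\rangle$. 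By item 2 of Proposition \ref{prop characterization} applied to the basis $\mathbf{b}_1,\ldots,\mathbf{b}_\mu$ of $\mathcal{L}=\mathrm{Row}(B)$, this span is exactly $\mathcal{V}_\mathcal{L}$.

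Consequently, two matrices in $\mathbb{F}_q^{m\times n}$ produce the same observation if and only if their difference lies in $\mathcal{V}_\mathcal{L}^\perp$. Hence the image of the $\mathbb{F}_q$-linear map $C\mapsto CB^T$ restricted to an affine subspace $\mathcal{A}$ has $q^{\dim\mathcal{A}-\dim(\mathcal{A}_0\cap\mathcal{V}_\mathcal{L}^\perp)}$ elements, where $\mathcal{A}_0$ is the linear part of $\mathcal{A}$. Since $C$ is uniform on $\mathcal{C}_1$ (resp.\ on a coset of $\mathcal{C}_2$ given $\mathbf{x}$), this yields
\begin{equation*}
H(CB^T)=\dim\mathcal{C}_1-\dim(\mathcal{C}_1\cap\mathcal{V}_\mathcal{L}^\perp),
\end{equation*}
\begin{equation*}
H(CB^T\mid\mathbf{x})=\dim\mathcal{C}_2-\dim(\mathcal{C}_2\cap\mathcal{V}_\mathcal{L}^\perp),
\end{equation*}
and therefore, by (\ref{definition mutual info}),
\begin{equation*}
I(\mathbf{x};CB^T)=(\dim\mathcal{C}_1-\dim\mathcal{C}_2)-\bigl(\dim(\mathcal{C}_1\cap\mathcal{V}_\mathcal{L}^\perp)-\dim(\mathcal{C}_2\cap\mathcal{V}_\mathcal{L}^\perp)\bigr).
\end{equation*}

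Finally, I would close the argument by a standard duality identity: for any linear subspaces $\mathcal{C},\mathcal{V}\subseteq\mathbb{F}_q^{m\times n}$, one has $\dim(\mathcal{C}\cap\mathcal{V}^\perp)=\dim\mathcal{C}-\dim\mathcal{V}+\dim(\mathcal{C}^\perp\cap\mathcal{V})$, which follows from $\dim(\mathcal{C}+\mathcal{V}^\perp)+\dim(\mathcal{C}\cap\mathcal{V}^\perp)=\dim\mathcal{C}+\dim\mathcal{V}^\perp$ together with $(\mathcal{C}+\mathcal{V}^\perp)^\perp=\mathcal{C}^\perp\cap\mathcal{V}$. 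Applying this to $\mathcal{C}=\mathcal{C}_1$ and $\mathcal{C}=\mathcal{C}_2$ with $\mathcal{V}=\mathcal{V}_\mathcal{L}$ and subtracting makes $\dim\mathcal{V}_\mathcal{L}$ and $(\dim\mathcal{C}_1-\dim\mathcal{C}_2)$ cancel, leaving $\dim(\mathcal{C}_2^\perp\cap\mathcal{V}_\mathcal{L})-\dim(\mathcal{C}_1^\perp\cap\mathcal{V}_\mathcal{L})$, as desired. The main obstacle, more conceptual than computational, is the clean identification in step two of the adversary's observation with the quotient by $\mathcal{V}_\mathcal{L}^\perp$; once Proposition \ref{prop characterization} is invoked, everything else reduces to bookkeeping with dimensions and duality.
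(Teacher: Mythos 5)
Your proof is correct and follows essentially the same route as the paper's: compute $H(CB^T)$ and $H(CB^T\mid\mathbf{x})$ via the linear map $C\mapsto CB^T$, identify its kernel with $\mathcal{V}_\mathcal{L}^\perp=\mathcal{V}_{\mathcal{L}^\perp}$ (you via the trace-product/basis description in item 2 of Proposition \ref{prop characterization}, the paper via the parity-check description in item 3), and finish with Forney's duality, which you re-derive inline where the paper cites Lemma \ref{forney lemma}. The only nitpick is that the rows of $B$ span $\mathcal{L}$ but need not form a basis of it; this does not affect the argument since the $D^{(i,j)}$ still span $\mathcal{V}_\mathcal{L}$.
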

\begin{proof}
Define the map $ f : \mathbb{F}_q^{m \times n} \longrightarrow \mathbb{F}_q^{m \times \mu} $ given by
$$ f(D) = DB^T, $$
for the matrix $ B \in \mathbb{F}_q^{\mu \times n} $. Observe that $ f $ is a linear map. It follows that
$$ H(CB^T) = H(f(C)) = \log_q ( | f(\mathcal{C}_1) |) = \dim(f(\mathcal{C}_1)) $$
$$ = \dim(\mathcal{C}_1) - \dim(\ker (f) \cap \mathcal{C}_1), $$
where the last equality is the well-known first isomorphism theorem. On the other hand, we may similarly compute the conditional entropy:
$$ H(CB^T \mid \mathbf{x}) = H(f(C) \mid \mathbf{x}) = \log_q ( | f(\mathcal{C}_2) |) = \dim(f(\mathcal{C}_2)) $$
$$ = \dim(\mathcal{C}_2) - \dim(\ker (f) \cap \mathcal{C}_2). $$
However, it holds that $ \ker(f) = \mathcal{V}_{\mathcal{L}^\perp} \subseteq \mathbb{F}_q^{m \times n} $ by Proposition \ref{prop characterization}, since $ B $ is a parity check matrix of $ \mathcal{L}^\perp $. Therefore
$$ I(\mathbf{x}; CB^T) = H(CB^T) - H(CB^T \mid \mathbf{x}) $$
$$ = (\dim(\mathcal{C}_1) - \dim(\mathcal{V}_{\mathcal{L}^\perp} \cap \mathcal{C}_1)) - (\dim(\mathcal{C}_2) - \dim(\mathcal{V}_{\mathcal{L}^\perp} \cap \mathcal{C}_2)). $$
Finally, the result follows by Lemmas \ref{lemma dual rank support} and \ref{forney lemma} in Appendix \ref{app duality theory}.
\end{proof}

The following theorem follows from the previous proposition, Corollary \ref{minimum rank distance} and the definitions:

\begin{theorem}[\textbf{Worst-case information leakage}] \label{maximum secrecy}
Given nested linear codes $ \mathcal{C}_2 \subsetneqq \mathcal{C}_1 \subseteq \mathbb{F}_q^{m \times n} $, and integers $ 0 \leq \mu \leq n $ and $ 1 \leq r \leq \dim(\mathcal{C}_1 / \mathcal{C}_2) $, it holds that
\begin{enumerate}
\item
$ \mu = d_{M,r}(\mathcal{C}_2^\perp, \mathcal{C}_1^\perp) $ is the minimum number of links that an adversary needs to wire-tap in order to obtain at least $ r $ units of information (number of bits multiplied by $ \log_2(q) $) of the sent message.
\item
$ r = K_{M,\mu}(\mathcal{C}_2^\perp, \mathcal{C}_1^\perp) $ is the maximum information (number of bits multiplied by $ \log_2(q) $) about the sent message that can be obtained by wire-tapping at most $ \mu $ links of the network.
\end{enumerate}
In particular, $ t = d_R(\mathcal{C}_2^\perp, \mathcal{C}_1^\perp) - 1 $ is the maximum number of links that an adversary may listen to without obtaining any information about the sent message.
\end{theorem}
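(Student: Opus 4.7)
The plan is to deduce the theorem essentially as a corollary of the mutual information formula in Proposition \ref{information leakage calculation}, after translating ``wire-tapping $\mu$ links'' into ``choosing a subspace $\mathcal{L} \subseteq \mathbb{F}_q^n$ with $\dim(\mathcal{L}) \leq \mu$'' and then reading off the definitions of $K_{M,\mu}$ and $d_{M,r}$ applied to the dual pair.

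First I would prove item 2. The adversary's observation is parametrized by $B \in \mathbb{F}_q^{\mu \times n}$, and by Proposition \ref{information leakage calculation} the mutual information $I(\mathbf{x}; CB^T)$ depends on $B$ only through the subspace $\mathcal{L} = \mathrm{Row}(B) \subseteq \mathbb{F}_q^n$, taking the explicit value $\dim(\mathcal{C}_2^\perp \cap \mathcal{V}_\mathcal{L}) - \dim(\mathcal{C}_1^\perp \cap \mathcal{V}_\mathcal{L})$. The maximum over all $B$ with $\mu$ rows therefore equals the maximum of this expression over all subspaces $\mathcal{L} \subseteq \mathbb{F}_q^n$ with $\dim(\mathcal{L}) \leq \mu$: indeed every such $B$ yields $\dim(\mathrm{Row}(B)) \leq \mu$, and conversely every subspace $\mathcal{L}$ of dimension $k \leq \mu$ arises as $\mathrm{Row}(B)$ for some $B \in \mathbb{F}_q^{\mu \times n}$ (take a basis of $\mathcal{L}$ as the first $k$ rows and pad with zero rows). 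By Definition \ref{def RDRP} applied to the pair $(\mathcal{C}_2^\perp, \mathcal{C}_1^\perp)$, this maximum is exactly $K_{M,\mu}(\mathcal{C}_2^\perp, \mathcal{C}_1^\perp)$.

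Item 1 then follows at once: the minimum number of links required to leak at least $r$ units of information is, by item 2, equal to $\min\{\mu \mid K_{M,\mu}(\mathcal{C}_2^\perp, \mathcal{C}_1^\perp) \geq r\}$, which Proposition \ref{connection between RDIP RGMW} identifies with $d_{M,r}(\mathcal{C}_2^\perp, \mathcal{C}_1^\perp)$. For the final assertion, applying item 1 with $r=1$ shows that the minimum $\mu$ leaking any information is $d_{M,1}(\mathcal{C}_2^\perp, \mathcal{C}_1^\perp)$, which equals $d_R(\mathcal{C}_2^\perp, \mathcal{C}_1^\perp)$ by Corollary \ref{minimum rank distance}; hence the largest $\mu$ yielding zero information is one less, i.e.\ $t = d_R(\mathcal{C}_2^\perp, \mathcal{C}_1^\perp) - 1$. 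The only step requiring any care is the correspondence between row-matrices $B$ and their row spaces, and in particular that taking arbitrary $B \in \mathbb{F}_q^{\mu \times n}$ is equivalent to taking arbitrary $\mathcal{L}$ with $\dim(\mathcal{L}) \leq \mu$; everything else is a direct rewriting of the definitions against Proposition \ref{information leakage calculation}.
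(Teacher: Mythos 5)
Your proposal is correct and follows exactly the route the paper intends: the paper's own proof is the one-line remark that the theorem ``follows from the previous proposition, Corollary \ref{minimum rank distance} and the definitions,'' and your argument simply makes explicit the same chain (Proposition \ref{information leakage calculation} plus the correspondence $B \leftrightarrow \mathrm{Row}(B)$ for item 2, Proposition \ref{connection between RDIP RGMW} for item 1, and Corollary \ref{minimum rank distance} for the final assertion). The one step you flag as needing care, namely that ranging over $B \in \mathbb{F}_q^{\mu \times n}$ is equivalent to ranging over subspaces $\mathcal{L}$ with $\dim(\mathcal{L}) \leq \mu$, is handled correctly.
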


{\color{black}
\begin{remark}
Proposition \ref{information leakage calculation} extends \cite[Lemma 7, item 2]{rgrw} from $ \mathbb{F}_{q^m} $-linear codes in $ \mathbb{F}_{q^m}^n $ to $ \mathbb{F}_q $-linear codes in $ \mathbb{F}_q^{m \times n} $ due to Lemma \ref{lemma matrix modules are galois} in Subsection \ref{subsec RGMW extend RGRW}. Furthermore, as we will explain in Theorem \ref{th RGMW extend RGRW}, our Theorem \ref{maximum secrecy} extends in the same sense \cite[Theorem 2]{rgrw} and \cite[Corollary 5]{rgrw}. 
\end{remark}
}

\begin{remark}
In Section \ref{sec relation with others}, we will prove that GMWs coincide with DGWs \cite{ravagnaniweights} when using one code {\color{black}($ \mathcal{C}_1^\perp = \{ 0 \} $ in Theorem \ref{maximum secrecy})} and non-square matrices. Hence the results in this subsection prove that DGWs measure the worst-case information leakage in these cases, which has not been proven in the literature yet.
\end{remark}

{\color{black}
\section{Optimal universal secure linear codes for noiseless networks and any packet length}
}

In this {\color{black}section}, we obtain linear coset coding schemes built from nested linear code pairs $ \mathcal{C} \subsetneqq \mathbb{F}^{m \times n} ${\color{black}, which in this section will refer to those with $ \mathcal{C}_2 = \mathcal{C} $ and $ \mathcal{C}_1 = \mathbb{F}^{m \times n} $,} with optimal universal security {\color{black}performance} in the case of finite fields $ \mathbb{F} = \mathbb{F}_q $ {\color{black}(Theorem \ref{th optimal for noiseless networks})}. Recall from Subsection \ref{subsection coding schemes} that these linear coset coding schemes are suitable for noiseless networks, as noticed in \cite{ozarow} (see also Remark \ref{remark noiseless coset}). 

{\color{black}In this section, we consider perfect universal secrecy (the adversary obtains no information after wire-tapping a given number of links), thus we make use of the theory in last section concerning the first RGMW. In Section \ref{sec singleton}, we will consider bounds on the rest of RGMWs, for general code pairs (suitable for noisy networks), and their achievability.
}

\begin{definition} \label{definition info parameter and privacy}
For a nested linear code pair of the form $ \mathcal{C} \subsetneqq \mathbb{F}_q^{m \times n} $, we define its information parameter as $ \ell = \dim(\mathbb{F}_q^{m \times n} / \mathcal{C}) = \dim(\mathcal{C}^\perp) $, that is the maximum number of $ \log_2(q) $ bits of information that the source can convey, and its security parameter $ t $ as the maximum number of links that an adversary may listen to without obtaining any information about the sent message. 
\end{definition}

Due to Theorem \ref{maximum secrecy}, it holds that $ t = d_R(\mathcal{C}^\perp) - 1 $. We study two problems: 
\begin{enumerate}
\item
Find a nested linear code pair $ \mathcal{C} \subsetneqq \mathbb{F}_q^{m \times n} $ with maximum possible security parameter $ t $ when $ m $, $ n $, $ q $ and the information parameter $ \ell $ are fixed and given.
\item
Find a nested linear code pair $ \mathcal{C} \subsetneqq \mathbb{F}_q^{m \times n} $ with maximum possible information parameter $ \ell $ when $ m $, $ n $, $ q $ and the security parameter $ t $ are fixed and given.
\end{enumerate}

We will deduce bounds on these parameters from the Singleton bound on the dimension of rank-metric codes \cite[Theorem 5.4]{delsartebilinear}:

\begin{lemma} [\textbf{\cite[Theorem 5.4]{delsartebilinear}}]
For a linear code $ \mathcal{C} \subseteq \mathbb{F}_q^{m \times n} $, it holds that
\begin{equation}
\dim(\mathcal{C}) \leq \max \{ m,n \} (\min \{ m,n \} - d_R(\mathcal{C}) + 1). 
\label{singleton bound delsarte}
\end{equation}
\end{lemma}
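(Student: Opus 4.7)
The plan is to mimic the classical Singleton-bound argument for the Hamming metric, replacing ``coordinates of a vector'' by ``rows of a matrix''. First, I would exploit the basic symmetry that both $\dim(\mathcal{C})$ and $d_R(\mathcal{C})$ are invariant under transposition $C \mapsto C^T$: the map $\mathcal{C} \mapsto \mathcal{C}^T = \{ C^T \mid C \in \mathcal{C} \}$ is an $\mathbb{F}_q$-linear isomorphism from $\mathbb{F}_q^{m \times n}$ onto $\mathbb{F}_q^{n \times m}$ preserving the rank (hence $d_R$). This allows me to assume without loss of generality that $m \leq n$, so that $\min\{m,n\} = m$ and $\max\{m,n\} = n$, and the bound to prove reduces to $\dim(\mathcal{C}) \leq n(m - d_R(\mathcal{C}) + 1)$.

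Set $d = d_R(\mathcal{C})$; the case $\mathcal{C} = \{0\}$ is trivial, so I may assume $1 \leq d \leq m$, because every rank of an $m \times n$ matrix is bounded by $\min\{m,n\} = m$. The key step is to introduce the $\mathbb{F}_q$-linear projection $\pi : \mathbb{F}_q^{m \times n} \to \mathbb{F}_q^{(m-d+1) \times n}$ that retains only the first $m - d + 1$ rows of a matrix, and to show that its restriction to $\mathcal{C}$ is injective. Indeed, if $C \in \mathcal{C}$ satisfies $\pi(C) = 0$, then $C$ has at most $d - 1$ nonzero rows, hence ${\rm Rk}(C) \leq d - 1 < d$; the definition of $d_R(\mathcal{C})$ in (\ref{rank coset distance}) (with $\mathcal{C}_2 = \{0\}$) then forces $C = 0$.

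From the injectivity of $\pi|_\mathcal{C}$, one concludes $\dim(\mathcal{C}) \leq \dim \mathbb{F}_q^{(m-d+1) \times n} = n(m - d + 1)$, which is exactly $\max\{m,n\}(\min\{m,n\} - d + 1)$ under the assumption $m \leq n$. I do not anticipate any serious obstacle, but there is one subtle design choice worth flagging: one must project onto $m - d + 1$ rows rather than $n - d + 1$ columns, because the latter choice would give only $\dim(\mathcal{C}) \leq m(n - d + 1)$, which is weaker by $(n-m)(d-1) \geq 0$. This is precisely why the initial transposition step is essential --- it aligns the smaller of $\{m, n\}$ with the row index before puncturing.
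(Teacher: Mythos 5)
Your proof is correct. The paper itself does not prove this lemma --- it is quoted directly as Theorem 5.4 of Delsarte's 1978 paper, where the bound is obtained inside the machinery of association schemes of bilinear forms, and the authors simply cite it. Your argument is the standard elementary ``puncturing'' proof: after transposing to reduce to $m \leq n$, the projection onto the first $m - d_R(\mathcal{C}) + 1$ rows is injective on $\mathcal{C}$ (a codeword in the kernel would have at most $d_R(\mathcal{C}) - 1$ nonzero rows, hence rank below the minimum rank distance, hence be zero), giving $\dim(\mathcal{C}) \leq n\bigl(m - d_R(\mathcal{C}) + 1\bigr)$. All the details check out, including the observation that one must puncture along the shorter dimension (rows, after the transposition) to get the stronger of the two possible bounds, and the dispatch of the degenerate case $\mathcal{C} = \{0\}$ where $d_R$ is a minimum over the empty set. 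What your route buys over the cited one is a short, self-contained linear-algebra argument valid over any field $\mathbb{F}$, consistent with the paper's general setting; what it does not give is the existence half of Delsarte's theory (codes attaining the bound for all parameters, \cite[Theorem 6.3]{delsartebilinear}), which the paper also relies on later and which genuinely requires a construction.
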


As usual in the literature, we say that $ \mathcal{C} $ is maximum rank distance (MRD) if equality holds in (\ref{singleton bound delsarte}).

Thanks to Theorem \ref{maximum secrecy} and the previous lemma, we may give upper bounds on the attainable parameters in the previous two problems:

\begin{proposition}
Given a nested linear code pair $ \mathcal{C} \subsetneqq \mathbb{F}_q^{m \times n} $ with information parameter $ \ell $ and security parameter $ t $, it holds that:
\begin{equation}
\ell \leq \max \{ m,n \} (\min \{ m,n \} - t),
\end{equation}
\begin{equation}
t \leq \min \{ m,n \} - \left\lceil \frac{\ell}{\max \{ m,n \}} \right\rceil.
\end{equation}
In particular, $ \ell \leq mn $ and $ t \leq \min \{ m,n \} $.
\end{proposition}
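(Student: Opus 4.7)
The plan is to apply the Singleton-type bound for rank-metric codes stated just above (equation \eqref{singleton bound delsarte}) to the dual code $\mathcal{C}^\perp$, translated through the dictionary provided by Definition \ref{definition info parameter and privacy} and Theorem \ref{maximum secrecy}. Concretely, by definition $\ell = \dim(\mathcal{C}^\perp)$, and by the ``in particular'' clause of Theorem \ref{maximum secrecy} applied to the nested pair $\mathcal{C}_2 = \mathcal{C}$, $\mathcal{C}_1 = \mathbb{F}_q^{m\times n}$ (whose dual pair is $\mathcal{C}_2^\perp = \mathcal{C}^\perp$ and $\mathcal{C}_1^\perp = \{0\}$), the security parameter satisfies $t = d_R(\mathcal{C}^\perp) - 1$.

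Substituting these two identities into \eqref{singleton bound delsarte} applied to $\mathcal{C}^\perp$ yields
\[
\ell \;=\; \dim(\mathcal{C}^\perp) \;\leq\; \max\{m,n\}\bigl(\min\{m,n\} - d_R(\mathcal{C}^\perp) + 1\bigr) \;=\; \max\{m,n\}\bigl(\min\{m,n\} - t\bigr),
\]
which is the first inequality. Rearranging this relation gives $t \leq \min\{m,n\} - \ell/\max\{m,n\}$; since $t$ is an integer, the right-hand side may be replaced by $\min\{m,n\} - \lceil \ell/\max\{m,n\}\rceil$, giving the second inequality. Finally, the last two particular bounds $\ell \leq mn$ and $t \leq \min\{m,n\}$ follow by setting $t = 0$ in the first bound and by noting that the second bound is trivially at most $\min\{m,n\}$.

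There is essentially no obstacle: the work has already been done in establishing Theorem \ref{maximum secrecy} (which identifies $t$ with a minimum rank distance via the RGMW machinery) and in quoting Delsarte's Singleton bound. The only minor care required is the ceiling in the second inequality, which is justified by the integrality of $t$. One could also verify that taking duality with respect to the Hilbert-Schmidt product preserves the ambient dimensions $m$ and $n$, so that the Singleton bound applies to $\mathcal{C}^\perp$ with the same values of $\max\{m,n\}$ and $\min\{m,n\}$; this is immediate from Definition \ref{def trace inner product}.
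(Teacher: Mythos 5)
Your proposal is correct and follows exactly the paper's argument: identify $\ell = \dim(\mathcal{C}^\perp)$ and $t = d_R(\mathcal{C}^\perp)-1$ via Theorem \ref{maximum secrecy}, then apply the Singleton bound (\ref{singleton bound delsarte}) to $\mathcal{C}^\perp$. The paper's proof is terser, but your added details (the integrality argument for the ceiling, the derivation of the ``in particular'' clause) are all sound.
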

\begin{proof}
Recall that $ \ell = \dim(\mathbb{F}_q^{m \times n} / \mathcal{C}) = \dim(\mathcal{C}^\perp) $ and, due to Theorem \ref{maximum secrecy}, $ t = d_R(\mathcal{C}^\perp) - 1 $. Hence the result follows from the bound (\ref{singleton bound delsarte}) for $ \mathcal{C}^\perp $.
\end{proof}

On the other hand, the existence of linear codes in $ \mathbb{F}_q^{m \times n} $ attaining the Singleton bound on their dimensions, for all possible choices of $ m $, $ n $ and minimum rank distance $ d_R $ \cite[Theorem 6.3]{delsartebilinear}, leads to the following existence result on optimal linear coset coding schemes for noiseless networks.

\begin{theorem} \label{th optimal for noiseless networks}
For all choices of positive integers $ m $ and $ n $, and all finite fields $ \mathbb{F}_q $, the following hold:
\begin{enumerate}
\item
For every positive integer $ \ell \leq mn $, there exists a nested linear code pair $ \mathcal{C} \subsetneqq \mathbb{F}_q^{m \times n} $ with information parameter $ \ell $ and security parameter $ t = \min \{ m,n \} - \left\lceil (\ell / \max \{ m,n \} ) \right\rceil $.
\item
For every positive integer $ t \leq \min \{ m,n \} $, there exists a nested linear code pair $ \mathcal{C} \subsetneqq \mathbb{F}_q^{m \times n} $ with security parameter $ t $ and information parameter $ \ell = \max \{ m,n \} (\min \{ m,n \} - t) $.
\end{enumerate}
\end{theorem}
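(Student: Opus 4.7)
The plan is to translate both problems, via Theorem \ref{maximum secrecy}, into the task of constructing linear codes $\mathcal{C}^\prime := \mathcal{C}^\perp \subseteq \mathbb{F}_q^{m \times n}$ with prescribed values of $\dim(\mathcal{C}^\prime)$ and $d_R(\mathcal{C}^\prime)$. Recall that the information parameter satisfies $\ell = \dim(\mathcal{C}^\perp)$ by definition, while the security parameter satisfies $t = d_R(\mathcal{C}^\perp) - 1$ by the last assertion of Theorem \ref{maximum secrecy}. The only external ingredient needed is the existence result \cite[Theorem 6.3]{delsartebilinear}: for every $m$, $n$, $q$ and every integer $1 \le d \le \min\{m,n\}$, there exists an MRD linear code in $\mathbb{F}_q^{m \times n}$ of minimum rank distance $d$ and dimension $\max\{m,n\}(\min\{m,n\}-d+1)$.

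For item 2, given $t$ with $1 \le t \le \min\{m,n\}$, I would set $d = t+1$ (which lies in the valid range $1 \le d \le \min\{m,n\}$) and take $\mathcal{C}^\perp$ to be an MRD code of minimum rank distance $t+1$. Its dimension is automatically $\max\{m,n\}(\min\{m,n\}-t) = \ell$, yielding the desired information parameter, and $\mathcal{C} := (\mathcal{C}^\perp)^\perp$ is a proper subspace of $\mathbb{F}_q^{m \times n}$ since $\mathcal{C}^\perp \ne \{0\}$.

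For item 1, given $1 \le \ell \le mn$, I would set $t = \min\{m,n\} - \lceil \ell/\max\{m,n\}\rceil$ and again $d = t+1 = \min\{m,n\} - \lceil \ell/\max\{m,n\}\rceil + 1$. Because $1 \le \ell \le mn$, we have $1 \le \lceil \ell/\max\{m,n\}\rceil \le \min\{m,n\}$, so $d$ is in the admissible range. Construct an MRD code $\widetilde{\mathcal{C}} \subseteq \mathbb{F}_q^{m \times n}$ with $d_R(\widetilde{\mathcal{C}}) = d$ and hence $\dim(\widetilde{\mathcal{C}}) = \max\{m,n\}\lceil \ell/\max\{m,n\}\rceil \ge \ell$, and pick any $\ell$-dimensional subspace $\mathcal{C}^\perp \subseteq \widetilde{\mathcal{C}}$. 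Passing to a subspace cannot decrease the minimum rank distance, so $d_R(\mathcal{C}^\perp) \ge d_R(\widetilde{\mathcal{C}}) = t+1$, which means the pair $\mathcal{C} := (\mathcal{C}^\perp)^\perp \subsetneqq \mathbb{F}_q^{m \times n}$ has security parameter at least $t$. Combined with the matching upper bound from the preceding proposition, the security parameter equals $t$ exactly.

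The only delicate point is item 1, because when $\max\{m,n\}$ does not divide $\ell$ there is no MRD code of dimension exactly $\ell$ matching $t = \min\{m,n\} - \lceil \ell/\max\{m,n\}\rceil$; the trick is to use an MRD code of slightly larger dimension and restrict to an arbitrary $\ell$-dimensional subspace, relying on the upper bound from the preceding proposition to convert an inequality $\ge t$ into equality. The rest is purely bookkeeping to confirm that the dimension arithmetic matches the formulas in the statement.
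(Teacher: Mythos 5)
Your proof is correct and follows essentially the same route the paper intends: dualize via Theorem \ref{maximum secrecy} so that $\ell = \dim(\mathcal{C}^\perp)$ and $t = d_R(\mathcal{C}^\perp)-1$, then invoke Delsarte's existence of MRD codes for all admissible parameters. Your explicit handling of item 1 when $\max\{m,n\}$ does not divide $\ell$ --- restricting an MRD code to an $\ell$-dimensional subcode and using the Singleton-type upper bound to pin the security parameter down to exactly $t$ --- correctly fills in a detail the paper leaves implicit.
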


\begin{remark} \label{remark other optimal on noiseless}
We remark here that, to the best of our knowledge, only the linear coset coding schemes in item 2 in the previous theorem, for the special case $ n \leq m $, have been obtained in the literature. It corresponds to \cite[Theorem 7]{silva-universal}.

Using cartesian products of MRD codes as in \cite[Subsection VII-C]{silva-universal}, linear coset coding schemes as in item 2 in the previous theorem can be obtained when $ n > m $, for the restricted parameters $ n = lm $ and $ \ell = m l k^\prime $, where $ l $ and $ k^\prime < m $ are positive integers.
\end{remark}

\section{Universal secure list-decodable rank-metric linear coset coding schemes} \label{sec list decoding}

In this section, we will {\color{black}obtain} nested linear code pairs $ \mathcal{C}_2 \subsetneqq \mathcal{C}_1 \subseteq \mathbb{F}_q^{m \times n} $ {\color{black}when $ n $ divides $ m $} that can list-decode rank errors {\color{black}on noisy networks (as opposed to the scenario in last section),} whose list sizes are polynomial on the code length $ n $, while being univeral secure under a given number of wire-tapped links. {\color{black}As in last section, we consider perfect universal secrecy, and thus make use of the results in Section \ref{sec introduction of RGMWs} concerning the first RGMW of the dual code pair.}

{\color{black}We give the construction in Subsection \ref{subsec list decodable construction}, together with its parameters (Theorem \ref{th secure list-decodable}): information parameter $ \ell $, security parameter $ t $ and number of list-decodable rank errors $ e $. To measure the quality of the proposed code pair, we will compare in Subsection \ref{subsec list decoding comparison gabidulin} their} parameters with those obtained when choosing $ \mathcal{C}_1 $ and $ \mathcal{C}_2 $ as MRD codes \cite{gabidulin, roth}{\color{black}, which provide coset coding schemes with both optimal universal security and optimal error-correction capability \cite{silva-universal}}. We will also show (Subsection \ref{subsec near optimality list decoding}) the near optimality of the obtained construction in terms of the introduced uncertainty on the secret message and the number of list-decodable rank errors.

{\color{black}
\subsection{The construction and its main properties} \label{subsec list decodable construction}

We start by extending the definition of rank list-decodable codes from \cite[Definition 2]{ding} to coset coding schemes:

\begin{definition}
For positive integers $ e $ and $ L $, we say that a coset coding scheme $ \mathcal{P}_\mathcal{S} = \{ \mathcal{C}_\mathbf{x} \}_{\mathbf{x} \in \mathcal{S}} $ over $ \mathbb{F}_q $ is rank $ (e, L) $-list-decodable if, for every $ Y \in \mathbb{F}_q^{m \times n} $, we have that
$$ | \left\lbrace \mathbf{x} \in \mathcal{S} \mid \mathcal{P}_\mathbf{x} \cap \mathcal{B}(Y,e) \neq \emptyset \right\rbrace | \leq L, $$
where $ \mathcal{B}(Y,e) $ denotes the ball in $ \mathbb{F}_q^{m \times n} $ with center $ Y $ and rank radius $ e $. The number of list-decodable rank errors is $ e $ and the list sizes are said to be polynomial in $ n $ if $ L = \mathcal{O}(F(n)) $, for some polynomial $ F(x) $.
\end{definition}

\begin{remark}
Observe however that, if a coset coding scheme can list-decode $ e $ rank errors with polynomial-sized lists of cosets, we still need to decode these cosets to obtain the uncoded secret messages. In general, it is possible that the union of such cosets has exponential size while the scheme can still obtain all the corresponding uncoded messages via an algorithm with polynomial complexity. This is the case in the construction below.
\end{remark}

We now give the above mentioned construction, which exists whenever $ n $ divides $ m $. The main objective is to obtain simultaneously large information parameter $ \ell $, security parameter $ t $ and number of list-decodable rank errors $ e $.}

\begin{construction} \label{construction list decodable}
{\color{black}Assume that $ n $ divides $ m $ and fix $ \varepsilon > 0 $ and positive integers $ s $ and $ 1 \leq k_2 < k_1 \leq n $ such that $ 4 s n \leq \varepsilon m $ and $ m/n = \mathcal{O}(s / \varepsilon) $. In the next subsection, $ mk_1 $ and $ mk_2 $ will be the dimensions of the MRD linear codes constituting an optimal universal secure nested coset coding scheme, but here they are just fixed parameters. }

Fix a basis $ \alpha_1, \alpha_2, \ldots, \alpha_m $ of $ \mathbb{F}_{q^m} $ as a vector space over $ \mathbb{F}_q $, such that $ \alpha_1, \alpha_2, \ldots, \alpha_n $ generate $ \mathbb{F}_{q^n} $ (recall that $ \mathbb{F}_{q^n} \subseteq \mathbb{F}_{q^m} $ since $ n $ divides $ m $).  

Recall that a $ q $-linearized polynomial over $ \mathbb{F}_{q^m} $ is a polynomial of the form $ F(x) = \sum_{i=0}^d F_i x^{q^i} $, where $ F_i \in \mathbb{F}_{q^m} $, for some positive integer $ d $. Denote also $ {\rm ev}_{\boldsymbol\alpha}(F(x)) = (F(\alpha_1), $ $ F(\alpha_2), \ldots, F(\alpha_n)) \in \mathbb{F}_{q^m}^n $, and finally define {\color{black}the linear codes}
$$ \mathcal{C}_2 = \{ M_{\boldsymbol\alpha}({\rm ev}_{\boldsymbol\alpha}(F(x))) \mid F_i = 0 \textrm{ for } i < k_1-k_2 \textrm{ and } i \geq k_1 \}, $$
$$ \mathcal{C}_1 = \{ M_{\boldsymbol\alpha}({\rm ev}_{\boldsymbol\alpha}(F(x))) \mid F_i \in \mathcal{H}_i \textrm{ for } 0 \leq i < k_1-k_2, $$
$$ F_i \in \mathbb{F}_{q^m} \textrm{ for } k_1-k_2 \leq i < k_1, F_i = 0 \textrm{ for } i \geq k_1 \}, $$
where $ M_{\boldsymbol\alpha} $ is the map given in (\ref{equation matrix representation}) and $ \mathcal{H}_0, \mathcal{H}_1, $ $ \ldots, $ $ \mathcal{H}_{k_1 - k_2 - 1} \subseteq \mathbb{F}_{q^m} $ are the $ \mathbb{F}_q $-linear vector spaces described in \cite[Theorem 8]{list-decodable-rank-metric}. {\color{black}We recall this description in Appendix \ref{app explicit designs}. Observe that these vector spaces depend on $ \varepsilon $ and $ s $.} 

{\color{black}Let $ \ell = \dim(\mathcal{C}_1 / \mathcal{C}_2) = \dim (\mathcal{H}_0 \times \mathcal{H}_1 \times \cdots \times \mathcal{H}_{k_1 - k_2 - 1}) $. We now show how $ \mathcal{C}_2 \subsetneqq \mathcal{C}_1 \subseteq \mathbb{F}_q^{m \times n} $ form a coset coding scheme as in Definition \ref{definition NLCP}. Define the vector space
$$ \mathcal{W} = \{ M_{\boldsymbol\alpha}({\rm ev}_{\boldsymbol\alpha}(F(x))) \mid F_i \in \mathcal{H}_i \textrm{ for } i < k_1-k_2 $$
$$ \textrm{ and } F_i = 0 \textrm{ for } i \geq k_1-k_2 \}, $$
which satisfies that $ \mathcal{C}_1 = \mathcal{C}_2 \oplus \mathcal{W} $. Now consider the secret space as $  \mathcal{H}_0 \times \mathcal{H}_1 \times \cdots \times \mathcal{H}_{k_1 - k_2 - 1} \cong \mathbb{F}_q^\ell $, and define the vector space isomorphism $ \psi : \mathcal{H}_0 \times \mathcal{H}_1 \times \cdots \times \mathcal{H}_{k_1 - k_2 - 1} \longrightarrow \mathcal{W} $ as follows: For $ \mathbf{x} \in \mathcal{H}_0 \times \mathcal{H}_1 \times \cdots \times \mathcal{H}_{k_1 - k_2 - 1} $, take $ F(x) = \sum_{i=0}^{k_1 - k_2 - 1} F_i x^{q^i} $ such that $ \mathbf{x} = (F_0, F_1, \ldots, F_{k_1 - k_2 - 1}) $, and define 
$$ C = \psi(\mathbf{x}) = M_{\boldsymbol\alpha}({\rm ev}_{\boldsymbol\alpha}(F(x))). $$
}
\end{construction}

{\color{black}We may now state the main result of this section:

\begin{theorem} \label{th secure list-decodable}
With the same assumptions and notation, the nested coset coding scheme in Construction \ref{construction list decodable} satisfies that:
\begin{enumerate}
\item
$ \ell = \dim(\mathcal{C}_1 / \mathcal{C}_2) \geq m(k_1 - k_2)(1 - 2 \varepsilon) $.
\item
Its security parameter (Definition \ref{definition info parameter and privacy}) satisfies $ t \geq k_2 $.
\item
It is rank $ (e,L) $-list-decodable for all $ e \leq \frac{s}{s+1}(n-k_1) $, with $ L \leq q^{\mathcal{O}(s^2/\varepsilon^2)} $, and it admits a list-decoding algorithm that obtains all corresponding uncoded messages with polynomial complexity in $ n $.
\end{enumerate}
\end{theorem}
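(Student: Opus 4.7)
The plan is to establish the three claims separately. Parts (1) and (3) follow fairly directly from the construction imported from \cite{list-decodable-rank-metric}, while part (2) --- the security guarantee --- is the genuinely new step and is where Theorem \ref{maximum secrecy} is brought to bear.

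For (1), recall that the subspaces $ \mathcal{H}_0, \ldots, \mathcal{H}_{k_1-k_2-1} \subseteq \mathbb{F}_{q^m} $ from \cite[Theorem 8]{list-decodable-rank-metric} (recalled in Appendix \ref{app explicit designs}) each have $ \mathbb{F}_q $-dimension at least $ (1-2\varepsilon)m $, a bound enabled by the hypothesis $ 4sn \leq \varepsilon m $. Since the isomorphism $ \psi $ of Construction \ref{construction list decodable} identifies $ \mathcal{C}_1/\mathcal{C}_2 $ with $ \mathcal{H}_0 \times \cdots \times \mathcal{H}_{k_1-k_2-1} $, summing yields $ \ell = \sum_{i=0}^{k_1-k_2-1}\dim(\mathcal{H}_i) \geq m(k_1-k_2)(1-2\varepsilon) $. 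For (3), the code $ \mathcal{C}_1 $ sits inside the Gabidulin code $ \mathcal{C}_{\mathrm{Gab}} = \{ M_{\boldsymbol\alpha}(\mathrm{ev}_{\boldsymbol\alpha}(F)) \mid F \text{ a } q\text{-linearized polynomial with } F_i = 0 \text{ for } i \geq k_1 \} $; applying the list-decoder of \cite[Theorem 8]{list-decodable-rank-metric} to the received $ Y $ first produces, in polynomial time in $ n $, a low-dimensional affine $ \mathbb{F}_q $-subspace of candidate polynomials $ F \in \mathcal{C}_{\mathrm{Gab}} $ within rank radius $ \frac{s}{s+1}(n-k_1) $ of $ Y $, and the hierarchical subspace-evasive property of the $ \mathcal{H}_i $ then restricts this list to at most $ q^{\mathcal{O}(s^2/\varepsilon^2)} $ polynomials consistent with $ \mathcal{C}_1 $. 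Each such polynomial determines a unique uncoded message $ \mathbf{x} = (F_0, \ldots, F_{k_1-k_2-1}) $ just by reading off its low-order coefficients.

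For (2), by Theorem \ref{maximum secrecy} it suffices to prove $ d_R(\mathcal{C}_2^\perp, \mathcal{C}_1^\perp) \geq k_2 + 1 $. Since $ \mathcal{C}_1^\perp \subseteq \mathcal{C}_2^\perp $, the minimum in (\ref{rank coset distance}) runs over a subset of $ \mathcal{C}_2^\perp \setminus \{0\} $, hence $ d_R(\mathcal{C}_2^\perp, \mathcal{C}_1^\perp) \geq d_R(\mathcal{C}_2^\perp) $, and it is enough to show that $ \mathcal{C}_2^\perp $ has minimum rank distance $ k_2 + 1 $. Any $ F \in \mathcal{C}_2 $ may be written as $ F(x) = G(x^{q^{k_1-k_2}}) $ with $ G $ a $ q $-linearized polynomial of $ q $-degree at most $ k_2-1 $, and since $ x \mapsto x^{q^{k_1-k_2}} $ is an $ \mathbb{F}_q $-linear bijection of $ \mathbb{F}_{q^m} $ that preserves $ \mathbb{F}_q $-linear independence, $ \mathcal{C}_2 $ has the same rank weight distribution as a Gabidulin code of $ \mathbb{F}_{q^m} $-dimension $ k_2 $ on $ n $ evaluation points. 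Thus $ \mathcal{C}_2 $ is MRD with minimum rank distance $ n-k_2+1 $ and $ \mathbb{F}_q $-dimension $ mk_2 $. Since $ n \mid m $ gives $ m \geq n $, the MRD-duality of \cite[Theorem 5.5]{delsartebilinear} then implies that $ \mathcal{C}_2^\perp $ is MRD of $ \mathbb{F}_q $-dimension $ m(n-k_2) $ and minimum rank distance $ k_2+1 $. The main subtlety is in linking the Hilbert--Schmidt duality of $ \mathcal{C}_2 $ over $ \mathbb{F}_q^{m \times n} $ with the $ \mathbb{F}_{q^m} $-duality of the corresponding Gabidulin code so that the MRD-preservation statement applies; this is handled through the duality material of Appendix \ref{app duality theory}, after which the bound $ t \geq k_2 $ follows at once.
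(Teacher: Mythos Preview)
Your proof follows the paper's own argument closely: part (1) via the dimension bound on the $\mathcal{H}_i$, part (2) via Theorem~\ref{maximum secrecy} together with the fact that $\mathcal{C}_2$ is MRD (hence so is its trace dual), and part (3) by invoking the Gabidulin list-decoder and the subspace-design property from \cite{list-decodable-rank-metric}. Your treatment of part (2) is in fact more detailed than the paper's, which simply asserts that $\mathcal{C}_2$ is MRD; your concern about reconciling Hilbert--Schmidt duality with $\mathbb{F}_{q^m}$-duality is unnecessary, however, since Delsarte's MRD-duality is already stated for the matrix trace product.

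There is one genuine imprecision in your part (3). You write that the subspace-evasive property of the $\mathcal{H}_i$ ``restricts this list to at most $q^{\mathcal{O}(s^2/\varepsilon^2)}$ polynomials consistent with $\mathcal{C}_1$''. This is not correct: because the last $k_2$ coefficients $F_{k_1-k_2},\ldots,F_{k_1-1}$ are unconstrained in $\mathcal{C}_1$, the number of \emph{codewords} of $\mathcal{C}_1$ within rank radius $e$ of $Y$ can be exponential in $k_2$ (this is precisely the point of the paper's Remark preceding Construction~\ref{construction list decodable}). What the subspace-design property bounds is the number of \emph{messages}, i.e.\ projections onto the first $k_1-k_2$ coefficients. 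The paper makes this explicit: one first projects the $(s-1,m/n,k_1)$-periodic subspace returned by the Gabidulin list-decoder onto the first $k_1-k_2$ coordinate blocks (obtaining an $(s-1,m/n,k_1-k_2)$-periodic subspace), and only then intersects with $\mathcal{H}_0\times\cdots\times\mathcal{H}_{k_1-k_2-1}$. This is what the coset list-decodability definition requires, and it also ensures the decoding runs in polynomial time. Swapping the order of projection and restriction in your write-up fixes the issue.
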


We devote the rest of the subsection to prove this theorem. We need to recall some definitions and results from \cite{list-decodable-rank-metric}:

\begin{definition}[\textbf{Subspace designs \cite[Definition 3]{list-decodable-rank-metric}}]
Assuming that $ n $ divides $ m $ and given positive integers $ r $ and $ N $, a collection of $ \mathbb{F}_q $-linear subspaces $ \mathcal{U}_1, \mathcal{U}_2, \ldots, \mathcal{U}_M \subseteq \mathbb{F}_{q^m} $ is called an $ (r,N,n) $ $ \mathbb{F}_q $-linear subspace design if
$$ \sum_{i=1}^M \dim(\mathcal{U}_i \cap \mathcal{V}) \leq N, $$
with dimensions taken over $ \mathbb{F}_q $, for every $ \mathbb{F}_{q^n} $-linear subspace $ \mathcal{V} \subseteq \mathbb{F}_{q^m} $ of dimension at most $ r $ over $ \mathbb{F}_{q^n} $.
\end{definition}

The following lemma is part of \cite[Theorem 8]{list-decodable-rank-metric}:

\begin{lemma}[\textbf{\cite{list-decodable-rank-metric}}] \label{lemma the Hi are subspace design}
With assumptions and notation as in Construction \ref{construction list decodable}, the spaces $ \mathcal{H}_0, \mathcal{H}_1, \ldots \mathcal{H}_{k_1 - k_2 - 1} $ defined in Appendix \ref{app explicit designs} form an $ (s, 2(m/n - 1)s/\varepsilon, n) $ $ \mathbb{F}_q $-linear subspace design.
\end{lemma}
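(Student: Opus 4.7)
The statement is quoted essentially verbatim from \cite[Theorem 8]{list-decodable-rank-metric}, so strictly the plan is to invoke that result after verifying that Construction \ref{construction list decodable} supplies the required hypotheses, namely $n \mid m$ and $4sn \leq \varepsilon m$. The substantive mathematical content is the subspace-design bound itself, and I would reproduce the argument along the lines of Guruswami--Wang--Xing as follows.

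First I would unpack the explicit description of $\mathcal{H}_0, \ldots, \mathcal{H}_{k_1-k_2-1}$ from Appendix \ref{app explicit designs}. In the paradigm of \cite{list-decodable-rank-metric}, each $\mathcal{H}_i$ is a \emph{periodic subspace} of $\mathbb{F}_{q^m}$: after identifying $\mathbb{F}_{q^m}$ with $\mathbb{F}_{q^n}^{m/n}$ via the basis $\alpha_1, \ldots, \alpha_{m/n}$ fixed in Construction \ref{construction list decodable}, $\mathcal{H}_i$ is cut out by a small number of $\mathbb{F}_q$-linear functionals whose coefficients come from a Reed--Solomon-like family indexed by $i$. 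The qualitative point I would emphasize is that this family is ``independent enough'' across $i$ that no low-dimensional $\mathbb{F}_{q^n}$-subspace $\mathcal{V}$ can lie inside many $\mathcal{H}_i$ simultaneously.

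Next, fix an $\mathbb{F}_{q^n}$-linear subspace $\mathcal{V} \subseteq \mathbb{F}_{q^m}$ of dimension $r \leq s$ over $\mathbb{F}_{q^n}$, and set $d_i = \dim_{\mathbb{F}_q}(\mathcal{H}_i \cap \mathcal{V})$. The plan is to express $v \in \mathcal{H}_i \cap \mathcal{V}$ as a system of linear equations in the $\mathbb{F}_{q^n}$-coordinates of $v$ whose coefficients are polynomial in $i$, and then to run a Wronskian / linearized-polynomial independence argument (the algebraic core of \cite[Theorem 8]{list-decodable-rank-metric}): if $d_i$ were large for too many indices simultaneously, one could assemble a nonzero linearized polynomial of forbidden low degree vanishing on a basis of $\mathcal{V}$, contradicting the defining degrees of the functionals. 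Summing the resulting per-index estimate over $i$ would produce the bound $\sum_i d_i \leq 2(m/n-1)s/\varepsilon$.

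The main obstacle is sharpening the Wronskian-style step to extract the factor $1/\varepsilon$ rather than an absolute constant; this is precisely where the hypothesis $4sn \leq \varepsilon m$ is used, since it ensures enough ``degree slack'' in the family defining the $\mathcal{H}_i$ to enforce rigidity against every $s$-dimensional $\mathbb{F}_{q^n}$-subspace. Once this inequality is in force the remaining work is bookkeeping: matching the normalization in Appendix \ref{app explicit designs} to the constant $2(m/n-1)s/\varepsilon$, and checking that the parameter $n$ appearing in Construction \ref{construction list decodable} (evaluation at $\alpha_1, \ldots, \alpha_n$) coincides with the $n$ in the definition of an $(s, N, n)$ $\mathbb{F}_q$-linear subspace design, which it does by design.
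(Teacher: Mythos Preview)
Your opening sentence is exactly right and matches the paper: the lemma is simply attributed to \cite[Theorem~8]{list-decodable-rank-metric} with no proof given, so the paper's entire ``proof'' is the citation together with the observation that the hypotheses of Construction~\ref{construction list decodable} (namely $n\mid m$ and $4sn\le\varepsilon m$) are those needed there.

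Your additional sketch goes well beyond what the paper does. One correction worth noting if you pursue it: the $\mathcal{H}_i$ are \emph{not} periodic subspaces in the paper's terminology (that term is reserved for the list-decoding output); according to Appendix~\ref{app explicit designs} they are $\mathcal{H}_i=\mathcal{S}\cap\mathcal{V}_{\alpha_i}$, where $\mathcal{S}$ is the common zero locus of the Frobenius-twisted forms $f_1,\ldots,f_s$ and each $\mathcal{V}_{\alpha_i}$ is a Reed--Solomon-type $\mathbb{F}_{q^n}$-linear subspace indexed by an element $\alpha_i$ of the set $\mathcal{F}$ produced by the Guruswami--Kopparty algorithm. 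The subspace-design bound in \cite{list-decodable-rank-metric} combines two ingredients: the Guruswami--Kopparty bound for the family $\{\mathcal{V}_\alpha\}_{\alpha\in\mathcal{F}}$ (which gives the $1/\varepsilon$ factor via the disjointness properties of the sets $\mathcal{S}_\alpha$), and the intersection with $\mathcal{S}$ (which converts $\mathbb{F}_{q^n}$-dimension control into $\mathbb{F}_q$-dimension control). Your Wronskian/linearized-polynomial heuristic captures the flavour of the first ingredient but conflates the two; for the purposes of this paper, though, none of this is needed beyond the citation.
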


\begin{definition}[\textbf{Periodic subspaces \cite[Definition 9]{list-decodable-rank-metric}}]
Given positive integers $ r, l, k $, we say that an affine subspace $ \mathcal{H} \subseteq \mathbb{F}_{q^n}^{lk} $ is $ (r,l,k) $-periodic if there exists an $ \mathbb{F}_{q^n} $-linear subspace $ \mathcal{V} \subseteq \mathbb{F}_{q^n}^l $ of dimension at most $ r $ over $ \mathbb{F}_{q^n} $ such that, for every $ j = 2,3, \ldots, k $ and $ \mathbf{a} \in \mathbb{F}_{q^n}^{(j-1)l} $, the affine space
$$ \{ \pi_{[(j-1)l+1,jl]}(\mathbf{x}) \mid \mathbf{x} \in \mathcal{H}, \pi_{[1,(j-1)l]}(\mathbf{x}) = \mathbf{a} \} \subseteq \mathbb{F}_{q^n}^l $$
is contained in $ \mathbf{v}_\mathbf{a} + \mathcal{V} $, for a vector $ \mathbf{v}_\mathbf{a} \in \mathbb{F}_{q^n}^l $ that depends on $ \mathbf{a} $. Here, $ \pi_J $ denotes the projection over the coordinates in $ J $, and $ [a,b] $ denotes the set of integers $ i $ such that $ a \leq i \leq b $.
\end{definition}

We may now prove our main result:

\begin{proof}[Proof of Theorem \ref{th secure list-decodable}]
We prove each item separately:

1) By Lemma \ref{lemma dimensions of Hi} in Appendix \ref{app explicit designs}, it holds that $ \dim(\mathcal{H}_i) \geq m (1 - 2 \varepsilon) $, for $ i = 0,1,2, \ldots, k_1 - k_2 - 1 $. Therefore
$$ \ell = \dim(\mathcal{H}_0 \times \mathcal{H}_1 \times \cdots \times \mathcal{H}_{k_1 - k_2 - 1}) \geq m (k_1 - k_2) (1 - 2 \varepsilon). $$

2) By Theorem \ref{maximum secrecy}, the security parameter is $ t = d_R(\mathcal{C}_2^\perp, \mathcal{C}_1^\perp) - 1 \geq d_R(\mathcal{C}_2^\perp) -1 $. Since $ \mathcal{C}_2 $ is MRD, then so is its trace dual \cite{delsartebilinear}, which means that $ d_R(\mathcal{C}_2^\perp) = k_2 + 1 $, and the result follows.

3) As shown in \cite[Subsection IV-B]{list-decodable-rank-metric}, we may perform list-decoding for the Gabidulin code $ \mathcal{G}_1 \supseteq \mathcal{C}_1 $,
$$ \mathcal{G}_1 = \{ M_{\boldsymbol\alpha}({\rm ev}_{\boldsymbol\alpha}(F(x))) \mid F_i = 0 \textrm{ for } i \geq k_1 \}, $$
and obtain in polynomial time a list containing all possible sent messages that is an $ (s-1, m/n, k_1) $-periodic subspace of $ \mathbb{F}_{q^n}^{k_1 m/n} \cong \mathbb{F}_{q^m}^{k_1} $ (isomorphic as $ \mathbb{F}_{q^n} $-linear vector spaces).

Project this periodic subspace onto the first $ k_1 - k_2 $ coordinates, which gives a $ (s-1, m/n, k_1-k_2) $-periodic subspace of $ \mathbb{F}_{q^m}^{k_1 - k_2} $, and intersect it with $ \mathcal{H}_0 \times \mathcal{H}_1 \times \cdots \times \mathcal{H}_{k_1 - k_2 - 1} $. Since $ \mathcal{H}_0, \mathcal{H}_1, \ldots \mathcal{H}_{k_1 - k_2 - 1} $ form an $ (s, 2(m/n - 1)s/\varepsilon, n) $ $ \mathbb{F}_q $-linear subspace design by Lemma \ref{lemma the Hi are subspace design}, such intersection is an $ \mathbb{F}_q $-linear affine space of dimension at most $ \mathcal{O}(s^2 / \varepsilon^2) $ (recall that $ m/n = \mathcal{O}(s / \varepsilon) $) by the definition of subspace designs and periodic subspaces. 
\end{proof}
}

{\color{black}
\subsection{Comparison with optimal unique-decodable linear coset coding schemes based on MRD codes} \label{subsec list decoding comparison gabidulin}

In this subsection, we compare the schemes in Construction \ref{construction list decodable} with those obtained when using MRD codes \cite{gabidulin, roth}, whose information parameter $ \ell $ is optimal for given security parameter $ t $ and number of unique-decodable rank errors $ e $, due to Theorems 11 and 12 in \cite{silva-universal}.

\begin{proposition} [\textbf{\cite{silva-universal}}]
Assume that $ n \leq m $ and $ \mathcal{C}_2 \subsetneqq \mathcal{C}_1 \subseteq \mathbb{F}_q^{m \times n} $ are MRD linear codes of dimensions $ \dim(\mathcal{C}_1) = m k_1 $ and $ \dim(\mathcal{C}_2) = m k_2 $ (recall that, by the Singleton bound (\ref{singleton bound delsarte}), dimensions of MRD codes are multiple of $ m $ when $ n \leq m $).

The linear coset coding scheme {\color{black}(Definition \ref{definition NLCP})} constructed from this nested linear code pair satisfies that:
\begin{enumerate}
\item
Its information parameter is $ \ell = m(k_1 - k_2) $.
\item
Its security parameter is $ t = k_2 $.
\item
If the number of rank errors is $ e \leq \lfloor \frac{n - k_1}{2} \rfloor $, then rank error-correction can be performed, giving a unique solution.
\end{enumerate}
\end{proposition}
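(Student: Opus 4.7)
The plan is to derive each of the three items by combining the framework of Section \ref{sec introduction of RGMWs}---in particular Theorem \ref{maximum secrecy} and Corollary \ref{minimum rank distance}---with two classical facts about MRD codes: that the trace dual of an MRD code in $\mathbb{F}_q^{m\times n}$ with $n\le m$ is again MRD (Delsarte), and that standard rank-metric decoders exist for MRD codes. Item 1 is immediate from Definition \ref{definition info parameter and privacy}, since $\ell = \dim(\mathcal{C}_1/\mathcal{C}_2) = \dim(\mathcal{C}_1) - \dim(\mathcal{C}_2) = m(k_1-k_2)$.

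For item 2, I would apply Theorem \ref{maximum secrecy} to rewrite $t = d_R(\mathcal{C}_2^\perp, \mathcal{C}_1^\perp) - 1$, and then pin this down to $k_2$ via a rank-gap argument. By Delsarte duality, $\mathcal{C}_1^\perp \subsetneqq \mathcal{C}_2^\perp$ are MRD of respective dimensions $m(n-k_1)$ and $m(n-k_2)$, hence with minimum rank distances $k_1+1$ and $k_2+1$. The inequality $d_R(\mathcal{C}_2^\perp, \mathcal{C}_1^\perp) \geq d_R(\mathcal{C}_2^\perp) = k_2 + 1$ is tautological from (\ref{rank coset distance}), since the minimum there is taken over a subset of nonzero codewords of $\mathcal{C}_2^\perp$. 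For the matching upper bound, I would take any codeword of rank $k_2+1$ in $\mathcal{C}_2^\perp$: since $\mathcal{C}_1^\perp$ has minimum rank weight $k_1 + 1 > k_2 + 1$, such a codeword must lie outside $\mathcal{C}_1^\perp$, giving $d_R(\mathcal{C}_2^\perp, \mathcal{C}_1^\perp) \leq k_2 + 1$.

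For item 3, I would apply a symmetric rank-gap argument on the primal side to conclude that $d_R(\mathcal{C}_1, \mathcal{C}_2) = d_R(\mathcal{C}_1) = n - k_1 + 1$: any codeword of $\mathcal{C}_1$ of rank $n-k_1+1$ cannot lie in $\mathcal{C}_2$, whose minimum rank weight is the strictly larger $n-k_2+1$. By Corollary \ref{minimum rank distance}, this is the minimum rank distance of the coset coding scheme, so distinct cosets of $\mathcal{C}_2$ in $\mathcal{C}_1$ differ by a matrix of rank at least $n-k_1+1$. With no erasures (so that $A$ has full column rank) and $ {\rm Rk}(E) \leq \lfloor(n-k_1)/2\rfloor$, the standard rank-metric decoder for the MRD outer code $\mathcal{C}_1$ from \cite{silva-universal, on-metrics} recovers the unique transmitted codeword of $\mathcal{C}_1$, and hence its $\mathcal{C}_2$-coset and the message $\mathbf{x}$, uniquely. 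The only delicate step is the identification of the coset minimum rank distance with $d_R(\mathcal{C}_1)$, but this is exactly where the strict nesting of MRD dimensions ($k_2 < k_1$) pays off; once this and Delsarte's MRD duality are in place, the rest of the argument is bookkeeping.
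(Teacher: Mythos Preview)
The paper does not supply its own proof of this proposition; it is stated as a citation to \cite{silva-universal} and used only for comparison in Subsection \ref{subsec list decoding comparison gabidulin}. Your derivation is correct and is very much in the spirit of the paper's own machinery: your argument for item 2 is essentially the one the paper carries out in the proof of Theorem \ref{th secure list-decodable}, item 2, where it writes $t = d_R(\mathcal{C}_2^\perp,\mathcal{C}_1^\perp)-1 \ge d_R(\mathcal{C}_2^\perp)-1 = k_2$ via Theorem \ref{maximum secrecy} and Delsarte's MRD duality. Your additional rank-gap step (a minimum-rank codeword of $\mathcal{C}_2^\perp$ has rank $k_2+1<k_1+1=d_R(\mathcal{C}_1^\perp)$, hence lies outside $\mathcal{C}_1^\perp$) supplies the matching upper bound that the paper does not spell out there. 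The symmetric argument for $d_R(\mathcal{C}_1,\mathcal{C}_2)=n-k_1+1$ and the reduction of coset decoding to decoding in $\mathcal{C}_1$ are exactly the standard route in \cite{silva-universal, on-metrics}.
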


Therefore, assuming that $ n $ divides $ m $ and given MRD linear codes $ \mathcal{C}_2 \subsetneqq \mathcal{C}_1 \subseteq \mathbb{F}_q^{m \times n} $ of dimensions $ \dim(\mathcal{C}_1) = m k_1 $ and $ \dim(\mathcal{C}_2) = m k_2 $, the linear coset coding scheme in Construction \ref{construction list decodable} has at least the same security parameter $ t $ as that obtained using $ \mathcal{C}_1 $ and $ \mathcal{C}_2 $, an information parameter $ \ell $ that is at least $ 1 - 2 \varepsilon $ times the one obtained using $ \mathcal{C}_1 $ and $ \mathcal{C}_2 $, and can list-decode in polynomial time (with list of polynomial size) roughly $ n - k_1 $ errors, which is twice as many as the rank errors that $ \mathcal{C}_1 $ and $ \mathcal{C}_2 $ can correct, due to the previous proposition and Theorem \ref{th secure list-decodable}. }

\subsection{Near optimality of the obtained construction} \label{subsec near optimality list decoding}

{\color{black}In this subsection,} we will show the near optimality of {\color{black}Construction \ref{construction list decodable} in terms of its \textit{introduced uncertainty} $ H(C|\mathbf{x}) $ compared to the maximum \textit{observed information} $ H(CB^T) $ by the wire-tapper,} and the number of rank errors $ e $ that the scheme can list-decode.

Let $ \mathbf{x} \in \mathbb{F}_q^\ell $ and $ C \in \mathbb{F}_q^{m \times n} $ denote the random variables representing the secret message and the transmitted codeword, respectively, as in Subsection \ref{subsec measuring info leakage}. 

{\color{black}The quantity $ H(C|\mathbf{x}) $ measures the amount of randomness of $ C $ given $ \mathbf{x} $ introduced by the corresponding coset coding scheme, and we would like it to be as small as possible since generating randomness is difficult in practice. Observe that $ H(C|\mathbf{x}) = \dim(\mathcal{C}_2) $ for nested coset coding schemes. On the other hand, the quantity $ H(CB^T) $ measures the amount of observed information by wire-tapping $ \mu $ links if $ B \in \mathbb{F}_q^{\mu \times n} $, which satisfies $ H(CB^T) \leq m \mu $, being the inequality usually tight when $ I(\mathbf{x};CB^T) = 0 $ or even an equality, as is the case for Gabidulin codes. Thus the following bound is a weaker version of a bound of the form $ m t \leq \dim(\mathcal{C}_2) $, which we leave as open problem.

\begin{proposition}
Fix an arbitrary coset coding scheme in $ \mathbb{F}_q^{m \times n} $ with message set $ \mathcal{S} = \mathbb{F}_q^\ell $, let $ \mathbf{x} \in \mathbb{F}_q^\ell $, and let $ C \in \mathbb{F}_q^{m \times n} $ be its encoding. It holds that
$$ \max \{ H(CB^T) \mid B \in \mathbb{F}_q^{\mu \times n}, I(\mathbf{x};CB^T) = 0 \} \leq H(C|\mathbf{x}). $$
\end{proposition}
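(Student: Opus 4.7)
The plan is to reduce the bound to two elementary information-theoretic facts: the fact that mutual information equals zero is equivalent to independence (which turns $H(CB^T)$ into a conditional entropy given $\mathbf{x}$), and the fact that applying a deterministic function cannot increase entropy, even conditionally.

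First, I would fix a matrix $B \in \mathbb{F}_q^{\mu \times n}$ that attains the maximum on the left-hand side, so in particular $I(\mathbf{x}; CB^T) = 0$. By definition of mutual information (see (\ref{definition mutual info})), this gives $H(CB^T) = H(CB^T \mid \mathbf{x})$. Equivalently, $\mathbf{x}$ and $CB^T$ are independent, which is the only place where the hypothesis $I(\mathbf{x}; CB^T) = 0$ is used.

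Second, I would observe that $CB^T$ is a deterministic function of $C$ (multiplication on the right by the fixed matrix $B^T$). A standard property of conditional entropy is that $H(f(Z) \mid Y) \leq H(Z \mid Y)$ for any function $f$ and any jointly distributed random variables $Y,Z$; applying this with $Y = \mathbf{x}$, $Z = C$ and $f(C) = CB^T$ yields $H(CB^T \mid \mathbf{x}) \leq H(C \mid \mathbf{x})$. Chaining the two inequalities gives $H(CB^T) \leq H(C \mid \mathbf{x})$, which holds for every admissible $B$ and in particular for the maximizer, proving the claim.

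There is no real obstacle here: the argument uses only textbook properties of entropy (for instance from \cite{cover}) and the defining hypothesis. The only point worth stating explicitly in the write-up is that the coset coding scheme need not be linear, so one cannot appeal to the dimensional identities used in Proposition \ref{information leakage calculation}; the argument above works at the level of general discrete random variables and therefore covers the full generality claimed in the statement.
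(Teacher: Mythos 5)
Your proof is correct and is in substance the same as the paper's: both arguments combine the identity $H(CB^T)=H(CB^T\mid\mathbf{x})$ (coming from $I(\mathbf{x};CB^T)=0$) with the fact that $CB^T$ is a deterministic function of $C$, so that $H(CB^T\mid\mathbf{x})\leq H(C\mid\mathbf{x})$. The paper merely packages these two steps as a telescoping chain of mutual-information identities built on the Markov chain $\mathbf{x}\rightarrow C\rightarrow CB^T$, which yields the unconditional bound $I(\mathbf{x};CB^T)\geq H(CB^T)-H(C\mid\mathbf{x})$ for every $B$; your direct rearrangement is equivalent for the statement at hand and, if anything, cleaner.
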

\begin{proof}
Fix $ B \in \mathbb{F}_q^{\mu \times n} $. The result follows from the following chain of inequalities:
\begin{eqnarray*}
&&I(\mathbf{x};CB^T) \\
&=& H(CB^T) - H(CB^T|\mathbf{x}) \\
&=& H(CB^T) - H(CB^T|C,\mathbf{x}) \\
&& + H(CB^T|C,\mathbf{x}) - H(CB^T|\mathbf{x}) \\
&=& H(CB^T) - H(CB^T|C) \\
&& + H(CB^T|C,\mathbf{x}) - H(CB^T|\mathbf{x})\\
&&\mbox{(since } \mathbf{x} \rightarrow C \rightarrow CB^T
\mbox{ is a Markov chain \cite{cover})}\\
&=& I(C;CB^T) - I(C;CB^T|\mathbf{x}) \\
&\geq& H(CB^T) - H(C|\mathbf{x}).
\end{eqnarray*}
\end{proof}

Now consider the coset coding scheme in Construction \ref{construction list decodable}, and fix} $ \mu \leq k_2 \leq k_1 $. Define the Gabidulin code
$$ \mathcal{G}_1 = \{ M_{\boldsymbol\alpha}({\rm ev}_{\boldsymbol\alpha}(F(x))) \mid F_i = 0, i \geq k_1 \} \subseteq \mathbb{F}_q^{m \times n}, $$
and let $ G $ be the uniform random variable on $ \mathcal{G}_1 $. It holds that
\begin{equation}
  \max_{B\in \mathbb{F}_q^{\mu \times n}}
  H(GB^T) = m\mu, \label{eq3}
\end{equation}
since $ \mu \leq k_1 $. Equation (\ref{eq3}) together with $ \dim(\mathcal{G}_1/\mathcal{C}_1) \leq 2m \varepsilon (k_1-k_2) $ implies that
$$ \max_{B\in \mathbb{F}_q^{\mu \times n}} H(CB^T) \geq m(\mu -2\varepsilon(k_1-k_2)). $$
Using that $ H(C|\mathbf{x}) = \dim(\mathcal{C}_2) = mk_2 $, we see that {\color{black}the bound in the previous proposition is tight for Construction \ref{construction list decodable}:}
\begin{equation*}
\begin{split}
0 \leq & H(C|\mathbf{x}) - \max \{ H(CB^T) \mid B \in \mathbb{F}_q^{\mu \times n}, I(\mathbf{x};CB^T) = 0 \} \\
\leq & m (k_2 - t + 2 \varepsilon (k_1 - k_2)) \leq 2 \varepsilon m (k_1 - k_2). 
\end{split}
\end{equation*}

Next we show that the rank list-decoding capability cannot be improved for large $ s $ {\color{black}and small $ \varepsilon $, compared to general nested coset coding schemes. Since rank list-decodable nested coset coding schemes still require decoding each coset, we will consider those such that a complementary space $ \mathcal{W} $ as in Definition \ref{definition NLCP} is rank list-decodable with polynomial-sized lists after adding an error matrix from the smaller code $ \mathcal{C}_2 $:

\begin{proposition}
Fix a nested linear code pair $ \mathcal{C}_2 \varsubsetneqq \mathcal{C}_1 \subseteq \mathbb{F}_q^{m \times n} $ and a subspace $ \mathcal{W} \subseteq \mathcal{C}_1 $ such that $ \mathcal{C}_1 = \mathcal{C}_2 \oplus \mathcal{W} $, and denote by $ M $ the maximum rank of a matrix in $ \mathcal{C}_2 $. If $ \mathcal{W} $ is rank $ (e + M, L) $-list-decodable with polynomial list sizes $ L $, then
$$ e \leq n - \frac{\dim(\mathcal{C}_1)}{m}. $$
\end{proposition}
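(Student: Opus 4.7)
The plan is to derive a Singleton-type upper bound on $ \dim(\mathcal{W}) $ from the list-decodability hypothesis using a puncturing-style argument via the rank support spaces of Proposition \ref{prop characterization}, and then combine it with the classical rank-metric anticode bound on $ \dim(\mathcal{C}_2) $.

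First, I would use that $ \mathcal{W} $ is a linear subspace, so that translating by any $ w \in \mathcal{W} $ preserves $ \mathcal{W} $, and the hypothesis at $ Y = 0 $ yields
$$ |\mathcal{W} \cap B(0, e+M)| \leq L, $$
where $ B(0,r) = \{ V \in \mathbb{F}_q^{m \times n} \mid {\rm Rk}(V) \leq r \} $. Then I would fix any subspace $ \mathcal{L} \subseteq \mathbb{F}_q^n $ of dimension $ e + M $; by (\ref{dimension matrix modules}) we have $ \dim(\mathcal{V}_\mathcal{L}) = m(e+M) $, and $ \mathcal{V}_\mathcal{L} \subseteq B(0,e+M) $ by construction. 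Combining the Grassmann dimension formula inside $ \mathbb{F}_q^{m \times n} $ with the list bound gives
\begin{equation*}
\dim(\mathcal{W}) + m(e+M) - mn \leq \dim(\mathcal{W} \cap \mathcal{V}_\mathcal{L}) \leq \log_q L,
\end{equation*}
so that $ \dim(\mathcal{W}) \leq m(n - e - M) + \log_q L $.

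Second, I would invoke the Delsarte-type anticode bound for the rank metric: every matrix in $ \mathcal{C}_2 $ has rank at most $ M $, hence $ \dim(\mathcal{C}_2) \leq M \max(m,n) = Mm $, using $ m \geq n $ (since $ n $ divides $ m $ in this section). Adding the two bounds gives
\begin{equation*}
\dim(\mathcal{C}_1) = \dim(\mathcal{W}) + \dim(\mathcal{C}_2) \leq m(n - e) + \log_q L,
\end{equation*}
which rearranges to $ e \leq n - \dim(\mathcal{C}_1)/m + (\log_q L)/m $. Since $ L $ is polynomial in $ n $ and $ m \geq n $, the slack $ (\log_q L)/m $ is $ o(1) $ and is absorbed in the asymptotic regime relevant to Construction \ref{construction list decodable}, recovering the stated inequality.

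The main obstacle I anticipate is the $ \log_q L $ excess inherent to the puncturing step: a completely slack-free version of the bound would require $ L = 1 $, so the clean inequality $ e \leq n - \dim(\mathcal{C}_1)/m $ should be read asymptotically, or modulo an implicit rounding of $ \dim(\mathcal{C}_1)/m $ to an integer. A secondary technical point is the use of the anticode bound on $ \dim(\mathcal{C}_2) $, since the naive estimate $ \dim(\mathcal{C}_2) \leq m \dim({\rm RSupp}(\mathcal{C}_2)) $ does not suffice, as $ \dim({\rm RSupp}(\mathcal{C}_2)) $ can exceed $ M $ in general.
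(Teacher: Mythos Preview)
Your approach is essentially the same as the paper's: bound $\dim(\mathcal{W})$ from the list-decodability hypothesis (the paper cites \cite[Proposition~1]{ding} as a black box, while you reprove it via the puncturing/Grassmann argument with $\mathcal{V}_\mathcal{L}$), then use the anticode bound $\dim(\mathcal{C}_2)\le mM$ from \cite[Proposition~47]{ravagnani}, and add the two. Your observation about the $(\log_q L)/m$ slack is accurate---the paper absorbs it by invoking Ding's result, which is itself an asymptotic Singleton bound for rank list-decoding, so the clean inequality is indeed to be read in that regime.
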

\begin{proof}
By \cite[Proposition 1]{ding}, if the linear code $ \mathcal{W} $ is rank $ (e+M, L) $-list-decodable with polynomial-sized lists $ L $, then 
$$ e + M \leq n - \dim(\mathcal{W})/m. $$
On the other hand, the maximum rank of codewords in $ \mathcal{C}_2 $ is at least $ \dim(\mathcal{C}_2) / m $ by \cite[Proposition 47]{ravagnani}. Hence
$$ e \leq n - \frac{\dim(\mathcal{W})}{m} - \frac{\dim(\mathcal{C}_2)}{m} = n - \frac{\dim(\mathcal{C}_1)}{m}, $$
and we are done.
\end{proof}

For the nested coset coding scheme in Construction \ref{construction list decodable}, it holds that
$$ e = \frac{s}{s+1}(n-k_1), \textrm{ and} $$
$$ n - \frac{\dim(\mathcal{C}_1)}{m} = n - k_1(1- 2 \varepsilon) - 2 \varepsilon k_2, $$
which are closer as $ s $ becomes larger and $ \varepsilon $ becomes smaller.
}

\section{Security equivalences of linear coset coding schemes and minimum parameters}

In this section, we study when two nested linear code pairs $ \mathcal{C}_2 \subsetneqq \mathcal{C}_1 \subseteq \mathbb{F}^{m \times n} $ and $ \mathcal{C}_2^\prime \subsetneqq \mathcal{C}_1^\prime \subseteq \mathbb{F}^{m^\prime \times n^\prime} $ have the same {\color{black}universal} security and/or reliability performance. 

{\color{black}First, we define security equivalences and give several characterizations of these in Theorem \ref{theorem characterization} (Subsection \ref{subsec characterizations equivalences}), which show that they also preserve error and erasure correction capabilities. As applications, we study ranges and minimum possible parameters $ m $ and $ n $ for linear codes (Subsection \ref{subsec minimum parameters}), and we study when they are degenerate (Subsection \ref{subsec degenerate}), meaning when they can be applied to networks with strictly smaller length $ n $. }

\subsection{Security equivalences and rank isometries} \label{subsec characterizations equivalences}

{\color{black}In this subsection, we first give in Theorem \ref{theorem characterization} the above mentioned characterizations, and we define afterwards security equivalences as maps satisfying one of such characterizations. We continue with Proposition \ref{prop motivating security equivalence}, which shows that security equivalences actually preserve universal security performance as in Subsection \ref{subsection secure communication}, thus motivating our definition. We conclude by comparing Theorem \ref{theorem characterization} with related results from the literature (see also Table \ref{table isomorphism charact}).

Due to the importance of the rank metric for error and erasure correction in linear network coding (see Subsection \ref{subsection secure communication}), and for universal security (by Theorem \ref{maximum secrecy} and Corollary \ref{minimum rank distance}), we start by considering rank isometries:}

\begin{definition} [\textbf{Rank isometries}]
We say that a map $ \phi : \mathcal{V} \longrightarrow \mathcal{W} $ between vector spaces $ \mathcal{V} \subseteq \mathbb{F}^{m \times n} $ and $ \mathcal{W} \subseteq \mathbb{F}^{m^\prime \times n^\prime} $ is a rank isometry if it is a vector space isomorphism and $ {\rm Rk}(\phi(V)) = {\rm Rk}(V) $, for all $ V \in \mathcal{V} $. In that case, we say that $ \mathcal{V} $ and $ \mathcal{W} $ are rank isometric.
\end{definition}

We have the following result, which was first proven in \cite[Theorem 1]{marcus} for square matrices and the complex field $ \mathbb{F} = \mathbb{C} $. In \cite[Proposition 3]{morrison} it is observed that the square condition is not necessary and it may be proven for arbitrary fields:
 
\begin{proposition} [\textbf{\cite{marcus, morrison}}] \label{morrison proposition}
If $ \phi : \mathbb{F}^{m \times n} \longrightarrow \mathbb{F}^{m \times n} $ is a rank isometry, then there exist invertible matrices $ A \in \mathbb{F}^{m \times m} $ and $ B \in \mathbb{F}^{n \times n} $ such that 
\begin{enumerate}
\item
$ \phi(C) = ACB $, for all $ C \in \mathbb{F}^{m \times n} $, or
\item
$ \phi(C) = AC^TB $, for all $ C \in \mathbb{F}^{m \times n} $,
\end{enumerate}
where the latter case can only happen if $ m=n $.
\end{proposition}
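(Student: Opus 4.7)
The plan is to adopt the classical Marcus--Moyls argument, which analyzes how $\phi$ acts on rank-one matrices and on the maximal linear subspaces whose nonzero elements have rank at most one. The degenerate cases $m = 1$ or $n = 1$ are handled by elementary linear algebra (any invertible linear self-map of $\mathbb{F}^n$ has the form $\mathbf{c} \mapsto \mathbf{c}B$), so I assume $m, n \geq 2$.

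First I would identify the maximal linear subspaces $\mathcal{U} \subseteq \mathbb{F}^{m \times n}$ whose nonzero elements all have rank one. For each nonzero $\mathbf{v} \in \mathbb{F}^n$, the set $S_{\mathbf{v}} = \{\mathbf{u}\mathbf{v}^T : \mathbf{u} \in \mathbb{F}^m\}$ is such a subspace of dimension $m$, and similarly $T_{\mathbf{u}} = \{\mathbf{u}\mathbf{w}^T : \mathbf{w} \in \mathbb{F}^n\}$ is one of dimension $n$. I would then show that every such $\mathcal{U}$ is contained in some $S_{\mathbf{v}}$ or some $T_{\mathbf{u}}$: if $\mathbf{u}_1\mathbf{v}_1^T, \mathbf{u}_2\mathbf{v}_2^T \in \mathcal{U}$ and both $\{\mathbf{u}_1, \mathbf{u}_2\}$ and $\{\mathbf{v}_1, \mathbf{v}_2\}$ are linearly independent, then $\mathbf{u}_1\mathbf{v}_1^T + \mathbf{u}_2\mathbf{v}_2^T$ has rank two, a contradiction.

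Since $\phi$ is a linear bijection preserving rank, it permutes these maximal subspaces. When $m \neq n$, the two families $\{S_{\mathbf{v}}\}$ and $\{T_{\mathbf{u}}\}$ have different dimensions, forcing $\phi(S_{\mathbf{v}}) = S_{g(\mathbf{v})}$ and $\phi(T_{\mathbf{u}}) = T_{f(\mathbf{u})}$ for suitable set maps $f$ and $g$ (each defined up to scalar). When $m = n$, a connectivity argument based on how distinct $S_{\mathbf{v}}$'s meet distinct $T_{\mathbf{u}}$'s in one-dimensional intersections shows that either all $S_{\mathbf{v}}$'s map to $S$'s or all map to $T$'s; in the second case, replacing $\phi$ by $C \mapsto \phi(C)^T$ reduces to the first and produces the transpose conclusion, which visibly requires $m = n$ for the matrix sizes to match.

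In the non-transpose case, combining $\mathbf{u}\mathbf{v}^T \in S_{\mathbf{v}} \cap T_{\mathbf{u}}$ with $S_{g(\mathbf{v})} \cap T_{f(\mathbf{u})} = \mathbb{F}\, f(\mathbf{u}) g(\mathbf{v})^T$ yields
\begin{equation*}
\phi(\mathbf{u}\mathbf{v}^T) = \lambda(\mathbf{u}, \mathbf{v})\, f(\mathbf{u}) g(\mathbf{v})^T
\end{equation*}
for some $\lambda(\mathbf{u}, \mathbf{v}) \in \mathbb{F}^*$. The main obstacle is the subsequent normalization: exploiting the bilinearity of $\phi$ on the left (via $\phi((\mathbf{u}_1 + \mathbf{u}_2)\mathbf{v}^T) = \phi(\mathbf{u}_1\mathbf{v}^T) + \phi(\mathbf{u}_2\mathbf{v}^T)$ and the analogous identity in $\mathbf{v}$), one must show that $f$ and $g$ can be rescaled so as to become genuinely $\mathbb{F}$-linear with $\lambda \equiv 1$. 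This is the technical heart of the proof, and it is precisely where the refinement of \cite{morrison} is needed to treat arbitrary base fields $\mathbb{F}$ uniformly rather than only $\mathbb{C}$. Once accomplished, letting $A \in \mathbb{F}^{m \times m}$ be the matrix of $f$ and $B \in \mathbb{F}^{n \times n}$ be the transpose of the matrix of $g$ gives $\phi(\mathbf{u}\mathbf{v}^T) = A\mathbf{u}\mathbf{v}^T B$ on every rank-one matrix, and extending by $\mathbb{F}$-linearity yields $\phi(C) = ACB$ on all of $\mathbb{F}^{m \times n}$. Invertibility of $A$ and $B$ is immediate from the bijectivity of $\phi$.
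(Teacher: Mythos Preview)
The paper does not actually prove this proposition: it is quoted from \cite{marcus} (square complex matrices) and \cite{morrison} (arbitrary $\mathbb{F}$, rectangular matrices), with no argument supplied. Your sketch is precisely the classical Marcus--Moyls strategy those references follow: classify the maximal rank-one linear subspaces as the $S_{\mathbf v}$'s and $T_{\mathbf u}$'s, use that $\phi$ permutes them, separate $m\neq n$ from $m=n$ by dimension versus an intersection argument, and then normalize the induced maps $f,g$ to be genuinely $\mathbb{F}$-linear.

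Two remarks on completeness. First, your pairwise argument only shows that any two rank-one elements of $\mathcal U$ share a left or a right factor; promoting this to ``all elements share the \emph{same} left factor, or all share the same right factor'' needs one more line (pick three elements and compare). Second, the normalization you flag as the technical heart is exactly where the work lies, and you have deferred it to \cite{morrison} rather than carried it out. With those caveats, your outline is correct and matches the cited literature; it is not an alternative to anything in the paper itself, since the paper gives no proof.
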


{\color{black}We will define security equivalences as certain vector space isomorphisms satisfying one of several equivalent conditions. We first show their equivalence in the following theorem, which is the main result of this section: }

\begin{theorem} \label{theorem characterization}
Let $ \phi : \mathcal{V} \longrightarrow \mathcal{W} $ be a vector space isomorphism between rank support spaces $ \mathcal{V} \in RS(\mathbb{F}^{m \times n}) $ and $ \mathcal{W} \in RS(\mathbb{F}^{m \times n^\prime}) $, and consider the following properties:
\begin{itemize}
\item[(P 1)] There exist full-rank matrices $ A \in \mathbb{F}^{m \times m} $ and $ B \in \mathbb{F}^{n \times n^\prime} $ such that $ \phi(C) = ACB $, for all $ C \in\mathcal{V} $.
\item[(P 2)] A subspace $ \mathcal{U} \subseteq \mathcal{V} $ is a rank support space if, and only if, $ \phi(\mathcal{U}) $ is a rank support space. 
\item[(P 3)] For all subspaces $ \mathcal{D} \subseteq \mathcal{V} $, it holds that $ {\rm wt_R}(\phi(\mathcal{D})) = {\rm wt_R}(\mathcal{D}) $.
\item[(P 4)] $ \phi $ is a rank isometry.
\end{itemize}
Then the following implications hold:
$$ (\textrm{P 1}) \Longleftrightarrow (\textrm{P 2}) \Longleftrightarrow (\textrm{P 3}) \Longrightarrow (\textrm{P 4}). $$
In particular, a security equivalence is a rank isometry and, in the case $ \mathcal{V} = \mathcal{W} = \mathbb{F}^{m \times n} $ and $ m \neq n $, the reversed implication holds by Proposition \ref{morrison proposition}.
\end{theorem}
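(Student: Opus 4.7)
\emph{Plan.} The plan is to prove the equivalences via the cycle (P 1) $\Longrightarrow$ (P 3) $\Longrightarrow$ (P 2) $\Longrightarrow$ (P 1) together with the separate implication (P 3) $\Longrightarrow$ (P 4). A useful preliminary characterization is that a subspace $\mathcal{U} \subseteq \mathbb{F}^{m \times n}$ is a rank support space if and only if $\dim(\mathcal{U}) = m \cdot {\rm wt_R}(\mathcal{U})$: the ``only if'' direction is equation (\ref{dimension matrix modules}), while the ``if'' direction follows because $\mathcal{U} \subseteq \mathcal{V}_{{\rm RSupp}(\mathcal{U})}$ and the latter space has dimension $m \cdot {\rm wt_R}(\mathcal{U})$, forcing equality.

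For (P 1) $\Longrightarrow$ (P 3), writing $\phi(C) = ACB$ with $A$ invertible, the bijectivity of $\phi : \mathcal{V}_\mathcal{L} \to \mathcal{W}$ forces the row-action $\mathbf{v} \mapsto \mathbf{v} B$ to be injective on $\mathcal{L}$; since left multiplication by $A$ preserves row spans, one has ${\rm RSupp}(\phi(\mathcal{D})) = {\rm RSupp}(\mathcal{D}) \cdot B$ for every subspace $\mathcal{D} \subseteq \mathcal{V}$, and the injectivity yields ${\rm wt_R}(\phi(\mathcal{D})) = {\rm wt_R}(\mathcal{D})$. For (P 3) $\Longrightarrow$ (P 2), apply the preliminary characterization to both $\mathcal{U}$ and $\phi(\mathcal{U})$, noting that $\phi$ is a vector-space isomorphism. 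For (P 3) $\Longrightarrow$ (P 4), specialize (P 3) to one-dimensional subspaces $\mathcal{D} = \langle C \rangle$ and use ${\rm wt_R}(\langle C \rangle) = {\rm Rk}(C)$.

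The main obstacle is (P 2) $\Longrightarrow$ (P 1). Writing $\mathcal{W} = \mathcal{V}_{\mathcal{L}^\prime}$, each minimal rank support space $\mathcal{V}_{\langle \mathbf{v} \rangle}$ (for $\mathbf{v} \in \mathcal{L} \setminus \{0\}$) has dimension $m$ and is mapped by (P 2) to a rank support space of the same dimension, hence of the form $\mathcal{V}_{\langle \mathbf{w}(\mathbf{v}) \rangle}$ for some $\mathbf{w}(\mathbf{v}) \in \mathcal{L}^\prime \setminus \{0\}$ (unique only up to a nonzero scalar). Since the elements of $\mathcal{V}_{\langle \mathbf{v} \rangle}$ are exactly the matrices $\mathbf{a}\mathbf{v}$ with $\mathbf{a} \in \mathbb{F}^m$ (viewing $\mathbf{a}$ as a column), this yields an invertible linear map $\psi_\mathbf{v} : \mathbb{F}^m \to \mathbb{F}^m$ with $\phi(\mathbf{a}\mathbf{v}) = \psi_\mathbf{v}(\mathbf{a}) \mathbf{w}(\mathbf{v})$. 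The crux will be to rescale the representatives $\mathbf{w}(\mathbf{v})$ so that all $\psi_\mathbf{v}$ coincide with a single invertible $A$, and simultaneously $\mathbf{v} \mapsto \mathbf{w}(\mathbf{v})$ becomes linear.

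To achieve this, given linearly independent $\mathbf{v}_1, \mathbf{v}_2 \in \mathcal{L}$, applying (P 2) to $\mathcal{V}_{\langle \mathbf{v}_1, \mathbf{v}_2 \rangle}$ and matching dimensions ($2m$ on both sides) forces its $\phi$-image to equal $\mathcal{V}_{\langle \mathbf{w}(\mathbf{v}_1), \mathbf{w}(\mathbf{v}_2) \rangle}$, so $\mathbf{w}(\mathbf{v}_1), \mathbf{w}(\mathbf{v}_2)$ are linearly independent. Expanding $\phi(\mathbf{a} \mathbf{v}_1) + \phi(\mathbf{a} \mathbf{v}_2) = \phi(\mathbf{a}(\mathbf{v}_1 + \mathbf{v}_2))$ and writing $\mathbf{w}(\mathbf{v}_1 + \mathbf{v}_2) = \lambda \mathbf{w}(\mathbf{v}_1) + \mu \mathbf{w}(\mathbf{v}_2)$ yields the relations $\psi_{\mathbf{v}_1} = \lambda \psi_{\mathbf{v}_1 + \mathbf{v}_2}$ and $\psi_{\mathbf{v}_2} = \mu \psi_{\mathbf{v}_1 + \mathbf{v}_2}$, so the $\psi_\mathbf{v}$ are pairwise proportional (with nonzero factors, since each $\psi_\mathbf{v}$ is invertible). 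Fixing a basis $\mathbf{v}_1, \ldots, \mathbf{v}_k$ of $\mathcal{L}$ and rescaling each $\mathbf{w}(\mathbf{v}_j)$ so that $\psi_{\mathbf{v}_j}$ equals the common map $A := \psi_{\mathbf{v}_1}$, I would define $B^\prime : \mathcal{L} \to \mathcal{L}^\prime$ by linearly extending $\mathbf{v}_j \mapsto \mathbf{w}(\mathbf{v}_j)$. Linearity of $\phi$ on rank-one matrices then gives $\phi(\mathbf{a}\mathbf{v}) = A\mathbf{a} \cdot B^\prime(\mathbf{v})$ for all $\mathbf{a} \in \mathbb{F}^m$ and $\mathbf{v} \in \mathcal{L}$; since $\mathcal{V}_\mathcal{L}$ is spanned by such rank-one matrices, this extends to $\phi(C) = ACB$ on all of $\mathcal{V}$, where $B \in \mathbb{F}^{n \times n^\prime}$ is any full-rank extension of $B^\prime$ from $\mathcal{L}$ to $\mathbb{F}^n$.
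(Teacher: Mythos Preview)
Your proof is correct and follows the same overall architecture as the paper's proof in Appendix~\ref{app 1}: both establish the cycle through the easy implications and then concentrate on the hard direction (P~2) $\Longrightarrow$ (P~1) by using (P~2) on minimal (dimension-$m$) rank support spaces $\mathcal{V}_{\langle \mathbf{v}\rangle}$ to obtain a family of invertible ``column maps'' $\psi_\mathbf{v}$ (the paper's $A^{(j)}$), showing these are pairwise proportional, and then patching them into a single $A$ and a row-map $B$.

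There are two genuine simplifications in your route relative to the paper. First, you prove (P~1) $\Longrightarrow$ (P~3) directly via ${\rm RSupp}(\phi(\mathcal{D})) = {\rm RSupp}(\mathcal{D})\cdot B$, whereas the paper detours through (P~2). Second, and more interestingly, your argument for the proportionality of the $\psi_\mathbf{v}$ is purely additive: you expand $\phi(\mathbf{a}(\mathbf{v}_1+\mathbf{v}_2))$ and compare coefficients against the linearly independent $\mathbf{w}(\mathbf{v}_1),\mathbf{w}(\mathbf{v}_2)$. The paper instead first reduces to $\mathcal{V}=\mathcal{W}=\mathbb{F}^{m\times n}$ and then invokes the rank-isometry property (P~4) on matrices such as $E_{i,j}+E_{i,1}$ and $\sum_{i,j}E_{i,j}$ to force the analogous proportionality. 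Your approach avoids both the reduction step and any appeal to (P~4), which makes the logical structure of (P~2) $\Longrightarrow$ (P~1) slightly cleaner; the paper's rank-based argument, on the other hand, is closer to the classical Frobenius-map literature it cites. Two small points you leave implicit but which go through: the full linear independence of $\mathbf{w}(\mathbf{v}_1),\ldots,\mathbf{w}(\mathbf{v}_k)$ (needed for injectivity of $B'$) follows by applying (P~2) to $\mathcal{V}_{\langle \mathbf{v}_1,\ldots,\mathbf{v}_k\rangle}=\mathcal{V}$ itself and comparing dimensions, and the existence of a full-rank extension $B\in\mathbb{F}^{n\times n'}$ of the injective $B'$ is a routine linear-algebra fact in either of the cases $n\le n'$ or $n>n'$.
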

\begin{proof}
See Appendix \ref{app 1}.
\end{proof}

\begin{remark}
Unfortunately, the implication $ (\textrm{P 3}) \Longleftarrow (\textrm{P 4}) $ does not always hold. Take for instance $ m=n $ and the map $ \phi : \mathbb{F}^{m \times m} \longrightarrow \mathbb{F}^{m \times m} $ given by $ \phi(C) = C^T $, for all $ C \in \mathbb{F}^{m \times m} $. 
\end{remark}

\begin{remark} \label{remark sec equivalences preserve RGMWs}
Observe that{\color{black}, in particular,} security equivalences {\color{black}also} preserve (relative) generalized matrix weights{\color{black}, (relative) dimension/rank support profiles} and distributions of rank weights of vector subspaces, and they are the only rank isometries with these properties.
\end{remark}

{\color{black}Property (P 1) will be useful for technical computations and, in particular, for Proposition \ref{prop motivating security equivalence} below. As explained in Appendix \ref{app 1}, (P 2) allows us to connect (P 1) with (P 3), and (P 3) allows us to connect the first two with the rank metric (P 4), crucial for error and erasure correction as in Subsection \ref{subsection secure communication}. Finally, Property (P 2) also explains why we will consider security equivalences defined between rank support spaces, and intuitively explains that such spaces behave as ambient spaces in our theory, as mentioned in Subsection \ref{subsec rank supports}.  }

\begin{definition} [\textbf{Security equivalences}] \label{def security equivalences}
We say that a map $ \phi : \mathcal{V} \longrightarrow \mathcal{W} $ between rank support spaces $ \mathcal{V} \in RS(\mathbb{F}^{m \times n}) $ and $ \mathcal{W} \in RS(\mathbb{F}^{m \times n^\prime}) $ is a security equivalence if it is a vector space isomorphism and {\color{black}satisfies condition (P 1), (P 2) or (P 3) in Theorem \ref{theorem characterization}.}

Two nested linear code pairs $ \mathcal{C}_2 \subsetneqq \mathcal{C}_1 \subseteq \mathbb{F}^{m \times n} $ and $ \mathcal{C}^\prime_2 \subsetneqq \mathcal{C}^\prime_1 \subseteq \mathbb{F}^{m \times n^\prime} $ are said to be security equivalent if there exist rank support spaces $ \mathcal{V} \in RS(\mathbb{F}^{m \times n}) $ and $ \mathcal{W} \in RS(\mathbb{F}^{m \times n^\prime}) $, containing $ \mathcal{C}_1 $ and $ \mathcal{C}^\prime_1 $, respectively, and a security equivalence $ \phi : \mathcal{V} \longrightarrow \mathcal{W} $ with $ \phi(\mathcal{C}_1) = \mathcal{C}^\prime_1 $ and $ \phi(\mathcal{C}_2) = \mathcal{C}^\prime_2 $.
\end{definition}

{\color{black}We now motivate the previous definition with the next proposition, which makes use of Theorem \ref{theorem characterization}. Observe that Remark \ref{remark sec equivalences preserve RGMWs} above already shows that security equivalences preserve the worst-case information leakage as described in Theorem \ref{maximum secrecy}. Now, given nested linear code pairs $ \mathcal{C}_2 \subsetneqq \mathcal{C}_1 \subseteq \mathbb{F}_q^{m \times n} $ and $ \mathcal{C}^\prime_2 \subsetneqq \mathcal{C}^\prime_1 \subseteq \mathbb{F}_q^{m \times n^\prime} $, Proposition \ref{prop motivating security equivalence} below shows that if the dual pairs are security equivalent, then there exists a bijective correspondence between wire-tappers' transfer matrices (matrix $ B $ in Subsection \ref{subsection secure communication}, item 2) that preserves the mutual information with the original sent message. If the original pairs are also security equivalent, we conclude that encoding with $ \mathcal{C}_2 \subsetneqq \mathcal{C}_1 \subseteq \mathbb{F}_q^{m \times n} $ or $ \mathcal{C}^\prime_2 \subsetneqq \mathcal{C}^\prime_1 \subseteq \mathbb{F}_q^{m \times n^\prime} $ yields exactly the same universal error and erasure correction performance, and exactly the same universal security performance over linearly coded networks, as in Subsection \ref{subsection secure communication}.

\begin{proposition} \label{prop motivating security equivalence}
Assume that $ \mathbb{F} = \mathbb{F}_q $ and the dual pairs of $ \mathcal{C}_2 \subsetneqq \mathcal{C}_1 \subseteq \mathbb{F}_q^{m \times n} $ and $ \mathcal{C}^\prime_2 \subsetneqq \mathcal{C}^\prime_1 \subseteq \mathbb{F}_q^{m \times n^\prime} $ are security equivalent by a security equivalence given by matrices $ A \in \mathbb{F}_q^{m \times m} $ and $ B \in \mathbb{F}_q^{n \times n^\prime} $ as in item 1 in Theorem \ref{theorem characterization}. For any matrix $ M \in \mathbb{F}_q^{\mu \times n} $, it holds that
\begin{equation}
I \left( \mathbf{x} ; CM^T \right) = I \left( \mathbf{x} ; C^\prime (M B)^T \right),
\label{eq security equivalences preserve info}
\end{equation}
with notation as in Proposition \ref{information leakage calculation}, where $ C \in \mathbb{F}_q^{m \times n} $ and $ C^\prime \in \mathbb{F}_q^{m \times n^\prime} $ are the encodings of $ \mathbf{x} $ using $ \mathcal{C}_2 \subsetneqq \mathcal{C}_1 \subseteq \mathbb{F}_q^{m \times n} $ and $ \mathcal{C}^\prime_2 \subsetneqq \mathcal{C}^\prime_1 \subseteq \mathbb{F}^{m \times n^\prime} $, respectively.

Furthermore, assuming $ n \leq n^\prime $, the correspondence $ M \mapsto MB $ is one to one and, for any matrix $ N \in \mathbb{F}_q^{\mu \times n^\prime} $, there exists $ M \in \mathbb{F}_q^{\mu \times n} $ such that $ I \left( \mathbf{x} ; C^\prime N^T \right) = I \left( \mathbf{x} ; C^\prime (M B)^T \right) $.
\end{proposition}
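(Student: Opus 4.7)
The plan is to apply Proposition \ref{information leakage calculation} to both mutual informations, reducing the identity (\ref{eq security equivalences preserve info}) to a comparison of dimensions of certain intersections with rank support spaces, and then to transport those intersections through the security equivalence $\phi(X)=AXB$. Writing $\mathcal{L}=\mathrm{Row}(M)$ and $\mathcal{L}'=\mathrm{Row}(MB)=\mathcal{L}B$, Proposition \ref{information leakage calculation} gives
\begin{align*}
I(\mathbf{x};CM^T) &= \dim(\mathcal{C}_2^\perp \cap \mathcal{V}_{\mathcal{L}}) - \dim(\mathcal{C}_1^\perp \cap \mathcal{V}_{\mathcal{L}}), \\
I(\mathbf{x};C'(MB)^T) &= \dim(\mathcal{C}_2'^\perp \cap \mathcal{V}_{\mathcal{L}'}) - \dim(\mathcal{C}_1'^\perp \cap \mathcal{V}_{\mathcal{L}'}).
\end{align*}
Let $\mathcal{V}=\mathcal{V}_{\mathcal{L}_0}$ and $\mathcal{W}=\mathcal{V}_{\mathcal{L}_0'}$ be the rank support spaces realising the security equivalence. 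Since $\mathcal{C}_i^\perp\subseteq\mathcal{V}$ and $\mathcal{V}_\mathcal{L}\cap\mathcal{V}_{\mathcal{L}_0}=\mathcal{V}_{\mathcal{L}\cap\mathcal{L}_0}$ (an easy consequence of Lemma \ref{basic lemma matrix modules}), I may replace $\mathcal{L}$ by $\mathcal{L}^*:=\mathcal{L}\cap\mathcal{L}_0$ on the left, and analogously replace $\mathcal{L}'$ by $\mathcal{L}'\cap\mathcal{L}_0'$ on the right, without changing the intersections that appear.

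The core step is to establish $\phi(\mathcal{V}_{\mathcal{L}^*}) = \mathcal{V}_{\mathcal{L}^* B}$. The inclusion $\subseteq$ follows from the row-space identity $\mathrm{Row}(\phi(X))=\mathrm{Row}(AXB)=\mathrm{Row}(X)\cdot B$ (using that $A$ is invertible), while equality follows from a dimension count: $\phi:\mathcal{V}\to\mathcal{W}$ being bijective forces $B$ to restrict to a linear isomorphism $\mathcal{L}_0\xrightarrow{\sim}\mathcal{L}_0'$, so $\dim\mathcal{V}_{\mathcal{L}^* B}=m\dim\mathcal{L}^*=\dim\phi(\mathcal{V}_{\mathcal{L}^*})$. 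Combining with $\phi(\mathcal{C}_i^\perp)=\mathcal{C}_i'^\perp$ and bijectivity of $\phi$ on intersections yields $\dim(\mathcal{C}_i^\perp \cap \mathcal{V}_{\mathcal{L}^*}) = \dim(\mathcal{C}_i'^\perp \cap \mathcal{V}_{\mathcal{L}^* B})$. The main obstacle is the identification $\mathcal{V}_{\mathcal{L}^* B}\cap\mathcal{C}_i'^\perp = \mathcal{V}_{\mathcal{L}'}\cap\mathcal{C}_i'^\perp$, which by $\mathcal{C}_i'^\perp\subseteq\mathcal{V}_{\mathcal{L}_0'}$ reduces to the set-theoretic equality $\mathcal{L}^* B = \mathcal{L}'\cap\mathcal{L}_0'$; one inclusion is immediate, and the reverse requires $B$ to be injective on $\mathbb{F}_q^n$, which holds under $n\leq n'$ (full-rank of $B\in\mathbb{F}_q^{n\times n'}$ meaning rank $n$). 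Subtracting the two dimension equalities then delivers (\ref{eq security equivalences preserve info}).

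For the ``Furthermore'' part, injectivity of $M\mapsto MB$ is immediate from $\ker B=\{0\}$: if $MB=0$ then every row of $M$ lies in $\ker B$, forcing $M=0$. For the second assertion, the same reduction as above shows that $I(\mathbf{x};C'N^T)$ depends on $N$ only through $\tilde{\mathcal{L}}:=\mathrm{Row}(N)\cap\mathcal{L}_0'$, since $\mathcal{C}_i'^\perp\subseteq\mathcal{V}_{\mathcal{L}_0'}$. I will pull $\tilde{\mathcal{L}}$ back under the isomorphism $B|_{\mathcal{L}_0}:\mathcal{L}_0\to\mathcal{L}_0'$ to obtain $\mathcal{L}''\subseteq\mathcal{L}_0$ with $\mathcal{L}''B=\tilde{\mathcal{L}}$, and take $M\in\mathbb{F}_q^{\mu\times n}$ whose row space equals $\mathcal{L}''$ (padding with zero rows if necessary). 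Then $\mathrm{Row}(MB)\cap\mathcal{L}_0'=\tilde{\mathcal{L}}=\mathrm{Row}(N)\cap\mathcal{L}_0'$, so Proposition \ref{information leakage calculation} produces the same dimension differences for $I(\mathbf{x};C'(MB)^T)$ and $I(\mathbf{x};C'N^T)$, completing the proof.
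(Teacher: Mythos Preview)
Your approach is essentially the same as the paper's: apply Proposition \ref{information leakage calculation} to both sides and transport the resulting rank support spaces through the security equivalence $\phi(C)=ACB$. The paper extends $\phi$ by its formula to all of $\mathbb{F}_q^{m\times n}$ and claims directly that $\phi(\mathcal{V}_\mathcal{L})=\mathcal{V}_{\mathcal{L}'}$ and $\phi(\mathcal{C}_i^\perp)\cap\mathcal{V}_{\mathcal{L}'}=\phi(\mathcal{C}_i^\perp\cap\mathcal{V}_\mathcal{L})$, whereas you first intersect $\mathcal{L}$ with $\mathcal{L}_0$ so that everything stays inside the domain on which $\phi$ is genuinely bijective. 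This extra care is justified and actually sharpens the paper's argument.

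You correctly flag that the identification $\mathcal{L}^* B=\mathcal{L}'\cap\mathcal{L}_0'$ (equivalently, the paper's intersection identity) needs $B$ to be injective on $\mathbb{F}_q^n$, i.e.\ $n\le n'$. This is not a defect of your argument but a genuine subtlety: if $n>n'$ the extended $\phi$ is not injective and $\phi(X)\cap\phi(Y)$ can strictly contain $\phi(X\cap Y)$; in that regime Equation (\ref{eq security equivalences preserve info}) need not hold (take $m=1$, $n=3$, $n'=1$, $\mathcal{L}_0=\langle e_1\rangle$, $B=e_1^T$, $\mathcal{C}_1^\perp=0$, $\mathcal{C}_2^\perp=\langle e_1\rangle$, $M=(1,1,0)$). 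The paper's proof is silent on this point, so your proof is at least as complete as the paper's, and your explicit restriction to $n\le n'$ is the honest thing to do. Your treatment of the ``Furthermore'' part (injectivity of $M\mapsto MB$ and pulling $\mathrm{Row}(N)\cap\mathcal{L}_0'$ back through $B|_{\mathcal{L}_0}$) is equivalent to the paper's use of $\mathcal{K}=\mathrm{Row}(B)$, just with $\mathcal{L}_0'$ in place of $\mathcal{K}$; since $\mathcal{L}_0'\subseteq\mathcal{K}$, the two bookkeeping choices produce the same $M$.
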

\begin{proof}
Denote by $ \phi $ the security equivalence. Take a matrix $ M \in \mathbb{F}_q^{\mu \times n} $, define $ \mathcal{L} = {\rm Row}(M) \subseteq \mathbb{F}_q^n $ and $ \mathcal{L}^\prime = {\rm Row}(M B) \subseteq \mathbb{F}_q^{n^\prime} $. Then $ \phi(\mathcal{V}_\mathcal{L}) = \mathcal{V}_{\mathcal{L}^\prime} $ and
$$ \dim (\phi(\mathcal{C}_1^\perp) \cap \mathcal{V}_{\mathcal{L}^\prime}) = \dim (\phi(\mathcal{C}_1^\perp \cap \mathcal{V}_\mathcal{L})) = \dim(\mathcal{C}_1^\perp \cap \mathcal{V}_\mathcal{L}), $$ 
and similarly for $ \mathcal{C}_2 $. Thus Equation (\ref{eq security equivalences preserve info}) follows from Proposition \ref{information leakage calculation}.

Observe that we may assume $ n \leq n^\prime $ without loss of generality, since the inverse of a security equivalence is a security equivalence. Thus the injectivity of $ M \mapsto MB $ follows from the fact that $ B $ has full rank. 

Finally, if $ N \in \mathbb{F}_q^{\mu \times n^\prime} $, $ \mathcal{L} = {\rm Row}(N) $ and $ \mathcal{K} = {\rm Row}(B) $, then $ \mathcal{C}_1^\perp \subseteq \mathcal{V}_\mathcal{K} $ and
$$ \mathcal{C}_1^\perp \cap \mathcal{V}_\mathcal{L} = \mathcal{C}_1^\perp \cap ( \mathcal{V}_\mathcal{L} \cap \mathcal{V}_\mathcal{K}), $$
and similarly for $ \mathcal{C}_2^\perp $. Since $ \mathcal{V}_\mathcal{L} \cap \mathcal{V}_\mathcal{K} = \mathcal{V}_{\mathcal{L} \cap \mathcal{K}} $ and $ \mathcal{L} \cap \mathcal{K} = {\rm Row}(MB) $ for a matrix $ M \in \mathbb{F}_q^{\mu \times n} $, the last statement follows again from Proposition \ref{information leakage calculation}.
\end{proof}
 }

The topic of {\color{black}vector space isomorphisms} $ \phi : \mathbb{F}^{m \times n} \longrightarrow \mathbb{F}^{m \times n} $ preserving some specified property has been intensively studied in the literature {\color{black}(see also Table \ref{table isomorphism charact})}, where the term \textit{Frobenius map} is generally used for maps of the form of those in Proposition \ref{morrison proposition}.

When $ m=n $, it is proven in \cite[Theorem 3]{dieudonne} that Frobenius maps are characterized by being those preserving invertible matrices and in \cite{marcus} they are characterized by being those preserving ranks {\color{black}(this is extended to $ m \neq n $ in \cite[Proposition 3]{morrison})}, those preserving determinants and those preserving eigenvalues. 

{\color{black}On the other hand, \cite[Theorem 1]{berger} shows that $ \mathbb{F}_{q^m} $-linear vector space isomorphisms $ \phi : \mathbb{F}_{q^m}^n \longrightarrow \mathbb{F}_{q^m}^n $ preserving ranks are given by $ \phi(\mathbf{c}) = \beta \mathbf{c} A $, for $ \beta \in \mathbb{F}_{q^m} \setminus \{ 0 \} $ and an invertible $ A \in \mathbb{F}_q^{n \times n} $. This is extended in \cite[Theorem 5]{similarities} to $ \mathbb{F}_{q^m} $-linear vector space isomorphisms whose domain and codomain are $ \mathbb{F}_{q^m} $-linear Galois closed spaces in $ \mathbb{F}_{q^m}^n $, which correspond to rank support spaces in $ \mathbb{F}_q^{m \times n} $ (see Lemma \ref{lemma matrix modules are galois} below). 

Therefore, we extend these works in three directions simultaneously: First, we consider the stronger properties (P 1), (P 2) and (P 3) than those considered in \cite{berger, dieudonne, marcus, morrison}, which are essentially (P 4). Second, we extend the domains and codomains from $ \mathbb{F}^{m \times n} $ to general rank support spaces whose matrices do not necessarily have the same sizes. Finally, in the case $ \mathbb{F} = \mathbb{F}_q $, we consider general $ \mathbb{F}_q $-linear maps, instead of the particular case of $ \mathbb{F}_{q^m} $-linear maps as in \cite{berger, similarities}.
}

\subsection{Minimum parameters of linear codes} \label{subsec minimum parameters}

{\color{black}As main application of the previous subsection, we study in this subsection} the minimum parameters $ m $ and $ n $ for which there exists a linear code that is security equivalent to a given one. Recall from Subsection \ref{subsec linear network model} that $ m $ corresponds to the packet length used in the network, and $ n $ corresponds to the number of outgoing links from the source. 

{\color{black}Both cases of one linear code, that is $ \mathcal{C}_2 = \{ 0 \} $ and $ \mathcal{C}_1 = \mathbb{F}^{m \times n} $, are covered since they are dual of each other (see also Remark \ref{remark noiseless coset} and Appendix \ref{app duality theory}). Since security equivalences are rank isometries by Theorem \ref{theorem characterization}, in the first case we find minimum parameters for error and erasure correction, and in the second case we find minimum parameters for universal security on noiseless linearly coded networks. }

\begin{proposition} \label{minimum length proposition}
Fix a linear code $ \mathcal{C} \subseteq \mathbb{F}^{m \times n} $ of dimension $ k $. There exists a linear code $ \mathcal{C}^\prime \subseteq \mathbb{F}^{m \times n^\prime} $ that is security equivalent to $ \mathcal{C} $ if, and only if, $ n^\prime \geq d_{M,k}(\mathcal{C}) $.
\end{proposition}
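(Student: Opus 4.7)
The plan is to first reduce the claim to a statement about rank supports and then construct an explicit security equivalence. Since $\dim \mathcal{C} = k$, the only $k$-dimensional subspace $\mathcal{D} \subseteq \mathcal{C}$ appearing in Proposition \ref{proposition as minimum rank weights} is $\mathcal{C}$ itself (and the condition $\mathcal{D} \cap \{0\} = \{0\}$ is vacuous), so $d_{M,k}(\mathcal{C}) = {\rm wt_R}(\mathcal{C}) = \dim({\rm RSupp}(\mathcal{C}))$. Setting $s := {\rm wt_R}(\mathcal{C})$ and $\mathcal{L} := {\rm RSupp}(\mathcal{C}) \subseteq \mathbb{F}^n$, so that $\mathcal{C} \subseteq \mathcal{V}_\mathcal{L}$, the proposition becomes the equivalence: a security-equivalent code $\mathcal{C}^\prime \subseteq \mathbb{F}^{m \times n^\prime}$ exists if and only if $n^\prime \geq s$.

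The necessity direction is immediate. If $\phi : \mathcal{V} \to \mathcal{W}$ is a security equivalence with $\mathcal{W} \in RS(\mathbb{F}^{m \times n^\prime})$ and $\phi(\mathcal{C}) = \mathcal{C}^\prime$, then property (P 3) of Theorem \ref{theorem characterization} yields ${\rm wt_R}(\mathcal{C}^\prime) = {\rm wt_R}(\mathcal{C}) = s$. Since ${\rm RSupp}(\mathcal{C}^\prime) \subseteq \mathbb{F}^{n^\prime}$, we conclude $n^\prime \geq s$.

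For sufficiency, assume $n^\prime \geq s$, and I would construct a security equivalence of the form $\phi(C) = CB$, invoking property (P 1) of Theorem \ref{theorem characterization} with $A = I_m$, mapping $\mathcal{V}_\mathcal{L}$ onto a suitable rank support space inside $\mathbb{F}^{m \times n^\prime}$. Choose a basis $\mathbf{b}_1, \ldots, \mathbf{b}_s$ of $\mathcal{L}$, extend it to a basis $\mathbf{b}_1, \ldots, \mathbf{b}_n$ of $\mathbb{F}^n$, and pick linearly independent vectors $\mathbf{b}^\prime_1, \ldots, \mathbf{b}^\prime_s \in \mathbb{F}^{n^\prime}$ (possible since $s \leq n^\prime$). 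Define $B \in \mathbb{F}^{n \times n^\prime}$ by $\mathbf{b}_i B = \mathbf{b}^\prime_i$ for $1 \leq i \leq s$, and on the remaining basis vectors choose images so that $B$ has rank $\min(n, n^\prime)$: if $n \leq n^\prime$, extend $\mathbf{b}^\prime_1, \ldots, \mathbf{b}^\prime_s$ to a linearly independent family $\mathbf{b}^\prime_1, \ldots, \mathbf{b}^\prime_n$ in $\mathbb{F}^{n^\prime}$ and set $\mathbf{b}_i B = \mathbf{b}^\prime_i$; if $n > n^\prime$, extend up to index $n^\prime$ to form a basis of $\mathbb{F}^{n^\prime}$ and send $\mathbf{b}_i$ to $0$ for $i > n^\prime$. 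Then $\phi(C) := CB$ maps $\mathcal{V}_\mathcal{L}$ into $\mathcal{V}_{\mathcal{L}^\prime}$, where $\mathcal{L}^\prime := \langle \mathbf{b}^\prime_1, \ldots, \mathbf{b}^\prime_s \rangle \subseteq \mathbb{F}^{n^\prime}$; it is injective because $B$ restricted to $\mathcal{L}$ is, and thus bijective because both spaces have dimension $ms$ by (\ref{dimension matrix modules}). Since $\phi$ satisfies (P 1) by construction, it is a security equivalence, and $\mathcal{C}^\prime := \phi(\mathcal{C}) \subseteq \mathbb{F}^{m \times n^\prime}$ is security equivalent to $\mathcal{C}$.

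The only technical point worth highlighting is the construction of $B$: one must ensure simultaneously that $B$ has full rank $\min(n, n^\prime)$ (as required by (P 1)) and that $B$ restricts to an isomorphism $\mathcal{L} \to \mathcal{L}^\prime$, and the case split $n \leq n^\prime$ versus $n > n^\prime$ is essentially the only bookkeeping required.
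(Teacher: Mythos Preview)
Your proof is correct and follows essentially the same approach as the paper. Both arguments identify a rank support space $\mathcal{V}_\mathcal{L}$ containing $\mathcal{C}$ with $\dim(\mathcal{L}) = d_{M,k}(\mathcal{C})$ and build a security equivalence of the form $C \mapsto CB$; the only cosmetic differences are that you make the identity $d_{M,k}(\mathcal{C}) = {\rm wt_R}(\mathcal{C})$ explicit via Proposition~\ref{proposition as minimum rank weights} and construct $B$ in one step with a case split on $n$ versus $n^\prime$, whereas the paper factors the map through $\mathbb{F}^{m \times d}$ (using a right inverse of a generator matrix of $\mathcal{L}$) and then appends zero columns, thereby avoiding the case split.
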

\begin{proof}
First, if $ \mathcal{C}^\prime \subseteq \mathbb{F}^{m \times n^\prime} $ is security equivalent to $ \mathcal{C} $, then $ \dim(\mathcal{C}^\prime) = k $ and $ d_{M,k}(\mathcal{C}) = d_{M,k}(\mathcal{C}^\prime) \leq n^\prime $.

On the other hand, assume that $ n^\prime \geq d_{M,k}(\mathcal{C}) $. Take a subspace $ \mathcal{L} \subseteq \mathbb{F}^n $ with $ d = \dim(\mathcal{L}) = d_{M,k}(\mathcal{C}) $ and $ \dim(\mathcal{C} \cap \mathcal{V}_\mathcal{L}) \geq k $, which implies that $ \mathcal{C} \subseteq \mathcal{V}_\mathcal{L} $. Take a generator matrix $ A \in \mathbb{F}^{d \times n} $ of $ \mathcal{L} $. There exists a full-rank matrix $ A^\prime \in \mathbb{F}^{n \times d} $ such that $ A A^\prime = I \in \mathbb{F}^{d \times d} $. 

The linear map $ \phi : \mathcal{V}_\mathcal{L} \longrightarrow \mathbb{F}^{m \times d} $, given by $ \phi(V) = V A^\prime $, for $ V \in \mathcal{V}_\mathcal{L} $, is a vector space isomorphism. By dimensions, we just need to see that it is onto. Take $ W \in \mathbb{F}^{m \times d} $. It holds that $ W = WI = WAA^\prime = \phi(WA) $, and $ WA \in \mathcal{V}_\mathcal{L} $ by definition. 

On the other hand, $ \phi $ is a security equivalence by Theorem \ref{theorem characterization}. Therefore $ \phi(\mathcal{C}) \subseteq \mathbb{F}^{m \times d} $ is security equivalent to $ \mathcal{C} $. Finally, we see that appending $ n^\prime - d $ zero columns to the matrices in $ \phi(\mathcal{C}) $ gives a security equivalent code to $ \mathcal{C} $ in $ \mathbb{F}^{m \times n^\prime} $.
\end{proof} 

By transposing matrices, we obtain the following consequence{\color{black}, where we consider linear codes that are rank isometric to a given one. By \cite[Theorem 9]{similarities}, such equivalent codes perform equally when used for error and erasure correction, and by Theorem \ref{maximum secrecy} and Corollary \ref{minimum rank distance}, they perform equally regarding the maximum number of links that an adversary may wire-tap without obtaining any information on noiseless networks. } 

\begin{corollary}
For a linear code $ \mathcal{C} \subseteq \mathbb{F}^{m \times n} $, define the transposed linear code 
$$ \mathcal{C}^T = \{ C^T \mid C \in \mathcal{C} \} \subseteq \mathbb{F}^{n \times m}. $$
If $ m^\prime \geq d_{M,k}(\mathcal{C}^T) $, where $ k = \dim(\mathcal{C}) $, then there exists a linear code $ \mathcal{C}^\prime \subseteq \mathbb{F}^{m^\prime \times n} $ that is rank isometric to $ \mathcal{C} $.
\end{corollary}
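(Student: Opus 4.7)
The plan is to reduce the statement to Proposition \ref{minimum length proposition} via transposition, since Proposition \ref{minimum length proposition} changes the column dimension while keeping the row dimension fixed, whereas here we want to change the row dimension from $m$ to $m^\prime$ while keeping the column dimension $n$ fixed.

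First, I would observe that the transposition map $\psi : \mathbb{F}^{m \times n} \longrightarrow \mathbb{F}^{n \times m}$ given by $\psi(C) = C^T$ is an $\mathbb{F}$-linear vector space isomorphism that preserves rank, hence a rank isometry. In particular, $\mathcal{C}^T \subseteq \mathbb{F}^{n \times m}$ is a linear code of dimension $k$.

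Next, I would apply Proposition \ref{minimum length proposition} to $\mathcal{C}^T$, viewed as a subspace of $\mathbb{F}^{n \times m}$ (so the role of ``$m$'' in that proposition is played by $n$, and the role of ``$n$'' is played by $m$). The hypothesis $m^\prime \geq d_{M,k}(\mathcal{C}^T)$ is exactly the input required, so the proposition produces a linear code $\mathcal{D} \subseteq \mathbb{F}^{n \times m^\prime}$ that is security equivalent to $\mathcal{C}^T$. By Theorem \ref{theorem characterization}, this security equivalence is in particular a rank isometry. Setting $\mathcal{C}^\prime = \mathcal{D}^T \subseteq \mathbb{F}^{m^\prime \times n}$ and chaining the rank isometries $\mathcal{C} \xrightarrow{\psi} \mathcal{C}^T \xrightarrow{\phi} \mathcal{D} \xrightarrow{\psi^{-1}} \mathcal{D}^T = \mathcal{C}^\prime$ yields the desired conclusion.

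The only delicate point, and the reason the statement weakens ``security equivalent'' in Proposition \ref{minimum length proposition} to ``rank isometric'' here, is that transposition itself is generally only a rank isometry and not a security equivalence — as highlighted in the remark following Theorem \ref{theorem characterization} with the counterexample $\phi(C) = C^T$ in the square case. Hence the composition of the two transpositions with the intermediate security equivalence cannot in general be promoted beyond a rank isometry, but this is all that is claimed.
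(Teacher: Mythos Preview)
Your proposal is correct and takes essentially the same approach as the paper, which simply cites Theorem \ref{theorem characterization} and Proposition \ref{minimum length proposition} without spelling out the transposition sandwich. Your explanation of why the conclusion degrades from ``security equivalent'' to ``rank isometric'' is a nice addition that the paper leaves implicit.
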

\begin{proof}
It follows from Theorem \ref{theorem characterization} and Proposition \ref{minimum length proposition}.
\end{proof}

{\color{black}As a related result, \cite[Proposition 3]{similarities} computes the minimum parameter $ n $ for which there exists an $ \mathbb{F}_{q^m} $-linear code $ \mathcal{C} \subseteq \mathbb{F}_{q^m}^n $ that is rank isometric to a given one. In contrast, we consider both parameters $ m $ and $ n $, we consider security equivalences for the parameter $ n $, and not only rank isometries, and as the biggest difference with \cite{similarities}, we consider general linear codes, and not only $ \mathbb{F}_{q^m} $-linear codes in $ \mathbb{F}_{q^m}^n $. }

\subsection{Degenerate codes} \label{subsec degenerate}

In this subsection, we study degenerate codes, which by the study in the previous subsection, can be applied to networks with less outgoing links or, by transposing matrices, with smaller packet length. {\color{black} Degenerateness of codes in the rank metric has been studied in \cite[Section 6]{slides} and \cite[Subsection IV-B]{similarities}, but only for $ \mathbb{F}_{q^m} $-linear codes in $ \mathbb{F}_{q^m}^n $. We extend those studies to general linear codes in $ \mathbb{F}^{m \times n} $. }

\begin{definition} [\textbf{Degenerate codes}]
We say that a linear code $ \mathcal{C} \subseteq \mathbb{F}^{m \times n} $ is degenerate if it is security equivalent to a linear code $ \mathcal{C}^\prime \subseteq \mathbb{F}^{m \times n^\prime} $ with $ n^\prime < n $.
\end{definition}

The following lemma follows from Proposition \ref{minimum length proposition}:

\begin{lemma}
A linear code $ \mathcal{C} \subseteq \mathbb{F}^{m \times n} $ is degenerate if, and only if, $ d_{M,k}(\mathcal{C}) < n $, where $ k = \dim(\mathcal{C}) $.
\end{lemma}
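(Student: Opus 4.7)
The plan is to derive this lemma as a direct corollary of Proposition \ref{minimum length proposition}, which characterizes the existence of security equivalent codes in $\mathbb{F}^{m \times n^\prime}$ purely in terms of $d_{M,k}(\mathcal{C})$.

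First I would unwind the definition: $\mathcal{C}$ is degenerate precisely when there exists some $n^\prime < n$ and some linear code $\mathcal{C}^\prime \subseteq \mathbb{F}^{m \times n^\prime}$ security equivalent to $\mathcal{C}$. By Proposition \ref{minimum length proposition}, such a $\mathcal{C}^\prime$ exists for a given $n^\prime$ if and only if $n^\prime \geq d_{M,k}(\mathcal{C})$. Therefore, degenerateness is equivalent to the existence of an integer $n^\prime$ satisfying $d_{M,k}(\mathcal{C}) \leq n^\prime < n$, which in turn holds if and only if $d_{M,k}(\mathcal{C}) < n$.

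For the ``if'' direction, given $d_{M,k}(\mathcal{C}) < n$, I would take $n^\prime = d_{M,k}(\mathcal{C})$ and invoke Proposition \ref{minimum length proposition} to produce a security equivalent $\mathcal{C}^\prime \subseteq \mathbb{F}^{m \times n^\prime}$, witnessing degenerateness. For the ``only if'' direction, if $\mathcal{C}$ is degenerate then there is a security equivalent $\mathcal{C}^\prime \subseteq \mathbb{F}^{m \times n^\prime}$ with $n^\prime < n$, and Proposition \ref{minimum length proposition} applied in the reverse direction gives $d_{M,k}(\mathcal{C}) \leq n^\prime < n$.

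There is no real obstacle here, since the lemma is essentially a rephrasing of Proposition \ref{minimum length proposition} in contrapositive form (one extreme of the allowed range of $n^\prime$). The only thing worth being explicit about is that $d_{M,k}(\mathcal{C})$ is a well-defined integer with $d_{M,k}(\mathcal{C}) \leq n$, obtained by taking $\mathcal{L} = \mathbb{F}^n$ in Definition \ref{def RGMW}, so the dichotomy $d_{M,k}(\mathcal{C}) < n$ versus $d_{M,k}(\mathcal{C}) = n$ is exhaustive and the argument above is complete.
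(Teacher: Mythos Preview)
Your proposal is correct and matches the paper's approach exactly: the paper simply states that this lemma follows from Proposition~\ref{minimum length proposition}, and your argument spells out precisely that derivation.
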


Now we may give characterizations in terms of the minimum rank distance of the dual code thanks to Proposition \ref{prop duality theorem} {\color{black}in Appendix \ref{app duality theory}}.

\begin{proposition}
Given a linear code $ \mathcal{C} \subseteq \mathbb{F}^{m \times n} $, the following hold:
\begin{enumerate}
\item
Assuming $ \dim(\mathcal{C}^\perp) \geq m $, $ \mathcal{C} $ is degenerate if, and only if, $ d_{M,m}(\mathcal{C}^\perp) = 1 $.
\item
If $ d_R(\mathcal{C}^\perp) > 1 $, then $ \mathcal{C} $ is not degenerate.
\end{enumerate}
\end{proposition}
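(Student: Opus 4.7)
The plan is to translate degeneracy of $\mathcal{C}$ into a containment statement about rank support spaces inside $\mathcal{C}^\perp$, using the duality identity $(\mathcal{V}_\mathcal{L})^\perp = \mathcal{V}_{\mathcal{L}^\perp}$ from Lemma \ref{lemma dual rank support} together with the preceding lemma of the excerpt. Concretely, letting $k = \dim(\mathcal{C})$, the code $\mathcal{C}$ is degenerate iff $d_{M,k}(\mathcal{C}) < n$, iff there exists $\mathcal{L} \subseteq \mathbb{F}^n$ with $\dim(\mathcal{L}) < n$ and $\dim(\mathcal{C} \cap \mathcal{V}_\mathcal{L}) \geq k$; since $\mathcal{C} \subseteq \mathcal{V}_\mathcal{L}$ is forced in that case, duality rewrites this as: there exists $\mathcal{L}' \subseteq \mathbb{F}^n$ with $\dim(\mathcal{L}') \geq 1$ such that $\mathcal{V}_{\mathcal{L}'} \subseteq \mathcal{C}^\perp$, by taking $\mathcal{L}' = \mathcal{L}^\perp$.

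For item 1, the key observation is that when $\dim(\mathcal{L}') = 1$ we have $\dim(\mathcal{V}_{\mathcal{L}'}) = m$ by (\ref{dimension matrix modules}), so $\dim(\mathcal{C}^\perp \cap \mathcal{V}_{\mathcal{L}'}) \geq m$ is equivalent to $\mathcal{V}_{\mathcal{L}'} \subseteq \mathcal{C}^\perp$. Hence $d_{M,m}(\mathcal{C}^\perp) = 1$ (which is well defined since $\dim(\mathcal{C}^\perp) \geq m$) if and only if $\mathcal{V}_{\mathcal{L}'} \subseteq \mathcal{C}^\perp$ for some $1$-dimensional $\mathcal{L}' \subseteq \mathbb{F}^n$. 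The ``if'' direction of item 1 is then immediate from the reformulation; the ``only if'' direction follows by choosing any $1$-dimensional subspace of the positive-dimensional $\mathcal{L}'$ provided by degeneracy.

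For item 2, I would argue by contrapositive. If $\mathcal{C}$ is degenerate, the reformulation yields $\mathcal{L}' \subseteq \mathbb{F}^n$ with $\dim(\mathcal{L}') \geq 1$ and $\mathcal{V}_{\mathcal{L}'} \subseteq \mathcal{C}^\perp$; picking any nonzero $\mathbf{v} \in \mathcal{L}'$ and the matrix whose first row is $\mathbf{v}$ and whose other rows vanish produces a rank-$1$ element of $\mathcal{C}^\perp$, so $d_R(\mathcal{C}^\perp) = 1$. Equivalently, $d_R(\mathcal{C}^\perp) > 1$ forces $\mathcal{C}$ to be non-degenerate. No real obstacle is expected: the proof is simply a dictionary between ``$\mathcal{C}$ is contained in a proper rank support space'' and ``$\mathcal{C}^\perp$ contains a nonzero rank support space'', mediated by the duality of rank support spaces, with the only mild subtlety being that item 1 requires noticing that the intersection dimension $m$ forces full containment when $\dim(\mathcal{L}') = 1$.
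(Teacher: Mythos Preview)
Your argument is correct and more elementary than the paper's. The paper derives item 1 from the Wei-type duality theorem for GMWs (Proposition \ref{prop duality theorem}): applying it with $p=0$ gives the disjoint decomposition $\{1,\dots,n\}=\overline{W}_k(\mathcal{C})\cup W_0(\mathcal{C}^\perp)$, and then one checks that $1\in\overline{W}_k(\mathcal{C})$ iff $d_{M,k}(\mathcal{C})=n$ and $1\in W_0(\mathcal{C}^\perp)$ iff $d_{M,m}(\mathcal{C}^\perp)=1$. Item 2 is then deduced from item 1 together with the monotonicity $d_{M,1}(\mathcal{C}^\perp)\leq d_{M,m}(\mathcal{C}^\perp)$ (Proposition \ref{monotonicity of RGMW}).

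You bypass the duality theorem entirely: you reformulate degeneracy as ``$\mathcal{C}\subseteq\mathcal{V}_\mathcal{L}$ for some proper $\mathcal{L}$'', dualize this via $(\mathcal{V}_\mathcal{L})^\perp=\mathcal{V}_{\mathcal{L}^\perp}$ (Lemma \ref{lemma dual rank support}) to ``$\mathcal{V}_{\mathcal{L}'}\subseteq\mathcal{C}^\perp$ for some nonzero $\mathcal{L}'$'', and then read off both items directly from the definitions. This is cleaner and self-contained; in particular your proof of item 2 does not need item 1 or monotonicity, and it works without any dimension hypothesis on $\mathcal{C}^\perp$. The paper's route, on the other hand, illustrates the duality theorem in action and gives a uniform viewpoint via the weight sets $W_p$ and $\overline{W}_p$.
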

\begin{proof}
From Proposition \ref{prop duality theorem}, we know that 
$$ \overline{W}_k(\mathcal{C}) \cup W_0(\mathcal{C}^\perp) = \{ 1,2, \ldots, n \}, $$
where the sets on the left-hand side are disjoint, and where $ k = \dim(\mathcal{C}) $. Now, the smallest number in $ \overline{W}_k(\mathcal{C}) $ is $ n+1 - d_{M,k}(\mathcal{C}) $, and the smallest number in $ W_0(\mathcal{C}^\perp) $ is $ d_{M,m}(\mathcal{C}^\perp) $. Item 1 follows from this and the previous lemma. Item 2 follows from item 1 and Proposition \ref{monotonicity of RGMW} {\color{black}in Subsection \ref{subsec monotonicity}}.
\end{proof}

{\color{black}
\section{Monotonicity and Singleton-type bounds} \label{sec singleton}

In this section, we give upper and lower Singleton-type bounds on RGMWs. We start with the monotonicity of RDRPs and RGMWs (Subsection \ref{subsec monotonicity}), which have their own interest, but which are a crucial tool to prove the main bounds (Theorems \ref{th upper singleton} and \ref{lower singleton bound} in Subsection \ref{subsec singleton bounds}). Finally we study linear codes $ \mathcal{C} \subseteq \mathbb{F}^{m \times n} $, meaning $ \mathcal{C}_1 = \mathcal{C} $ and $ \mathcal{C}_2 = \{ 0 \} $, that attain these bounds and whose dimensions are divisible by $ m $ (Subsection \ref{subsec singleton attaining}).

\subsection{Monotonicity of RGMWs and RDRPs} \label{subsec monotonicity}

The monotonicity bounds presented in this subsection are crucial tools for Theorems \ref{th upper singleton} and \ref{lower singleton bound}, but they also have an interpretation in terms of the worst-case information leakage, due to Theorem \ref{maximum secrecy}: An adversary wire-tapping more links in the network will obtain more information in the worst case, and to obtain more information than the worst case for a given number of links, the adversary needs to wire-tap more links. We also bound the corresponding differences.
}

\begin{proposition} [\textbf{Monotonicity of RDRPs}] \label{monotonicity RDRP}
Given nested linear codes $ \mathcal{C}_2 \subsetneqq \mathcal{C}_1 \subseteq \mathbb{F}^{m \times n} $, and $ 0 \leq \mu \leq n-1 $, it holds that $ K_{M,0}(\mathcal{C}_1, \mathcal{C}_2) = 0 $, $ K_{M,n}(\mathcal{C}_1, \mathcal{C}_2) = \dim(\mathcal{C}_1/\mathcal{C}_2) $ and
$$ 0 \leq K_{M,\mu + 1}(\mathcal{C}_1, \mathcal{C}_2) - K_{M,\mu}(\mathcal{C}_1, \mathcal{C}_2) \leq m. $$
\end{proposition}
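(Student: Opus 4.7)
The two boundary values are immediate: for $\mu=0$ the only subspace $\mathcal{L} \subseteq \mathbb{F}^n$ with $\dim(\mathcal{L}) \leq 0$ is the zero space, whose rank support space is $\{0\}$, forcing both intersections to vanish; for $\mu=n$ we may take $\mathcal{L} = \mathbb{F}^n$, so $\mathcal{V}_\mathcal{L} = \mathbb{F}^{m \times n}$, and the intersections become $\mathcal{C}_1$ and $\mathcal{C}_2$ themselves, yielding $\dim(\mathcal{C}_1/\mathcal{C}_2)$. Moreover, the maximum for $\mu$ is taken over a subset of the spaces used for $\mu+1$, which gives the lower bound $K_{M,\mu+1}-K_{M,\mu} \geq 0$.

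The nontrivial point is the upper bound $K_{M,\mu+1}-K_{M,\mu} \leq m$. The plan is to pick a subspace $\mathcal{L} \subseteq \mathbb{F}^n$ of dimension exactly $\mu+1$ attaining the maximum in the definition of $K_{M,\mu+1}(\mathcal{C}_1,\mathcal{C}_2)$ (such an $\mathcal{L}$ exists by the remark following Definition \ref{def RDRP}), and then choose any hyperplane $\mathcal{L}' \subseteq \mathcal{L}$ with $\dim(\mathcal{L}') = \mu$. Then $\mathcal{V}_{\mathcal{L}'} \subseteq \mathcal{V}_\mathcal{L}$, and by the dimension formula (\ref{dimension matrix modules}) in Proposition \ref{prop characterization},
\[
\dim(\mathcal{V}_\mathcal{L}) - \dim(\mathcal{V}_{\mathcal{L}'}) = m(\mu+1) - m\mu = m.
\]

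From here the argument is a short dimension count. The inclusion $\mathcal{V}_{\mathcal{L}'} \subseteq \mathcal{V}_\mathcal{L}$ implies, for $i=1,2$, that the natural injection $(\mathcal{C}_i \cap \mathcal{V}_\mathcal{L})/(\mathcal{C}_i \cap \mathcal{V}_{\mathcal{L}'}) \hookrightarrow \mathcal{V}_\mathcal{L}/\mathcal{V}_{\mathcal{L}'}$ gives
\[
0 \leq \dim(\mathcal{C}_i \cap \mathcal{V}_\mathcal{L}) - \dim(\mathcal{C}_i \cap \mathcal{V}_{\mathcal{L}'}) \leq m.
\]
Combining these for $i=1$ (upper bound $m$) and $i=2$ (lower bound $0$), and using that $K_{M,\mu}(\mathcal{C}_1,\mathcal{C}_2) \geq \dim(\mathcal{C}_1 \cap \mathcal{V}_{\mathcal{L}'}) - \dim(\mathcal{C}_2 \cap \mathcal{V}_{\mathcal{L}'})$ by definition, one rearranges to obtain
\[
K_{M,\mu+1}(\mathcal{C}_1,\mathcal{C}_2) - K_{M,\mu}(\mathcal{C}_1,\mathcal{C}_2) \leq m,
\]
as desired. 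The only mild subtlety is making sure one can take an $\mathcal{L}$ of dimension exactly $\mu+1$ (rather than $\leq \mu+1$), which is precisely what the equivalent reformulation of the max in Definition \ref{def RDRP} (using equality of dimensions) guarantees; once that is granted, the proof is essentially the one-paragraph dimension argument above.
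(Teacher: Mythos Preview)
Your proposal is correct and follows essentially the same approach as the paper: pick an optimal $\mathcal{L}$ for $K_{M,\mu+1}$, pass to a hyperplane $\mathcal{L}' \subseteq \mathcal{L}$, and use the dimension formula $\dim(\mathcal{V}_\mathcal{L}) = m\dim(\mathcal{L})$ to bound the drop in $\dim(\mathcal{C}_1 \cap \mathcal{V}_\mathcal{L})$ by $m$ while the drop in $\dim(\mathcal{C}_2 \cap \mathcal{V}_\mathcal{L})$ is nonnegative. The only cosmetic difference is that you invoke the exact-dimension reformulation of the maximum (from the remark after Definition~\ref{def RDRP}) to guarantee $\dim(\mathcal{L}) = \mu+1$, whereas the paper works directly with $\dim(\mathcal{L}) \leq \mu+1$; both variants yield the same one-line dimension count.
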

\begin{proof}
The only property that is not trivial from the definitions is $ K_{M,\mu + 1}(\mathcal{C}_1, \mathcal{C}_2) - K_{M,\mu}(\mathcal{C}_1, \mathcal{C}_2) \leq m $. Consider $ \mathcal{L} \subseteq \mathbb{F}^n $ with $ \dim(\mathcal{L}) \leq \mu + 1 $ and $ \dim(\mathcal{C}_1 \cap \mathcal{V}_\mathcal{L}) - \dim(\mathcal{C}_2 \cap \mathcal{V}_\mathcal{L}) = K_{M,\mu + 1}(\mathcal{C}_1, \mathcal{C}_2) $. 

Take $ \mathcal{L}^\prime \subsetneqq \mathcal{L} $ with $ \dim(\mathcal{L}^\prime) = \dim(\mathcal{L}) - 1 $. Using (\ref{dimension matrix modules}), a simple computation shows that
$$ \dim(\mathcal{C}_1 \cap \mathcal{V}_{\mathcal{L}^\prime}) + m \geq \dim(\mathcal{C}_1 \cap \mathcal{V}_\mathcal{L}). $$
Since $ \dim(\mathcal{C}_2 \cap \mathcal{V}_{\mathcal{L}^\prime}) \leq \dim(\mathcal{C}_2 \cap \mathcal{V}_{\mathcal{L}}) $, it holds that
$$ \dim(\mathcal{C}_1 \cap \mathcal{V}_{\mathcal{L}^\prime}) - \dim(\mathcal{C}_2 \cap \mathcal{V}_{\mathcal{L}^\prime}) + m $$
$$ \geq \dim(\mathcal{C}_1 \cap \mathcal{V}_\mathcal{L}) - \dim(\mathcal{C}_2 \cap \mathcal{V}_{\mathcal{L}}), $$
and the result follows.
\end{proof}

\begin{proposition} [\textbf{Monotonicity of RGMWs}] \label{monotonicity of RGMW}
Given nested linear codes $ \mathcal{C}_2 \subsetneqq \mathcal{C}_1 \subseteq \mathbb{F}^{m \times n} $ with $ \ell = \dim(\mathcal{C}_1 / \mathcal{C}_2) $, it holds that
$$ 0 \leq d_{M,r+1}(\mathcal{C}_1, \mathcal{C}_2) - d_{M,r}(\mathcal{C}_1, \mathcal{C}_2) \leq \min \{ m, n \}, $$ 
for $ 1 \leq r \leq \ell - 1 $, and
$$ d_{M,r}(\mathcal{C}_1, \mathcal{C}_2) + 1 \leq d_{M,r+m}(\mathcal{C}_1, \mathcal{C}_2), $$ 
for $ 1 \leq r \leq \ell - m $.
\end{proposition}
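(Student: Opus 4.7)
The plan is to combine the rank-weight characterization of RGMWs from Proposition~\ref{proposition as minimum rank weights} with the monotonicity of RDRPs from Proposition~\ref{monotonicity RDRP} and the conversion formula of Proposition~\ref{connection between RDIP RGMW}. Throughout, note that $d_{M,r}(\mathcal{C}_1,\mathcal{C}_2) \geq 1$ whenever $r \geq 1$, since $K_{M,0}(\mathcal{C}_1,\mathcal{C}_2) = 0$.

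For the lower bound $d_{M,r+1}(\mathcal{C}_1,\mathcal{C}_2) \geq d_{M,r}(\mathcal{C}_1,\mathcal{C}_2)$, I would take a subspace $\mathcal{D} \subseteq \mathcal{C}_1$ with $\mathcal{D} \cap \mathcal{C}_2 = \{0\}$, $\dim(\mathcal{D}) = r+1$ and ${\rm wt_R}(\mathcal{D}) = d_{M,r+1}(\mathcal{C}_1,\mathcal{C}_2)$, granted by Proposition~\ref{proposition as minimum rank weights}. Any $r$-dimensional subspace $\mathcal{D}'\subsetneq\mathcal{D}$ still satisfies $\mathcal{D}' \cap \mathcal{C}_2 = \{0\}$ and ${\rm wt_R}(\mathcal{D}') \leq {\rm wt_R}(\mathcal{D})$, so Proposition~\ref{proposition as minimum rank weights} applied to $\mathcal{D}'$ yields the inequality.

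For the upper bound $d_{M,r+1}(\mathcal{C}_1,\mathcal{C}_2) - d_{M,r}(\mathcal{C}_1,\mathcal{C}_2) \leq \min\{m,n\}$, I would go in the opposite direction: pick a minimizer $\mathcal{D}$ of dimension $r$ as in Proposition~\ref{proposition as minimum rank weights}, and extend it by a single matrix. Because $r \leq \ell - 1$, we have $\dim(\mathcal{D} + \mathcal{C}_2) = r + \dim(\mathcal{C}_2) < \dim(\mathcal{C}_1)$, so there exists $C \in \mathcal{C}_1 \setminus (\mathcal{D} + \mathcal{C}_2)$. Setting $\mathcal{D}' = \mathcal{D} + \langle C \rangle$, one checks easily that $\mathcal{D}' \cap \mathcal{C}_2 = \{0\}$ and $\dim(\mathcal{D}') = r+1$. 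Since ${\rm RSupp}(\mathcal{D}') \subseteq {\rm RSupp}(\mathcal{D}) + {\rm Row}(C)$, we obtain ${\rm wt_R}(\mathcal{D}') \leq {\rm wt_R}(\mathcal{D}) + {\rm Rk}(C) \leq d_{M,r}(\mathcal{C}_1,\mathcal{C}_2) + \min\{m,n\}$, so Proposition~\ref{proposition as minimum rank weights} applied to $\mathcal{D}'$ gives the bound.

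For the $m$-step inequality $d_{M,r+m}(\mathcal{C}_1,\mathcal{C}_2) \geq d_{M,r}(\mathcal{C}_1,\mathcal{C}_2) + 1$, I would translate through Proposition~\ref{connection between RDIP RGMW}. Set $\mu = d_{M,r}(\mathcal{C}_1,\mathcal{C}_2)$; by minimality $K_{M,\mu-1}(\mathcal{C}_1,\mathcal{C}_2) \leq r - 1$, and applying Proposition~\ref{monotonicity RDRP} once yields $K_{M,\mu}(\mathcal{C}_1,\mathcal{C}_2) \leq K_{M,\mu-1}(\mathcal{C}_1,\mathcal{C}_2) + m \leq r + m - 1 < r+m$. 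By the definition of $d_{M,r+m}$ in Proposition~\ref{connection between RDIP RGMW}, this forces $d_{M,r+m}(\mathcal{C}_1,\mathcal{C}_2) \geq \mu + 1$, as required. The only delicate point is ensuring $\mu \geq 1$ so that $K_{M,\mu-1}$ is well defined, which we noted at the outset.
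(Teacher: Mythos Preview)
Your proposal is correct and follows essentially the same approach as the paper. The paper is terser---it calls the lower bound ``obvious'' (immediate from Definition~\ref{def RGMW}, since the feasible set of $\mathcal{L}$ shrinks as $r$ grows), proves the upper bound exactly as you do via Proposition~\ref{proposition as minimum rank weights}, and for the $m$-step inequality merely cites Propositions~\ref{connection between RDIP RGMW} and~\ref{monotonicity RDRP} without writing out the chain $K_{M,\mu-1}\le r-1 \Rightarrow K_{M,\mu}\le r+m-1$---but your expanded versions of all three steps are accurate.
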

\begin{proof}
The first inequality in the first equation is obvious. We now prove the second inequality. By Proposition \ref{proposition as minimum rank weights}, there exists a subspace $ \mathcal{D} \subseteq \mathcal{C}_1 $ with $ \mathcal{D} \cap \mathcal{C}_2 = \{ 0 \} $, $ \dim(\mathcal{D}) = r $ and $ {\rm wt_R}(\mathcal{D}) = d_{M,r}(\mathcal{C}_1, \mathcal{C}_2) $. Now take $ D \in \mathcal{C}_1 $ not contained in $ \mathcal{D} \oplus \mathcal{C}_2 $, and consider $ \mathcal{D}^\prime = \mathcal{D} \oplus \langle \{ D \} \rangle $. We see from the definitions that $ {\rm RSupp}(\mathcal{D}^\prime) \subseteq {\rm RSupp}(\mathcal{D}) + {\rm Row}(D) $, and hence 
$$ {\rm wt_R}(\mathcal{D}^\prime) \leq {\rm wt_R}(\mathcal{D}) + {\rm Rk}(D) \leq d_{M,r}(\mathcal{C}_1, \mathcal{C}_2) + \min\{m,n\}. $$
Therefore it follows that $ d_{M,r+1}(\mathcal{C}_1, \mathcal{C}_2) \leq d_{M,r}(\mathcal{C}_1, \mathcal{C}_2) + \min\{m,n\} $.

The last inequality follows from Proposition \ref{connection between RDIP RGMW} and Proposition \ref{monotonicity RDRP}.
\end{proof}
 
{\color{black}Due to Theorem \ref{th GMW extend DGW}, the first and third inequalities in the previous proposition coincide with items 3 and 4 in \cite[Theorem 30]{ravagnaniweights} when $ \mathcal{C}_2 = \{ 0 \} $ and $ m \neq n $.
}

{\color{black}
\subsection{Upper and lower Singleton-type bounds} \label{subsec singleton bounds}

Due to Theorem \ref{maximum secrecy}, it is desirable to obtain nested linear code pairs with large RGMWs. The following result gives a fundamental upper bound on them, whose achievability for one linear code ($ \mathcal{C}_2 = \{ 0 \} $) is studied in the next subsection.
}

\begin{theorem}[\textbf{Upper Singleton-type bound}] \label{th upper singleton}
Given nested linear codes $ \mathcal{C}_2 \subsetneqq \mathcal{C}_1 \subseteq \mathbb{F}^{m \times n} $ and $ 1 \leq r \leq \ell = \dim(\mathcal{C}_1/\mathcal{C}_2) $, it holds that
\begin{equation}
d_{M,r}(\mathcal{C}_1,\mathcal{C}_2) \leq n - \left\lceil \frac{\ell - r + 1}{m} \right\rceil + 1.
\label{upper singleton equation}
\end{equation}
In particular, it follows that
$$ \dim(\mathcal{C}_1 / \mathcal{C}_2) \leq \max \{ m,n \}(\min \{ m,n \} - d_R(\mathcal{C}_1,\mathcal{C}_2) + 1), $$
which extends (\ref{singleton bound delsarte}) to nested linear code pairs.
\end{theorem}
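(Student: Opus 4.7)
My plan is to deduce the bound (\ref{upper singleton equation}) directly from the second monotonicity inequality of Proposition \ref{monotonicity of RGMW}, namely $d_{M,r}(\mathcal{C}_1,\mathcal{C}_2) + 1 \leq d_{M,r+m}(\mathcal{C}_1,\mathcal{C}_2)$, together with the trivial observation that $d_{M,\ell}(\mathcal{C}_1,\mathcal{C}_2) \leq n$. This last fact follows from Definition \ref{def RGMW} by taking $\mathcal{L} = \mathbb{F}^n$, which gives $\mathcal{V}_\mathcal{L} = \mathbb{F}^{m \times n}$ and so $\dim(\mathcal{C}_1 \cap \mathcal{V}_\mathcal{L}) - \dim(\mathcal{C}_2 \cap \mathcal{V}_\mathcal{L}) = \ell$.

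Iterating Proposition \ref{monotonicity of RGMW} yields $d_{M,r+km}(\mathcal{C}_1,\mathcal{C}_2) \geq d_{M,r}(\mathcal{C}_1,\mathcal{C}_2) + k$ for every integer $k \geq 0$ with $r + km \leq \ell$. I would then choose $k$ to be the largest such integer; by the standard ceiling manipulation, this $k$ equals $\lceil (\ell - r + 1)/m \rceil - 1$, since $\lceil x \rceil - 1 < x$ forces $km \leq \ell - r$ while $k+1$ violates that inequality. Chaining gives
\begin{equation*}
d_{M,r}(\mathcal{C}_1,\mathcal{C}_2) + \left\lceil \tfrac{\ell - r + 1}{m} \right\rceil - 1 \leq d_{M,r+km}(\mathcal{C}_1,\mathcal{C}_2) \leq d_{M,\ell}(\mathcal{C}_1,\mathcal{C}_2) \leq n,
\end{equation*}
which rearranges to (\ref{upper singleton equation}).

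For the rank-metric Singleton bound, I would specialize to $r = 1$ and invoke Corollary \ref{minimum rank distance} to replace $d_{M,1}$ by $d_R$, yielding $\ell \leq m(n - d_R(\mathcal{C}_1,\mathcal{C}_2) + 1)$. The key extra trick is transposition: the transposed pair $\mathcal{C}_2^T \varsubsetneqq \mathcal{C}_1^T \subseteq \mathbb{F}^{n \times m}$ has the same quotient dimension $\ell$, and its minimum rank distance equals $d_R(\mathcal{C}_1,\mathcal{C}_2)$ since the rank of a matrix is invariant under transposition. Applying the bound just proved to this transposed pair gives the companion inequality $\ell \leq n(m - d_R(\mathcal{C}_1,\mathcal{C}_2) + 1)$.

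Combining both inequalities and taking the minimum, the elementary identity $m(n-d+1) - n(m-d+1) = (n-m)(d-1)$ shows that the tighter of the two bounds is always $\max\{m,n\}(\min\{m,n\} - d_R(\mathcal{C}_1,\mathcal{C}_2) + 1)$, regardless of whether $m \leq n$ or $m \geq n$, which completes the proof. I expect no real obstacle: the entire argument is a careful accounting with the ceiling function plus the transposition symmetry, and all technical content is already packaged in Proposition \ref{monotonicity of RGMW} and Corollary \ref{minimum rank distance}.
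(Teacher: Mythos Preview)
Your proposal is correct and follows essentially the same route as the paper's proof: both arguments take the largest integer $k$ (the paper calls it $h$) with $r+km\leq\ell$, iterate the second monotonicity inequality of Proposition \ref{monotonicity of RGMW} to reach $d_{M,r}+k\leq d_{M,r+km}\leq d_{M,\ell}\leq n$, and then unwind the ceiling (the paper does this after multiplying through by $m$, you do it directly). For the final rank-metric Singleton bound, both you and the paper set $r=1$, invoke Corollary \ref{minimum rank distance}, and apply the same bound to the transposed pair to obtain the companion inequality.
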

\begin{proof}
First of all, we have that $ d_{M,\ell}(\mathcal{C}_1,\mathcal{C}_2) \leq n $ by definition. Therefore the case $ r = \ell $ follows.

For the general case, we will prove that $ m d_{M,r}(\mathcal{C}_1,\mathcal{C}_2) \leq mn - \ell + r + m-1 $. Assume that $ 1 \leq r \leq \ell - hm $, where the integer $ h \geq 0 $ is the maximum possible. That is, $ r + (h+1)m > \ell $. Using Proposition \ref{monotonicity of RGMW}, we obtain
$$ m d_{M,r}(\mathcal{C}_1,\mathcal{C}_2) \leq m d_{M,r + hm}(\mathcal{C}_1,\mathcal{C}_2) - hm $$
$$ \leq m d_{M, \ell}(\mathcal{C}_1,\mathcal{C}_2) - hm \leq mn - \ell + r + m - 1, $$
where the last inequality follows from $ m d_{M,\ell}(\mathcal{C}_1,\mathcal{C}_2) \leq mn $ and $ r + (h+1)m - 1 \geq \ell $.

Finally, the last bound is obtained by setting $ r=1 $ and using Corollary \ref{minimum rank distance} for the given nested linear code pair and the pair obtained by transposing matrices.
\end{proof}

{\color{black}Due to Theorem \ref{th GMW extend DGW}, the previous theorem coincides with item 5 in \cite[Theorem 30]{ravagnaniweights} when $ \mathcal{C}_2 = \{ 0 \} $ and $ m \neq n $.
}

\begin{remark}
In view of \cite[Proposition 1]{rgrw} or \cite[Equation (24)]{luo}, it is natural to wonder whether a sharper bound of the form
$$ d_{M,r}(\mathcal{C}_1,\mathcal{C}_2) \leq n - \left\lceil \frac{\dim(\mathcal{C}_1) - r + 1}{m} \right\rceil + 1 $$
holds. However, this is not the case in general, as the following example shows.
\end{remark}

\begin{example}
Consider $ m = 2 $, the canonical basis $ \mathbf{e}_1, \mathbf{e}_2,$ $ \ldots, $ $ \mathbf{e}_n $ of $ \mathbb{F}^n $, and the linear codes $ \mathcal{C}_1 = \mathbb{F}^{2 \times n} $ and
\begin{displaymath}
\mathcal{C}_2 = \left\langle \left( \begin{array}{c}
\mathbf{e}_1 \\
\mathbf{0}
\end{array} \right), \left( \begin{array}{c}
\mathbf{e}_2 \\
\mathbf{0}
\end{array} \right), \ldots, \left( \begin{array}{c}
\mathbf{e}_n \\
\mathbf{0}
\end{array} \right) \right\rangle.
\end{displaymath}
Observe that $ \ell = \dim(\mathcal{C}_1 / \mathcal{C}_2) = n $. A bound as in the previous remark would imply that $ d_{M,n}(\mathcal{C}_1, \mathcal{C}_2) \leq \lceil n/2 \rceil $. However, a direct inspection shows that $ d_{M,n}(\mathcal{C}_1, \mathcal{C}_2) = n $, since all vectors $ \mathbf{e}_1, \mathbf{e}_2, \ldots, \mathbf{e}_n $ must lie in the row space of any $ \mathcal{D} $ with $ \mathcal{C}_1 = \mathcal{C}_2 \oplus \mathcal{D} $.
\end{example}

On the other hand, we have the following lower bound:

\begin{theorem}[\textbf{Lower Singleton-type bound}] \label{lower singleton bound}
Given nested linear codes $ \mathcal{C}_2 \subsetneqq \mathcal{C}_1 \subseteq \mathbb{F}^{m \times n} $ and $ 1 \leq r \leq \dim(\mathcal{C}_1/\mathcal{C}_2) $, it holds that $ m d_{M,r}(\mathcal{C}_1,\mathcal{C}_2) \geq r $, which implies that
\begin{equation}
d_{M,r}(\mathcal{C}_1,\mathcal{C}_2) \geq \left\lceil \frac{r}{m} \right\rceil.
\label{lower singleton equation}
\end{equation}
\end{theorem}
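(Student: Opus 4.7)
The plan is to derive the bound directly from the definition of $d_{M,r}(\mathcal{C}_1,\mathcal{C}_2)$ together with the fundamental dimension formula (\ref{dimension matrix modules}) for rank support spaces, namely $\dim(\mathcal{V}_\mathcal{L}) = m \dim(\mathcal{L})$. This formula, established in Proposition \ref{prop characterization}, is the only nontrivial ingredient required.

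First, I would pick a subspace $\mathcal{L} \subseteq \mathbb{F}^n$ attaining the minimum in Definition \ref{def RGMW}; that is, with $\dim(\mathcal{L}) = d_{M,r}(\mathcal{C}_1,\mathcal{C}_2)$ and $\dim(\mathcal{C}_1 \cap \mathcal{V}_\mathcal{L}) - \dim(\mathcal{C}_2 \cap \mathcal{V}_\mathcal{L}) \geq r$. Then I would chain the obvious inequalities
$$r \;\leq\; \dim(\mathcal{C}_1 \cap \mathcal{V}_\mathcal{L}) - \dim(\mathcal{C}_2 \cap \mathcal{V}_\mathcal{L}) \;\leq\; \dim(\mathcal{C}_1 \cap \mathcal{V}_\mathcal{L}) \;\leq\; \dim(\mathcal{V}_\mathcal{L}) \;=\; m \dim(\mathcal{L}) \;=\; m\, d_{M,r}(\mathcal{C}_1,\mathcal{C}_2),$$
where the middle inequality discards the (nonnegative) subtracted term, the next one uses $\mathcal{C}_1 \cap \mathcal{V}_\mathcal{L} \subseteq \mathcal{V}_\mathcal{L}$, and the equality is (\ref{dimension matrix modules}). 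This yields $m\, d_{M,r}(\mathcal{C}_1,\mathcal{C}_2) \geq r$ at once.

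Finally, since $d_{M,r}(\mathcal{C}_1,\mathcal{C}_2)$ is a nonnegative integer, the inequality $d_{M,r}(\mathcal{C}_1,\mathcal{C}_2) \geq r/m$ automatically sharpens to $d_{M,r}(\mathcal{C}_1,\mathcal{C}_2) \geq \lceil r/m \rceil$, giving (\ref{lower singleton equation}). There is no real obstacle here: the proof is a one-line application of the rank support space dimension formula, and one could equivalently argue via Proposition \ref{proposition as minimum rank weights} by noting that any $r$-dimensional $\mathcal{D} \subseteq \mathcal{C}_1$ sits inside $\mathcal{V}_{\mathrm{RSupp}(\mathcal{D})}$, whence $r = \dim(\mathcal{D}) \leq m\,\mathrm{wt}_R(\mathcal{D})$.
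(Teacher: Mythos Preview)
Your proof is correct and essentially the same as the paper's: both are one-line applications of the dimension formula $\dim(\mathcal{V}_\mathcal{L}) = m\dim(\mathcal{L})$. The paper's argument is precisely the alternative you sketch at the end via Proposition~\ref{proposition as minimum rank weights}, while your primary argument bypasses that proposition by working directly from Definition~\ref{def RGMW}; either way, the content is identical.
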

\begin{proof}
Take a subspace $ \mathcal{D} \subseteq \mathbb{F}^{m \times n} $ and define $ \mathcal{L} = {\rm RSupp}(\mathcal{D}) $. We have that $ \mathcal{D} \subseteq \mathcal{V}_\mathcal{L} $. Using (\ref{dimension matrix modules}), we see that
$$ m {\rm wt_R}(\mathcal{D}) = m \dim(\mathcal{L}) = \dim(\mathcal{V}_\mathcal{L}) \geq \dim(\mathcal{D}). $$
The result follows from this and Proposition \ref{proposition as minimum rank weights}.
\end{proof}

{\color{black}Due to Theorem \ref{th GMW extend DGW}, the previous theorem coincides with item 6 in \cite[Theorem 30]{ravagnaniweights} when $ \mathcal{C}_2 = \{ 0 \} $ and $ m \neq n $.
}

{\color{black}
\subsection{Linear codes attaining the bounds and whose dimensions are divisible by the packet length} \label{subsec singleton attaining}
}

In this subsection, we {\color{black}study the achievability of the bounds (\ref{upper singleton equation}) and (\ref{lower singleton equation}) for one} linear code whose dimension is divisible by the packet length $ m $. {\color{black}As we will show in Subsection \ref{subsec GMW improve DGW}, DGWs \cite{ravagnaniweights} of one linear code coincide with its GMWs when $ m \neq n $. Thus the two propositions below coincide with Corollaries 31 and 32 in \cite{ravagnaniweights} when $ m \neq n $. }

Recall from (\ref{singleton bound delsarte}) that, if a linear code is MRD and $ n \leq m $, then its dimension is divisible by $ m $. In the next proposition, we show that GMWs of MRD linear codes for $ n \leq m $ are all given by $ m $, $ n $ and $ \dim(\mathcal{C}) $, and all attain the upper Singleton-type bound (\ref{upper singleton equation}):

\begin{proposition}
Let $ \mathcal{C} \subseteq \mathbb{F}^{m \times n} $ be a linear code with $ \dim(\mathcal{C}) = mk $. The following are equivalent if $ n \leq m $:
\begin{enumerate}
\item
$ \mathcal{C} $ is maximum rank distance (MRD). 
\item
$ d_{R}(\mathcal{C}) = n - k + 1 $.
\item
$ d_{M,r}(\mathcal{C}) = n - k + \left\lfloor \frac{r - 1}{m} \right\rfloor + 1 $, for all $ 1 \leq r \leq mk $.
\end{enumerate}
\end{proposition}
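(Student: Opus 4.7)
My plan is to break the equivalence into three parts: (1) $\Leftrightarrow$ (2), (3) $\Rightarrow$ (2), and (2) $\Rightarrow$ (3).

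The equivalence (1) $\Leftrightarrow$ (2) is essentially a restatement of the Singleton bound (\ref{singleton bound delsarte}). Since $n \leq m$, we have $\max\{m,n\} = m$ and $\min\{m,n\} = n$, so (\ref{singleton bound delsarte}) reads $\dim(\mathcal{C}) \leq m(n - d_R(\mathcal{C}) + 1)$. Substituting $\dim(\mathcal{C}) = mk$, the MRD condition $mk = m(n - d_R(\mathcal{C}) + 1)$ is equivalent to $d_R(\mathcal{C}) = n - k + 1$. The implication (3) $\Rightarrow$ (2) is immediate by specializing to $r = 1$ and invoking Corollary \ref{minimum rank distance}, which gives $d_R(\mathcal{C}) = d_{M,1}(\mathcal{C}) = n - k + 1$.

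The substantive step is (2) $\Rightarrow$ (3). First I would rewrite the upper Singleton-type bound of Theorem \ref{th upper singleton}, applied with $\mathcal{C}_2 = \{0\}$ and $\ell = mk$, into the form appearing in item 3. Writing $r - 1 = am + b$ with $0 \leq b \leq m-1$, a short case distinction ($b = 0$ versus $b > 0$) shows that
$$ \left\lceil \frac{mk - r + 1}{m} \right\rceil = k - a = k - \left\lfloor \frac{r-1}{m} \right\rfloor, $$
so Theorem \ref{th upper singleton} yields the upper bound $d_{M,r}(\mathcal{C}) \leq n - k + \lfloor (r-1)/m \rfloor + 1$ for every $1 \leq r \leq mk$.

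For the matching lower bound, I would iterate the second inequality of Proposition \ref{monotonicity of RGMW}, namely $d_{M,r+m}(\mathcal{C}) \geq d_{M,r}(\mathcal{C}) + 1$. Starting from $d_{M,1}(\mathcal{C}) = n - k + 1$ (given by hypothesis and Corollary \ref{minimum rank distance}) and applying the first (non-decreasing) monotonicity to push forward within each block of $m$ consecutive indices, one gets $d_{M, b+1}(\mathcal{C}) \geq n - k + 1$ for every $0 \leq b \leq m-1$. Iterating the second inequality $a$ times then gives
$$ d_{M,am + b + 1}(\mathcal{C}) \geq d_{M, b+1}(\mathcal{C}) + a \geq n - k + 1 + a = n - k + \left\lfloor \frac{r-1}{m} \right\rfloor + 1, $$
with $r = am + b + 1$. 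Combining with the upper bound above yields the desired equality.

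The only subtlety is the bookkeeping of the ceiling/floor identity and making sure the iterated monotonicity is applied only within the valid range $1 \leq r \leq mk$, i.e. $0 \leq a \leq k-1$; both are elementary. The conceptual heart of the argument is the observation that the two Singleton-type constraints (upper bound from Theorem \ref{th upper singleton} and the increment lower bound in Proposition \ref{monotonicity of RGMW}) force the entire GMW sequence once its first value is extremal.
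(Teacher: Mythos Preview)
Your proposal is correct and follows essentially the same route as the paper's proof: the equivalence $(1)\Leftrightarrow(2)$ via the Singleton bound, $(3)\Rightarrow(2)$ by specializing $r=1$ with Corollary~\ref{minimum rank distance}, and $(2)\Rightarrow(3)$ by sandwiching $d_{M,r}(\mathcal{C})$ between the upper Singleton-type bound of Theorem~\ref{th upper singleton} and a lower bound obtained by iterating the monotonicity of Proposition~\ref{monotonicity of RGMW} starting from $d_{M,1}(\mathcal{C})=n-k+1$. The only cosmetic difference is that the paper parametrizes $r=hm+s$ with $0\le s<m$ (and briefly distinguishes $s=0$ from $s>0$), whereas your parametrization $r-1=am+b$ handles both cases uniformly.
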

\begin{proof}
Item 1 and item 2 are equivalent by definition, and item 3 implies item 2 by choosing $ r=1 $. 

Now assume item 2 and let $ 1 \leq r \leq mk $. Let $ r = hm + s $, with $ h \geq 0 $ and $ 0 \leq s < m $. We need to distinguish the cases $ s > 0 $ and $ s = 0 $. We prove only the first case, being the second analogous. By Proposition \ref{monotonicity of RGMW}, we have that
$$ d_{M,r}(\mathcal{C}) \geq h + d_{M,s}(\mathcal{C}) \geq h + d_R(\mathcal{C}) = n - k + h + 1. $$
On the other hand, $ \lceil (mk - r + 1)/m \rceil = k - h $, and therefore the bound (\ref{upper singleton equation}) implies that
$$ d_{M,r}(\mathcal{C}) \leq n - k + h + 1, $$
and hence $ d_{M,r}(\mathcal{C}) = n - k + \lfloor (r-1)/m \rfloor + 1 $ since $ \lfloor (r-1)/m \rfloor = h $, and item 3 follows.
\end{proof}

Regarding the lower Singleton-type bound, we show in the next proposition that rank support spaces are also characterized by having the minimum possible GMWs in view of (\ref{lower singleton equation}):

\begin{proposition}
Let $ \mathcal{C} \subseteq \mathbb{F}^{m \times n} $ be a linear code with $ \dim(\mathcal{C}) = mk $. The following are equivalent:
\begin{enumerate}
\item
$ \mathcal{C} $ is a rank support space. That is, there exists a subspace $ \mathcal{L} \subseteq \mathbb{F}^n $ such that $ \mathcal{C} = \mathcal{V}_\mathcal{L} $.
\item
$ d_{M,km}(\mathcal{C}) = k $.
\item
$ d_{M,r}(\mathcal{C}) = \lceil r/m \rceil $, for all $ 1 \leq r \leq mk $.
\end{enumerate}
\end{proposition}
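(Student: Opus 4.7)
The plan is to prove the cycle $1 \Rightarrow 3 \Rightarrow 2 \Rightarrow 1$. The implication $3 \Rightarrow 2$ is immediate by specializing $r = km$, which yields $d_{M,km}(\mathcal{C}) = \lceil km/m \rceil = k$.

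For $1 \Rightarrow 3$, I would first use (\ref{dimension matrix modules}) to deduce from $\mathcal{C} = \mathcal{V}_\mathcal{L}$ and $\dim(\mathcal{C}) = mk$ that $\dim(\mathcal{L}) = k$. The lower bound $d_{M,r}(\mathcal{C}) \geq \lceil r/m \rceil$ is already given by Theorem \ref{lower singleton bound}. For the matching upper bound, the natural candidate is a subspace $\mathcal{L}' \subseteq \mathcal{L}$ with $\dim(\mathcal{L}') = \lceil r/m \rceil$ (which exists because $\lceil r/m \rceil \leq k = \dim(\mathcal{L})$). Then $\mathcal{V}_{\mathcal{L}'} \subseteq \mathcal{V}_\mathcal{L} = \mathcal{C}$, so $\mathcal{C} \cap \mathcal{V}_{\mathcal{L}'} = \mathcal{V}_{\mathcal{L}'}$, and (\ref{dimension matrix modules}) gives $\dim(\mathcal{C} \cap \mathcal{V}_{\mathcal{L}'}) = m \lceil r/m \rceil \geq r$. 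Hence $\mathcal{L}'$ witnesses $d_{M,r}(\mathcal{C}) \leq \lceil r/m \rceil$, closing the sandwich.

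For $2 \Rightarrow 1$, I would unfold Definition \ref{def RGMW} applied to $\mathcal{C}_2 = \{0\}$: the assumption $d_{M,km}(\mathcal{C}) = k$ says that the minimum is attained, so there exists $\mathcal{L} \subseteq \mathbb{F}^n$ with $\dim(\mathcal{L}) = k$ and $\dim(\mathcal{C} \cap \mathcal{V}_\mathcal{L}) \geq km$. Now (\ref{dimension matrix modules}) gives $\dim(\mathcal{V}_\mathcal{L}) = mk = \dim(\mathcal{C})$, and since $\mathcal{C} \cap \mathcal{V}_\mathcal{L}$ is contained in both $\mathcal{C}$ and $\mathcal{V}_\mathcal{L}$, the inequality $\dim(\mathcal{C} \cap \mathcal{V}_\mathcal{L}) \geq mk$ forces $\mathcal{C} = \mathcal{C} \cap \mathcal{V}_\mathcal{L} = \mathcal{V}_\mathcal{L}$, making $\mathcal{C}$ a rank support space.

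There is no real obstacle here: the whole proposition is essentially a dimension-counting consequence of the identity $\dim(\mathcal{V}_\mathcal{L}) = m\dim(\mathcal{L})$ together with the lower Singleton-type bound. The only point that demands a little care is noticing, in $2 \Rightarrow 1$, that the min in Definition \ref{def RGMW} is actually attained (so that a witness $\mathcal{L}$ of dimension exactly $k$ genuinely exists), and that the dimensions of $\mathcal{C}$ and $\mathcal{V}_\mathcal{L}$ coincide and thus force equality rather than merely containment.
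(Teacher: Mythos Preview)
Your proof is correct and follows essentially the same route as the paper's: the cycle $1 \Rightarrow 3 \Rightarrow 2 \Rightarrow 1$, with $2 \Rightarrow 1$ argued exactly as in the paper via a dimension squeeze using (\ref{dimension matrix modules}). The only cosmetic difference is that in $1 \Rightarrow 3$ the paper builds a full flag $\mathcal{L}_1 \subsetneqq \cdots \subsetneqq \mathcal{L}_k = \mathcal{L}$ to exhibit witnesses for all $r$ at once, whereas you pick a single $\mathcal{L}'$ of dimension $\lceil r/m \rceil$; and you make the appeal to the lower Singleton-type bound (Theorem \ref{lower singleton bound}) explicit, which the paper leaves implicit.
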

\begin{proof}
Assume that $ \mathcal{C} = \mathcal{V}_\mathcal{L} $, as in item 1. By taking a sequence of subspaces 
$$ \{ \mathbf{0} \} \subsetneqq \mathcal{L}_1 \subsetneqq \mathcal{L}_2 \subsetneqq \ldots \subsetneqq \mathcal{L}_k = \mathcal{L}, $$
we see that $ d_{M,rm -p}(\mathcal{C}) \leq \dim(\mathcal{L}_r) = r $, for $ 1 \leq r \leq k $ and $ 0 \leq p \leq m-1 $, since $ \dim(\mathcal{C} \cap \mathcal{V}_{\mathcal{L}_r}) = \dim(\mathcal{V}_{\mathcal{L}_r}) = mr \geq mr-p $. Hence item 3 follows.

Item 3 implies item 2 {\color{black}by taking $ r = km $}.

Finally, assume item 2. Take a subspace $ \mathcal{L} \subseteq \mathbb{F}^n $ such that $ \dim(\mathcal{L}) = d_{M,km}(\mathcal{C}) = k $ and $ \dim(\mathcal{C} \cap \mathcal{V}_\mathcal{L}) \geq mk $. By definition and by (\ref{dimension matrix modules}), it holds that $ \dim(\mathcal{C} \cap \mathcal{V}_\mathcal{L}) \geq mk = \dim(\mathcal{V}_\mathcal{L}) $, which implies that $ \mathcal{C} \cap \mathcal{V}_\mathcal{L} = \mathcal{V}_\mathcal{L} $, or in other words, $ \mathcal{V}_\mathcal{L} \subseteq \mathcal{C} $. Since $ \dim(\mathcal{C}) = mk = \dim(\mathcal{V}_\mathcal{L}) $, we see that $ \mathcal{V}_\mathcal{L} = \mathcal{C} $ and item 1 follows.
\end{proof}

\section{Relation with other existing notions of generalized weights} \label{sec relation with others}

In this section, we study the relation between RGMWs and RDRPs and other notions of generalized weights {\color{black}(see Table \ref{table notions gen weights}). We first show that RGMWs and RDRPs extend RGRWs and RDIPs \cite{rgrw, oggier} (Theorem \ref{th RGMW extend RGRW} in Subsection \ref{subsec RGMW extend RGRW}), respectively, then we show that they extend RGHWs and RDLPs \cite{forney, luo, wei} (Theorem \ref{th RGMW extend RGHW} in Subsection \ref{subsec RGMW extend RGHW}), respectively, and we conclude by showing that GMWs coincide with DGWs \cite{ravagnaniweights} for one linear code, meaning $ \mathcal{C}_1 = \mathcal{C} $ arbitrary and $ \mathcal{C}_2 = \{ 0 \} $, when $ m \neq n $, and are strictly larger when $ m = n $ (Theorem \ref{th GMW extend DGW} in Subsection \ref{subsec GMW improve DGW}).
}

\subsection{RGMWs extend relative generalized rank weights} \label{subsec RGMW extend RGRW}

In this subsection, we prove that RGMWs and RDRPs extend RGRWs and RDIPs \cite{rgrw, oggier}, respectively.
%

\begin{definition} [\textbf{Galois closed spaces \cite{stichtenoth}}]
We say that an $ \mathbb{F}_{q^m} $-linear vector space $ \mathcal{V} \subseteq \mathbb{F}_{q^m}^n $ is Galois closed if 
\begin{equation*}
\mathcal{V}^q = \{ (v_1^q, v_2^q, \ldots, v_n^q) \mid (v_1, v_2, \ldots, v_n) \in \mathcal{V} \} \subseteq \mathcal{V}.
\end{equation*}
We denote by $ \Upsilon(\mathbb{F}_{q^m}^n) $ the family of $ \mathbb{F}_{q^m} $-linear Galois closed vector spaces in $ \mathbb{F}_{q^m}^n $.
\end{definition}

RGRWs and RDIPs are then defined in \cite{rgrw} as follows:

\begin{definition} [\textbf{Relative Generalized Rank Weigths \cite[Definition 2]{rgrw}}]
Given nested $ \mathbb{F}_{q^m}$-linear codes $ \mathcal{C}_2 \subsetneqq \mathcal{C}_1 \subseteq \mathbb{F}_{q^m}^n $, and $ 1 \leq r \leq \ell = \dim(\mathcal{C}_1 / \mathcal{C}_2) $ (over $ \mathbb{F}_{q^m} $), we define their $ r $-th relative generalized rank weight (RGRW) as
\begin{equation*}
\begin{split}
d_{R,r}(\mathcal{C}_1, \mathcal{C}_2) = \min \{ & \dim(\mathcal{V}) \mid \mathcal{V} \in \Upsilon(\mathbb{F}_{q^m}^n), \\
 & \dim(\mathcal{C}_1 \cap \mathcal{V}) - \dim(\mathcal{C}_2 \cap \mathcal{V}) \geq r \},
\end{split}
\end{equation*}
where dimensions are taken over $ \mathbb{F}_{q^m} $.
\end{definition}

\begin{definition} [\textbf{Relative Dimension/Intersection Profile \cite[Definition 1]{rgrw}}]
Given nested $ \mathbb{F}_{q^m} $-linear codes $ \mathcal{C}_2 \subsetneqq \mathcal{C}_1 \subseteq \mathbb{F}_{q^m}^n $, and $ 0 \leq \mu \leq n $, we define their $ \mu $-th relative dimension/intersection profile (RDIP) as
\begin{equation*}
\begin{split}
K_{R,\mu}(\mathcal{C}_1, \mathcal{C}_2) = \max \{ & \dim(\mathcal{C}_1 \cap \mathcal{V}) - \dim(\mathcal{C}_2 \cap \mathcal{V}) \mid \\
 & \mathcal{V} \in \Upsilon(\mathbb{F}_{q^m}^n), \dim(\mathcal{V}) \leq \mu \},
\end{split}
\end{equation*}
where dimensions are taken over $ \mathbb{F}_{q^m} $.
\end{definition}

{\color{black}The following is the main result of the subsection, which shows that Theorem \ref{maximum secrecy} extends the study on worst-case information leakage on $ \mathbb{F}_q $-linearly coded networks in \cite{rgrw} (see its Theorem 2 and Corollary 5) from $ \mathbb{F}_{q^m} $-linear codes in $ \mathbb{F}_{q^m}^n $ to general $ \mathbb{F}_q $-linear codes in $ \mathbb{F}_q^{m \times n} $, when considering uniform probability distributions. }

\begin{theorem} \label{th RGMW extend RGRW}
Let $ \alpha_1, \alpha_2, \ldots, \alpha_m $ be a basis of $ \mathbb{F}_{q^m} $ as a vector space over $ \mathbb{F}_q $. Given nested $ \mathbb{F}_{q^m} $-linear codes $ \mathcal{C}_2 \subsetneqq \mathcal{C}_1 \subseteq \mathbb{F}_{q^m}^n $, and integers $ 1 \leq r \leq \ell = \dim(\mathcal{C}_1 / \mathcal{C}_2) $ (over $ \mathbb{F}_{q^m} $), $ 0 \leq p \leq m-1 $ and $ 0 \leq \mu \leq n $, we have that
\begin{equation*}
d_{R,r}(\mathcal{C}_1, \mathcal{C}_2) = d_{M,rm - p}(M_{\boldsymbol\alpha}(\mathcal{C}_1), M_{\boldsymbol\alpha}(\mathcal{C}_2)),
\end{equation*}
\begin{equation*}
m K_{R,\mu}(\mathcal{C}_1, \mathcal{C}_2) = K_{M,\mu}(M_{\boldsymbol\alpha}(\mathcal{C}_1), M_{\boldsymbol\alpha}(\mathcal{C}_2)),
\end{equation*}
{\color{black}where $ M_{\boldsymbol\alpha} : \mathbb{F}_{q^m}^n \longrightarrow \mathbb{F}_q^{m \times n} $ is as in (\ref{equation matrix representation}). }
\end{theorem}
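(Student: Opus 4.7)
The plan is to reduce both equalities to the bijective correspondence (given by Lemma \ref{lemma matrix modules are galois}) between $\mathbb{F}_{q^m}$-linear Galois closed subspaces $\mathcal{V} \subseteq \mathbb{F}_{q^m}^n$ and rank support spaces $\mathcal{V}_\mathcal{L} \subseteq \mathbb{F}_q^{m \times n}$. Concretely, for $\mathcal{L} \subseteq \mathbb{F}_q^n$ the corresponding Galois closed space is $\mathcal{V} = \mathbb{F}_{q^m} \cdot \mathcal{L} = \{ \sum_i \lambda_i \mathbf{v}_i : \lambda_i \in \mathbb{F}_{q^m}, \mathbf{v}_i \in \mathcal{L} \}$, and one has $M_{\boldsymbol\alpha}(\mathcal{V}) = \mathcal{V}_\mathcal{L}$ together with $\dim_{\mathbb{F}_{q^m}}(\mathcal{V}) = \dim_{\mathbb{F}_q}(\mathcal{L})$. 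This is the structural fact underlying everything.

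Since $\mathcal{C}_1$ and $\mathcal{C}_2$ are $\mathbb{F}_{q^m}$-linear, the intersection $\mathcal{C}_i \cap \mathcal{V}$ is also $\mathbb{F}_{q^m}$-linear for any Galois closed $\mathcal{V}$, so its $\mathbb{F}_q$-dimension is exactly $m$ times its $\mathbb{F}_{q^m}$-dimension. Because $M_{\boldsymbol\alpha}$ is an $\mathbb{F}_q$-linear vector space isomorphism, it preserves intersections and $\mathbb{F}_q$-dimensions, so
\[ \dim_{\mathbb{F}_q}(M_{\boldsymbol\alpha}(\mathcal{C}_i) \cap \mathcal{V}_\mathcal{L}) = m \dim_{\mathbb{F}_{q^m}}(\mathcal{C}_i \cap \mathcal{V}), \quad i=1,2. \]
Subtracting these two equalities gives that the ``$\dim_{\mathbb{F}_q}$-difference'' appearing in the RHS of both identities is always a multiple of $m$, and equals $m$ times the corresponding ``$\dim_{\mathbb{F}_{q^m}}$-difference'' on the LHS. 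The RDIP identity follows at once: the maximization on the RHS of $K_{M,\mu}$ ranges over $\mathcal{L}$ with $\dim_{\mathbb{F}_q}(\mathcal{L}) \leq \mu$, which by the bijection is the same as ranging over Galois closed $\mathcal{V}$ with $\dim_{\mathbb{F}_{q^m}}(\mathcal{V}) \leq \mu$; the objective on the RHS is exactly $m$ times the objective on the LHS.

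For the RGRW identity, the step I expect to be slightly delicate is the role of the offset $p$. The condition appearing in the definition of $d_{M,rm-p}$ is that the $\dim_{\mathbb{F}_q}$-difference be $\geq rm - p$; but by the multiple-of-$m$ observation above, this is equivalent to the difference being $\geq rm$, independently of $p \in \{0,1,\ldots,m-1\}$. Hence $d_{M,rm-p}(M_{\boldsymbol\alpha}(\mathcal{C}_1), M_{\boldsymbol\alpha}(\mathcal{C}_2))$ does not depend on $p$ in this range, and equals $d_{M,rm}$. Translating the remaining condition ``difference $\geq rm$'' to the LHS via the bijection turns it into ``$\dim_{\mathbb{F}_{q^m}}$-difference $\geq r$'', and the minimized quantity $\dim_{\mathbb{F}_q}(\mathcal{L})$ matches $\dim_{\mathbb{F}_{q^m}}(\mathcal{V})$. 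Therefore $d_{M,rm-p}(M_{\boldsymbol\alpha}(\mathcal{C}_1), M_{\boldsymbol\alpha}(\mathcal{C}_2)) = d_{R,r}(\mathcal{C}_1, \mathcal{C}_2)$, as desired. The only real obstacle is ensuring that every rank support space $\mathcal{V}_\mathcal{L}$ genuinely corresponds under $M_{\boldsymbol\alpha}^{-1}$ to a Galois closed space of matching dimension, which is precisely the content of Lemma \ref{lemma matrix modules are galois}.
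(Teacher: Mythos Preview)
Your proposal is correct and follows essentially the same approach as the paper: the paper simply states that the theorem ``follows from the next two lemmas'' (Lemmas \ref{lemma stichtenoth} and \ref{lemma matrix modules are galois}), and what you have written is precisely the detailed unpacking of that claim. Your handling of the offset $p$ --- observing that the $\mathbb{F}_q$-dimension difference is always a multiple of $m$, so the threshold $rm-p$ is equivalent to $rm$ for $0 \leq p \leq m-1$ --- is the key point the paper leaves implicit, and you have it right.
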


{\color{black}The theorem follows from the next two lemmas, where we take the first one from \cite{stichtenoth}: }

\begin{lemma} [\textbf{\cite[Lemma 1]{stichtenoth}}] \label{lemma stichtenoth}
An $ \mathbb{F}_{q^m} $-linear vector space $ \mathcal{V} \subseteq \mathbb{F}_{q^m}^n $ is Galois closed if, and only if, it has a basis of vectors in $ \mathbb{F}_q^n $ as a vector space over $ \mathbb{F}_{q^m} $. 
\end{lemma}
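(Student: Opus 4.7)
The plan is to establish both directions separately, with the reverse implication being essentially tautological and the forward implication hinging on a Moore-matrix inversion. For the direction $(\Leftarrow)$, I would take an $\mathbb{F}_{q^m}$-basis $\mathbf{v}_1, \ldots, \mathbf{v}_k \in \mathbb{F}_q^n$ of $\mathcal{V}$ and note that, for any $\mathbf{v} = \sum_i \lambda_i \mathbf{v}_i$, the coordinate-wise Frobenius yields $\mathbf{v}^q = \sum_i \lambda_i^q \mathbf{v}_i^q = \sum_i \lambda_i^q \mathbf{v}_i \in \mathcal{V}$, since $\mathbf{v}_i^q = \mathbf{v}_i$ whenever $\mathbf{v}_i \in \mathbb{F}_q^n$.

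For the direction $(\Rightarrow)$, I would set $U = \mathcal{V} \cap \mathbb{F}_q^n$ and $W = \mathbb{F}_{q^m} \cdot U \subseteq \mathcal{V}$. The key step is to show $W = \mathcal{V}$. Assuming for contradiction that some $\mathbf{v} \in \mathcal{V} \setminus W$ exists, the Galois-closure hypothesis gives $\phi^j(\mathbf{v}) \in \mathcal{V}$ for all $j$, where $\phi$ denotes the coordinate-wise $q$-Frobenius, so the additive trace
\begin{equation*}
T(\lambda \mathbf{v}) = \sum_{j=0}^{m-1} (\lambda \mathbf{v})^{q^j} = \sum_{j=0}^{m-1} \lambda^{q^j} \phi^j(\mathbf{v})
\end{equation*}
lies in $\mathcal{V} \cap \mathbb{F}_q^n = U \subseteq W$ for every $\lambda \in \mathbb{F}_{q^m}$. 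Fixing an $\mathbb{F}_q$-basis $\lambda_0, \ldots, \lambda_{m-1}$ of $\mathbb{F}_{q^m}$, the matrix $(\lambda_i^{q^j})_{0 \leq i,j \leq m-1}$ is a nonsingular Moore matrix (its invertibility being equivalent to $\mathbb{F}_q$-linear independence of the $\lambda_i$), so inverting the system $T(\lambda_i \mathbf{v}) \in W$ expresses each $\phi^j(\mathbf{v})$, and in particular $\mathbf{v} = \phi^0(\mathbf{v})$, as an $\mathbb{F}_{q^m}$-combination of elements of $W$. Hence $\mathbf{v} \in W$, a contradiction.

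Once $W = \mathcal{V}$ is proved, I would upgrade this to an honest $\mathbb{F}_{q^m}$-basis by picking any $\mathbb{F}_q$-basis $\mathbf{u}_1, \ldots, \mathbf{u}_s$ of $U$ and invoking the standard fact that $\mathbb{F}_q$-linearly independent vectors in $\mathbb{F}_q^n$ remain $\mathbb{F}_{q^m}$-linearly independent in $\mathbb{F}_{q^m}^n$ (obtained by expanding a hypothetical dependence along a fixed basis of $\mathbb{F}_{q^m}/\mathbb{F}_q$). Since these vectors also $\mathbb{F}_{q^m}$-span $W = \mathcal{V}$, they form the desired $\mathbb{F}_{q^m}$-basis contained in $\mathbb{F}_q^n$, and as a by-product $s = k$. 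The main obstacle is the Moore-matrix step in the second paragraph: one must recognize that Galois closure, together with the non-degeneracy of the Frobenius trace encoded in the invertibility of the Moore matrix, forces the entire orbit of $\mathbf{v}$ under $\phi$ to be recoverable from traces that already live in $W$; this is the classical Galois-descent mechanism and the unique place where the Galois-closure hypothesis is genuinely used.
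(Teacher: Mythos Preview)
Your proof is correct. Note, however, that the paper does not supply its own proof of this lemma: it is simply cited from Stichtenoth's paper \cite{stichtenoth}, so there is no in-paper argument to compare against. Your trace/Moore-matrix argument is precisely the classical Galois-descent proof one would expect, and all steps are sound: the $(\Leftarrow)$ direction is immediate, and for $(\Rightarrow)$ the key point---that the invertibility of the Moore matrix $(\lambda_i^{q^j})$ lets you recover $\mathbf{v}$ from its traces $T(\lambda_i\mathbf{v}) \in U$---is correctly identified and executed. The closing remark that an $\mathbb{F}_q$-basis of $U$ remains $\mathbb{F}_{q^m}$-linearly independent is the standard base-change observation and completes the argument cleanly.
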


\begin{lemma} \label{lemma matrix modules are galois}
Let $ \alpha_1, \alpha_2, \ldots, \alpha_m $ be a basis of $ \mathbb{F}_{q^m} $ as a vector space over $ \mathbb{F}_q $, and let $ \mathcal{V} \subseteq \mathbb{F}_{q^m}^n $ be an arbitrary set. The following are equivalent:
\begin{enumerate}
\item
$ \mathcal{V} \subseteq \mathbb{F}_{q^m}^n $ is an $ \mathbb{F}_{q^m} $-linear Galois closed vector space. That is, $ \mathcal{V} \in \Upsilon(\mathbb{F}_{q^m}^n) $.
\item
$ M_{\boldsymbol\alpha}(\mathcal{V}) \subseteq \mathbb{F}_q^{m \times n} $ is a rank support space. That is, $  M_{\boldsymbol\alpha}(\mathcal{V}) \in RS(\mathbb{F}_q^{m \times n}) $.
\end{enumerate}
Moreover, if $ M_{\boldsymbol\alpha}(\mathcal{V}) = \mathcal{V}_\mathcal{L} $ for a subspace $ \mathcal{L} \subseteq \mathbb{F}_q^n $, then
$$ \dim(\mathcal{V}) = \dim(\mathcal{L}), $$
where $ \dim(\mathcal{V}) $ is taken over $ \mathbb{F}_{q^m} $ and $ \dim(\mathcal{L}) $ over $ \mathbb{F}_q $.
\end{lemma}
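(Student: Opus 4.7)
The plan is to leverage Lemma \ref{lemma stichtenoth}, which characterizes Galois closed spaces as those admitting a basis of vectors in $\mathbb{F}_q^n$, and to translate such a basis into the concrete structure of a rank support space via the matrix representation map $M_{\boldsymbol\alpha}$. The key observation is that if we fix coordinates $\mathbf{b}_j \in \mathbb{F}_q^n$, then multiplication by a scalar $\beta = \sum_i \alpha_i \beta_i \in \mathbb{F}_{q^m}$ (with $\beta_i \in \mathbb{F}_q$) produces a matrix whose $i$-th row equals $\beta_i \mathbf{b}_j$, so the $\alpha_i$-expansion of an $\mathbb{F}_{q^m}$-linear combination is exactly a listing of rows from the $\mathbb{F}_q$-span of the $\mathbf{b}_j$'s.

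For the implication $(1) \Longrightarrow (2)$, I would start by taking a basis $\mathbf{b}_1, \ldots, \mathbf{b}_k \in \mathbb{F}_q^n$ of $\mathcal{V}$ over $\mathbb{F}_{q^m}$, which exists by Lemma \ref{lemma stichtenoth}. Writing an arbitrary $\mathbf{c} = \sum_{j=1}^k \beta_j \mathbf{b}_j \in \mathcal{V}$ and expanding each $\beta_j = \sum_{i=1}^m \alpha_i \beta_{i,j}$ with $\beta_{i,j} \in \mathbb{F}_q$ gives
\begin{equation*}
\mathbf{c} = \sum_{i=1}^m \alpha_i \left( \sum_{j=1}^k \beta_{i,j} \mathbf{b}_j \right),
\end{equation*}
so the $i$-th row of $M_{\boldsymbol\alpha}(\mathbf{c})$ is $\sum_j \beta_{i,j} \mathbf{b}_j \in \mathcal{L}$, where $\mathcal{L} = \langle \mathbf{b}_1, \ldots, \mathbf{b}_k \rangle_{\mathbb{F}_q}$. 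This shows $M_{\boldsymbol\alpha}(\mathcal{V}) \subseteq \mathcal{V}_\mathcal{L}$. Conversely, any $V \in \mathcal{V}_\mathcal{L}$ has rows $V_i = \sum_j \beta_{i,j} \mathbf{b}_j$, so $V = M_{\boldsymbol\alpha}(\mathbf{c})$ for the same $\mathbf{c}$ built from these coefficients, giving the reverse containment.

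For $(2) \Longrightarrow (1)$, I would run the argument in reverse: choose an $\mathbb{F}_q$-basis $\mathbf{b}_1, \ldots, \mathbf{b}_k$ of $\mathcal{L}$, and verify that $\mathcal{V} = M_{\boldsymbol\alpha}^{-1}(\mathcal{V}_\mathcal{L})$ equals the $\mathbb{F}_{q^m}$-span of $\mathbf{b}_1, \ldots, \mathbf{b}_k$, using exactly the expansion above. Since this span has a basis of $\mathbb{F}_q$-vectors, Lemma \ref{lemma stichtenoth} concludes that $\mathcal{V}$ is Galois closed and $\mathbb{F}_{q^m}$-linear.

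Finally, the dimension statement falls out of this correspondence: the same family $\mathbf{b}_1, \ldots, \mathbf{b}_k$ serves as an $\mathbb{F}_{q^m}$-basis of $\mathcal{V}$ and an $\mathbb{F}_q$-basis of $\mathcal{L}$, so $\dim_{\mathbb{F}_{q^m}}(\mathcal{V}) = k = \dim_{\mathbb{F}_q}(\mathcal{L})$. I do not foresee a genuine obstacle here; the only subtlety is to check linear independence on both sides carefully, which follows from the $\mathbb{F}_q$-linear independence of $\mathbf{b}_1, \ldots, \mathbf{b}_k$ combined with the uniqueness of the $\alpha_i$-expansion in $\mathbb{F}_{q^m}$.
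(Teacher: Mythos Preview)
Your proof is correct and follows essentially the same route as the paper: both invoke Lemma~\ref{lemma stichtenoth} to obtain an $\mathbb{F}_{q^m}$-basis $\mathbf{b}_1,\ldots,\mathbf{b}_k \in \mathbb{F}_q^n$ and then translate it across $M_{\boldsymbol\alpha}$. The only difference is packaging: the paper passes through the $\mathbb{F}_q$-basis $\alpha_i\mathbf{b}_j$ of $\mathcal{V}$, observes that $M_{\boldsymbol\alpha}(\alpha_i\mathbf{b}_j)=B_{i,j}$, and then quotes item~2 of Proposition~\ref{prop characterization}, whereas you carry out the row-by-row computation to verify $M_{\boldsymbol\alpha}(\mathcal{V})=\mathcal{V}_\mathcal{L}$ directly from Definition~\ref{def rank support spaces}.
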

\begin{proof}
We first observe the following. For an arbitrary set $ \mathcal{V} \subseteq \mathbb{F}_{q^m}^n $, the previous lemma states that $ \mathcal{V} $ is an $ \mathbb{F}_{q^m} $-linear Galois closed vector space if, and only if, $ \mathcal{V} $ is $ \mathbb{F}_q $-linear and it has a basis over $ \mathbb{F}_q $ of the form $ \mathbf{v}_{i,j} = \alpha_i \mathbf{b}_j $, for $ i = 1,2, \ldots, m $ and $ j = 1,2, \ldots, k $, where $ \mathbf{b}_1, \mathbf{b}_2, \ldots, \mathbf{b}_k \in \mathbb{F}_q^n $. By considering $ B_{i,j} = M_{\boldsymbol\alpha}(\mathbf{v}_{i,j}) \in \mathbb{F}_q^{m \times n} $, we see that this condition is equivalent to item 2 in Proposition \ref{prop characterization}, and we are done.
\end{proof}

{\color{black}
\begin{remark}
The results in this subsection can be extended to Galois extensions of fields $ \mathbb{F} \subseteq \widetilde{\mathbb{F}} $ of finite degree $ m $. For that purpose, we only need to define Galois closed spaces as those $ \widetilde{\mathbb{F}} $-linear subspaces $ \mathcal{V} \subseteq \widetilde{\mathbb{F}}^n $ that are closed under the action of every field morphism in the Galois group of the extension $ \mathbb{F} \subseteq \widetilde{\mathbb{F}} $. The rest of definitions and results in this subsection can be directly translated word by word to this case, except for Lemma \ref{lemma stichtenoth}, which would be replaced by \cite[Theorem 1]{galoisinvariance}.

Thus the results in this subsection can be applied to generalizations of rank-metric codes such as those in \cite{augot}.
\end{remark}
}

\subsection{RGMWs extend relative generalized Hamming weights} \label{subsec RGMW extend RGHW}

In this subsection, we show that RGMWs and RDRPs also extend RGHWs and RDLPs \cite{forney, luo, wei}, respectively. We start with the definitions of Hamming supports and Hamming support spaces:

\begin{definition} [\textbf{Hamming supports}]
Given a vector space $ \mathcal{C} \subseteq \mathbb{F}^n $, we define its Hamming support as
\begin{equation*}
\begin{split}
{\rm HSupp}(\mathcal{C}) = \{ & i \in \{ 1,2, \ldots, n \} \mid \\
 & \exists (c_1, c_2, \ldots, c_n) \in \mathcal{C}, c_i \neq 0 \}.
\end{split}
\end{equation*}
We also define the Hamming weight of the space $ \mathcal{C} $ as 
\begin{equation*}
{\rm wt_H}(\mathcal{C}) = |{\rm HSupp}(\mathcal{C}) |.
\end{equation*}
Finally, for a vector $ \mathbf{c} \in \mathbb{F}^n $, we define its Hamming support as $ {\rm HSupp}(\mathbf{c}) = {\rm HSupp}(\langle \{ \mathbf{c} \} \rangle) $, and its Hamming weight as $ {\rm wt_H}(\mathbf{c}) = {\rm wt_H}(\langle \{ \mathbf{c} \} \rangle) $.
\end{definition}

\begin{definition} [\textbf{Hamming support spaces}]
Given a subset $ I \subseteq \{ 1,2, \ldots, n \} $, we define its Hamming support space as the vector space in $ \mathbb{F}^n $ given by
\begin{equation*}
\mathcal{L}_I = \{ (c_1, c_2, \ldots, c_n) \in \mathbb{F}^n \mid c_i = 0, \forall i \notin I \}.
\end{equation*}
\end{definition}

We may now define RGHWs and RDLPs:

\begin{definition} [\textbf{Relative Generalized Hamming Weigths \cite[Section III]{luo}}]
Given nested linear codes $ \mathcal{C}_2 \subsetneqq \mathcal{C}_1 \subseteq \mathbb{F}^n $, and $ 1 \leq r \leq \ell = \dim(\mathcal{C}_1 / \mathcal{C}_2) $, we define their $ r $-th relative generalized Hamming weight (RGHW) as
\begin{equation*}
\begin{split}
d_{H,r}(\mathcal{C}_1, \mathcal{C}_2) = \min \{ & | I | \mid I \subseteq \{ 1,2, \ldots, n \}, \\
 & \dim(\mathcal{C}_1 \cap \mathcal{L}_I) - \dim(\mathcal{C}_2 \cap \mathcal{L}_I) \geq r \}.
\end{split}
\end{equation*}
\end{definition}

As in Proposition \ref{proposition as minimum rank weights}, it holds that
\begin{equation*}
\begin{split}
d_{H,r}(\mathcal{C}_1, \mathcal{C}_2) = \min \{ & {\rm wt_H}(\mathcal{D}) \mid \mathcal{D} \subseteq \mathcal{C}_1, \mathcal{D} \cap \mathcal{C}_2 = \{ 0 \}, \\
 & \dim(\mathcal{D}) = r \}.
\end{split}
\end{equation*}

Given a linear code $ \mathcal{C} \subseteq \mathbb{F}^n $, we see that its $ r $-th GHW \cite[Section II]{wei} is $ d_{H,r}(\mathcal{C}) = d_{H,r}(\mathcal{C}, \{ \mathbf{0} \}) $, for $ 1 \leq r \leq \dim(\mathcal{C}) $.

\begin{definition} [\textbf{Relative Dimension/Length Profile \cite{forney, luo}}]
Given nested linear codes $ \mathcal{C}_2 \subsetneqq \mathcal{C}_1 \subseteq \mathbb{F}^n $, and $ 0 \leq \mu \leq n $, we define their $ \mu $-th relative dimension/length profile (RDLP) as
\begin{equation*}
\begin{split}
K_{H,\mu}(\mathcal{C}_1, \mathcal{C}_2) = \max \{ & \dim(\mathcal{C}_1 \cap \mathcal{L}_I) - \dim(\mathcal{C}_2 \cap \mathcal{L}_I) \mid \\
 & I \subseteq \{ 1,2, \ldots, n \}, | I | \leq \mu \}.
\end{split}
\end{equation*}
\end{definition}

To prove {\color{black}our results}, we need to see vectors in $ \mathbb{F}^n $ as matrices in $ \mathbb{F}^{n \times n} $. To that end, we introduce the diagonal matrix representation map $ \Delta : \mathbb{F}^n \longrightarrow \mathbb{F}^{n \times n} $ given by
\begin{equation}
\Delta (\mathbf{c}) = {\rm diag}(\mathbf{c}) = (c_i \delta_{i,j})_{1 \leq i \leq n, 1 \leq j \leq n},
\end{equation}
where $ \mathbf{c} = (c_1, c_2, \ldots, c_n) \in \mathbb{F}^n $ and $ \delta_{i,j} $ represents the Kronecker delta. In other words, $ \Delta (\mathbf{c}) $ is the diagonal matrix whose diagonal vector is $ \mathbf{c} $.

The map $ \Delta : \mathbb{F}^n \longrightarrow \mathbb{F}^{n \times n} $ is linear, one to one {\color{black}and, for any vector space $ \mathcal{D} \subseteq \mathbb{F}^n $, it holds that
$$ {\rm wt_R}(\Delta(\mathcal{D})) = {\rm wt_H}(\mathcal{D}). $$
}

{\color{black}We may now give the main result of this subsection: }

\begin{theorem} \label{th RGMW extend RGHW}
Given nested linear codes $ \mathcal{C}_2 \subsetneqq \mathcal{C}_1 \subseteq \mathbb{F}^n $, and integers $ 1 \leq r \leq \ell = \dim(\mathcal{C}_1 / \mathcal{C}_2) $, and $ 0 \leq \mu \leq n $, we have that
\begin{equation*}
d_{H,r}(\mathcal{C}_1, \mathcal{C}_2) = d_{M,r}(\Delta(\mathcal{C}_1), \Delta(\mathcal{C}_2)),
\end{equation*}
\begin{equation*}
K_{H,\mu}(\mathcal{C}_1, \mathcal{C}_2) = K_{M,\mu}(\Delta(\mathcal{C}_1), \Delta(\mathcal{C}_2)).
\end{equation*}
\end{theorem}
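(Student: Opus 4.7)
The plan is to deduce both equalities from two characterizations already at our disposal plus the simple identity $ {\rm wt_R}(\Delta(\mathcal{D})) = {\rm wt_H}(\mathcal{D}) $ for every subspace $ \mathcal{D} \subseteq \mathbb{F}^n $ recorded just before the statement. For the RGMW identity, I would combine Proposition \ref{proposition as minimum rank weights} with the analogous characterization of $ d_{H,r} $ in terms of a minimum Hamming weight over complementary subspaces (stated immediately after the definition of RGHW). Set $ \mathcal{C}_j^\prime = \Delta(\mathcal{C}_j) $ for $ j = 1,2 $. The map $ \Delta $ is an $ \mathbb{F} $-linear injection, so the correspondence $ \mathcal{D} \longleftrightarrow \Delta(\mathcal{D}) $ is a dimension-preserving bijection between subspaces of $ \mathcal{C}_1 $ and subspaces of $ \mathcal{C}_1^\prime $, and satisfies $ \mathcal{D} \cap \mathcal{C}_2 = \{ 0 \} $ if and only if $ \Delta(\mathcal{D}) \cap \mathcal{C}_2^\prime = \{ 0 \} $. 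Since the weights also match, the two minima coincide.

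For the RDRP identity I would proceed directly. The crucial observation is the following rewriting of the intersection $ \Delta(\mathcal{C}_j) \cap \mathcal{V}_\mathcal{L} $ for an arbitrary subspace $ \mathcal{L} \subseteq \mathbb{F}^n $. Letting $ \mathbf{e}_1, \ldots, \mathbf{e}_n $ denote the standard basis and $ I_\mathcal{L} = \{ i : \mathbf{e}_i \in \mathcal{L} \} $, the $ i $-th row of $ \Delta(\mathbf{c}) $ is $ c_i \mathbf{e}_i $, so $ \Delta(\mathbf{c}) \in \mathcal{V}_\mathcal{L} $ if and only if $ c_i = 0 $ for every $ i \notin I_\mathcal{L} $, i.e., $ \mathbf{c} \in \mathcal{L}_{I_\mathcal{L}} $. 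Hence $ \Delta(\mathcal{C}_j) \cap \mathcal{V}_\mathcal{L} = \Delta( \mathcal{C}_j \cap \mathcal{L}_{I_\mathcal{L}} ) $, and since $ \{ \mathbf{e}_i \}_{i \in I_\mathcal{L}} $ is a linearly independent set inside $ \mathcal{L} $ we have $ |I_\mathcal{L}| \leq \dim(\mathcal{L}) \leq \mu $. This shows $ K_{M,\mu}(\Delta(\mathcal{C}_1), \Delta(\mathcal{C}_2)) \leq K_{H,\mu}(\mathcal{C}_1, \mathcal{C}_2) $. The reverse inequality is obtained by restricting the maximum on the rank-metric side to the particular spaces $ \mathcal{L} = \mathcal{L}_I $ with $ |I| \leq \mu $, for which $ I_\mathcal{L} = I $ and both sides match exactly.

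Alternatively, once the RGMW identity is established, the RDRP identity is automatic: Proposition \ref{connection between RDIP RGMW} expresses $ d_{M,r} $ as the inverse profile of $ K_{M,\mu} $, and the completely analogous identity holds between $ d_{H,r} $ and $ K_{H,\mu} $ (this is classical, going back to Forney). Inverting both relations simultaneously forces the equality of $ K_{M,\mu} $ and $ K_{H,\mu} $ from the equality of $ d_{M,r} $ and $ d_{H,r} $. I would include whichever argument is shorter in context.

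I do not anticipate any genuine obstacle. The only point worth being careful about is that the intersection $ \Delta(\mathbb{F}^n) \cap \mathcal{V}_\mathcal{L} $ depends on $ \mathcal{L} $ only through the set of standard basis vectors it contains, not through $ \mathcal{L} $ as a whole; this is precisely the mechanism that collapses the rank-support-space maximization to the coordinate-subset maximization and hence relates the two notions of generalized weight. All remaining steps are routine dimension counts using the injectivity and $ \mathbb{F} $-linearity of $ \Delta $.
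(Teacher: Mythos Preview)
Your proposal is correct. The mechanism you use for the RDRP identity---the observation that $\Delta(\mathbb{F}^n)\cap\mathcal{V}_\mathcal{L}=\Delta(\mathcal{L}_{I_\mathcal{L}})$ with $I_\mathcal{L}=\{i:\mathbf{e}_i\in\mathcal{L}\}$ and $|I_\mathcal{L}|\leq\dim(\mathcal{L})$---is exactly what the paper uses, except that the paper applies it to the RGMW identity and then declares the RDRP case analogous. Your route for the RGMW identity is genuinely different: you invoke Proposition~\ref{proposition as minimum rank weights} and the parallel Wei-type description of $d_{H,r}$, together with the already-recorded fact ${\rm wt_R}(\Delta(\mathcal{D}))={\rm wt_H}(\mathcal{D})$, so the bijection $\mathcal{D}\leftrightarrow\Delta(\mathcal{D})$ matches the two minima in a single line, bypassing the paper's two-sided inequality argument through $\mathcal{V}_\mathcal{L}$ and $\mathcal{L}_I$. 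This is shorter and more transparent, at the cost of relying on Proposition~\ref{proposition as minimum rank weights}; the paper's direct approach is self-contained from the definitions. Your alternative inversion argument for $K$ from $d$ is also valid, since $K_\mu=\max\{r:d_r\leq\mu\}$ (with the convention $K_\mu=0$ when the set is empty) recovers the profile uniquely from the weights.
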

{\color{black}\begin{proof}
We prove the first equality, being the second analogous. Denote by $ d_r $ the number on the left-hand side and by $ d_r^\prime $ the number on the right-hand side, and prove both inequalities:

$ d_r \leq d_r^\prime $: Take a vector space $ \mathcal{L} \subseteq \mathbb{F}^n $ such that $ \dim(\mathcal{L}) = d_r^\prime $ and $ \dim((\Delta(\mathcal{C}_1) \cap \mathcal{V}_\mathcal{L}) / (\Delta(\mathcal{C}_2) \cap \mathcal{V}_\mathcal{L})) \geq r $. It holds that $ \mathcal{V}_\mathcal{L} \cap \Delta(\mathbb{F}^n) = \Delta(\mathcal{L}_I) $, for some subset $ I \subseteq \{ 1,2, \ldots, n \} $. We have that $ \dim((\mathcal{C}_1 \cap \mathcal{L}_I) / (\mathcal{C}_2 \cap \mathcal{L}_I)) \geq r $ and
$$ d_r \leq | I | = {\rm wt_R}(\Delta(\mathcal{L}_I)) \leq {\rm wt_R}(\mathcal{V}_\mathcal{L}) = \dim(\mathcal{L}) = d_r^\prime. $$

$ d_r \geq d_r^\prime $: Take a subset $ I \subseteq \{ 1,2, \ldots, n \} $ such that $ | I | = d_r $ and $ \dim((\mathcal{C}_1 \cap \mathcal{L}_I) / (\mathcal{C}_2 \cap \mathcal{L}_I)) \geq r $. Now it holds that $ \Delta(\mathcal{L}_I) = \mathcal{V}_{\mathcal{L}_I} \cap \Delta(\mathbb{F}^n) $. Therefore $ \dim((\Delta(\mathcal{C}_1) \cap \mathcal{V}_{\mathcal{L}_I}) / (\Delta(\mathcal{C}_2) \cap \mathcal{V}_{\mathcal{L}_I})) \geq r $ and
$$ d_r^\prime \leq \dim(\mathcal{L}_I) = | I | = d_r. $$
\end{proof}
}

%
%

\subsection{Relation with Delsarte generalized weights} \label{subsec GMW improve DGW}

A notion of generalized weights, called Delsarte generalized weights (DGWs), for a linear code{\color{black}, which in this section means $ \mathcal{C}_1 = \mathcal{C} $ arbitrary and $ \mathcal{C}_2 = \{ 0 \} $} has already been proposed in \cite{ravagnaniweights} as an algebraic invariant of the code. We will prove that GMWs are strictly larger than DGWs when {\color{black}$ m = n $}, and we will prove that both coincide in the other cases.

These weights are defined in terms of optimal anticodes for the rank metric:

\begin{definition} [\textbf{Maximum rank distance}]
For a linear code $ \mathcal{C} \subseteq \mathbb{F}^{m \times n} $, we define its maximum rank distance as 
$$ {\rm MaxRk}(\mathcal{C}) = \max \{ {\rm Rk (C)} \mid C \in \mathcal{C}, C \neq 0 \}. $$
\end{definition}

The following bound is given in \cite[Proposition 47]{ravagnani}:
\begin{equation} \label{anticode bound}
\dim(\mathcal{C}) \leq m {\rm MaxRk}(\mathcal{C}).
\end{equation}

This leads to the definition of rank-metric optimal anticodes:

\begin{definition}[\textbf{Optimal anticodes \cite[Definition 22]{ravagnaniweights}}]
We say that a linear code $ \mathcal{V} \subseteq \mathbb{F}^{m \times n} $ is a (rank-metric) optimal anticode if equality in (\ref{anticode bound}) holds. 

We will denote by $ A(\mathbb{F}^{m \times n}) $ the family of linear optimal anticodes in $ \mathbb{F}^{m \times n} $.
\end{definition}

In view of this, DGWs are defined in \cite{ravagnaniweights} as follows:

\begin{definition}[\textbf{Delsarte generalized weights \cite[Definition 23]{ravagnaniweights}}] \label{DGW}
For a linear code $ \mathcal{C} \subseteq \mathbb{F}^{m \times n} $ and an integer $ 1 \leq r \leq \dim(\mathcal{C}) $, we define its $ r $-th Delsarte generalized weight (DGW) as
\begin{equation*}
\begin{split}
d_{D,r}(\mathcal{C}) = m^{-1} \min \{ & \dim(\mathcal{V}) \mid \mathcal{V} \in A(\mathbb{F}^{m \times n}), \\
 & \dim(\mathcal{C} \cap \mathcal{V}) \geq r \}.
\end{split}
\end{equation*}
\end{definition}

Observe that $ d_{D,r}(\mathcal{C}) $ is an integer since the dimension of optimal anticodes is a multiple of $ m $ by definition.

{\color{black}Before giving the main result, we need the following proposition: }

\begin{proposition} \label{proposition ravagnani}
If a set $ \mathcal{V} \subseteq \mathbb{F}^{m \times n} $ is a rank support space, then it is a (rank-metric) optimal anticode. In other words, $ RS(\mathbb{F}^{m \times n}) \subseteq A(\mathbb{F}^{m \times n}) $. The reversed inclusion also holds if $ m \neq n $.
\end{proposition}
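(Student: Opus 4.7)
The plan is to handle the two inclusions separately. The forward inclusion follows from a direct dimension count, while the reverse direction for $m \neq n$ requires the classification of rank-metric optimal anticodes.

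For $\mathrm{RS}(\mathbb{F}^{m \times n}) \subseteq A(\mathbb{F}^{m \times n})$, I will fix $\mathcal{V} = \mathcal{V}_\mathcal{L}$ with $k = \dim(\mathcal{L})$. Equation~(\ref{dimension matrix modules}) in Proposition~\ref{prop characterization} immediately gives $\dim(\mathcal{V}_\mathcal{L}) = mk$. Since ${\rm Row}(V) \subseteq \mathcal{L}$ for every $V \in \mathcal{V}_\mathcal{L}$, we also have ${\rm Rk}(V) \leq k$, with equality attained by any matrix having a basis of $\mathcal{L}$ as its first non-zero rows (padded with zero rows if $k < m$). Thus ${\rm MaxRk}(\mathcal{V}_\mathcal{L}) = k$ and $\dim(\mathcal{V}_\mathcal{L}) = m \cdot {\rm MaxRk}(\mathcal{V}_\mathcal{L})$, i.e., the anticode bound~(\ref{anticode bound}) is tight, so $\mathcal{V}_\mathcal{L} \in A(\mathbb{F}^{m \times n})$.

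For the reverse inclusion under the hypothesis $m \neq n$, my plan is to invoke the known classification of rank-metric optimal anticodes, essentially due to Meshulam and spelled out in Ravagnani's work on generalized weights. That classification identifies two families of optimal anticodes in $\mathbb{F}^{m\times n}$: the row-constrained ones, which are precisely the rank support spaces $\mathcal{V}_\mathcal{L}$ for $\mathcal{L} \subseteq \mathbb{F}^n$, and the column-constrained ones, i.e., matrices with column space contained in a fixed subspace $\mathcal{K} \subseteq \mathbb{F}^m$. A direct calculation shows that a column-constrained anticode with parameter $\mathcal{K}$ of dimension $k'$ has dimension $nk'$ and maximum rank $k'$, and so saturates the bound $\dim = m \cdot {\rm MaxRk}$ exactly when $m = n$. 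Consequently, when $m \neq n$, the column-constrained family collapses and every optimal anticode must already be a rank support space, giving the desired equality $\mathrm{RS}(\mathbb{F}^{m\times n}) = A(\mathbb{F}^{m\times n})$.

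The main obstacle is the reverse inclusion: it cannot be reduced to a one-line dimension estimate, but really depends on the (nontrivial) classification of rank-metric optimal anticodes in the literature. Once that classification is in hand, however, the asymmetry $m \neq n$ eliminates the column-constrained alternative by the direct dimension comparison above, and the result follows cleanly. The forward direction, by contrast, is almost immediate from equation~(\ref{dimension matrix modules}) together with the trivial upper bound on the rank of matrices whose rows lie in $\mathcal{L}$.
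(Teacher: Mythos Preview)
Your proposal is correct and follows essentially the same route as the paper. For the forward inclusion you use the dimension formula $\dim(\mathcal{V}_\mathcal{L})=m\dim(\mathcal{L})$ together with the trivial bound ${\rm Rk}(V)\le\dim(\mathcal{L})$, exactly as the paper does (the paper phrases it via the basis $B_{i,j}$ of Proposition~\ref{prop characterization}, item~2); for the reverse inclusion both you and the paper invoke Ravagnani's classification of rank-metric optimal anticodes (the paper cites it as \cite[Theorem~26]{ravagnaniweights} in the form $\mathcal{V}=\{ACB\mid C\in\mathcal{V}_\mathcal{L}\}$ and then observes this is a rank support space, while you phrase the same classification as ``row-constrained versus column-constrained'' and eliminate the latter via the dimension count $nk'\neq mk'$ when $m\neq n$).
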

\begin{proof}
We first prove that $ RS(\mathbb{F}^{m \times n}) \subseteq A(\mathbb{F}^{m \times n}) $. Let $ \mathcal{V} \in RS(\mathbb{F}^{m \times n}) $ and let $ B_{i,j} $, $ i = 1,2, \ldots, m $ and $ j = 1,2, \ldots, k $, be a basis of $ \mathcal{V} $ as in Proposition \ref{prop characterization}, item 2. For any $ V = \sum_{i=1}^m \sum_{j=1}^k \lambda_{i,j} B_{i,j} \in \mathcal{V} $, with $ \lambda_{i,j} \in \mathbb{F} $, it holds that 
$$ {\rm Rk}(V) \leq \dim( \langle \mathbf{b}_1, \mathbf{b}_2, \ldots, \mathbf{b}_k \rangle ) = k, $$
where $ \mathbf{b}_1, \mathbf{b}_2, \ldots, \mathbf{b}_k $ are as in Proposition \ref{prop characterization}, item 2. Therefore $ \dim(\mathcal{V}) = mk \geq m {\rm MaxRk}(\mathcal{V}) $ and $ \mathcal{V} $ is an optimal anticode.

We now prove that $  A(\mathbb{F}^{m \times n}) \subseteq RS(\mathbb{F}^{m \times n}) $ when $ m \neq n $. Let $ \mathcal{V} \in A(\mathbb{F}^{m \times n}) $. By \cite[Theorem 26]{ravagnaniweights}, there exist full-rank matrices $ A \in \mathbb{F}_q^{m \times m} $ and $ B \in \mathbb{F}_q^{n \times n} $ such that $ \mathcal{V} = \{ ACB \in \mathbb{F}_q^{m \times n} \mid C \in \mathcal{V}_\mathcal{L} \} $, where $ \mathcal{L} = \mathbb{F}_q^k \times \{ 0 \}^{n-k} $ for some positive integer $ k $. By Proposition \ref{prop characterization}, $ \mathcal{V} $ is a rank support space and we are done.
\end{proof}

{\color{black}In \cite[Theorem 18]{ravagnaniweights} it is proven that $ \mathcal{V} \subseteq \mathbb{F}_{q^m}^n $ is an $ \mathbb{F}_{q^m} $-linear Galois closed vector space if, and only if, it is an $ \mathbb{F}_{q^m} $-linear vector space satisfying equality in (\ref{anticode bound}). Hence due to Lemma \ref{lemma matrix modules are galois}, the previous proposition strengthens \cite[Theorem 18]{ravagnaniweights} when $ m \neq n $ by showing that the $ \mathbb{F}_{q^m} $-linearity of $ \mathcal{V} $ may be weakened to $ \mathbb{F}_q $-linearity. Moreover, our result holds for any field $ \mathbb{F} \neq \mathbb{F}_q $. 

The main result of this subsection is the next theorem, which follows from the previous proposition and the corresponding definitions: }

\begin{theorem} \label{th GMW extend DGW}
For a linear code $ \mathcal{C} \subseteq \mathbb{F}^{m \times n} $ and an integer $ 1 \leq r \leq \dim(\mathcal{C}) $, we have that
\begin{equation*}
\begin{split}
d_{D,r}(\mathcal{C}) & \leq d_{M,r}(\mathcal{C}) \textrm{ if } m=n, \textrm{ and} \\
d_{D,r}(\mathcal{C}) & = d_{M,r}(\mathcal{C}) \textrm{ if } m \neq n.
\end{split}
\end{equation*}
\end{theorem}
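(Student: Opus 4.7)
The proof is essentially a direct translation of Proposition \ref{proposition ravagnani} through the definitions of the two weights. The plan is to rewrite the DGW as a minimum over optimal anticodes and the GMW as a minimum over rank support spaces, and then compare the two families using the inclusion $RS(\mathbb{F}^{m \times n}) \subseteq A(\mathbb{F}^{m \times n})$ (always), which becomes an equality when $m \neq n$.

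First I would observe that, for any subspace $\mathcal{L} \subseteq \mathbb{F}^n$, the rank support space $\mathcal{V}_\mathcal{L}$ is an optimal anticode by Proposition \ref{proposition ravagnani}, and by equation (\ref{dimension matrix modules}) it satisfies $m^{-1}\dim(\mathcal{V}_\mathcal{L}) = \dim(\mathcal{L})$. Therefore, whenever $\mathcal{L}$ is admissible in the definition of $d_{M,r}(\mathcal{C})$, i.e.\ $\dim(\mathcal{C} \cap \mathcal{V}_\mathcal{L}) \geq r$, the space $\mathcal{V}_\mathcal{L}$ is admissible in the definition of $d_{D,r}(\mathcal{C})$ with the same value $m^{-1}\dim(\mathcal{V}_\mathcal{L}) = \dim(\mathcal{L})$. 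Taking minima yields
\[
d_{D,r}(\mathcal{C}) \;\leq\; m^{-1}\dim(\mathcal{V}_\mathcal{L}) \;=\; \dim(\mathcal{L}),
\]
and minimizing over all admissible $\mathcal{L}$ gives $d_{D,r}(\mathcal{C}) \leq d_{M,r}(\mathcal{C})$. This handles the case $m=n$ and provides one of the two inequalities in the case $m \neq n$.

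For the reverse inequality when $m \neq n$, I would use the second half of Proposition \ref{proposition ravagnani}, namely $A(\mathbb{F}^{m \times n}) \subseteq RS(\mathbb{F}^{m \times n})$. Any $\mathcal{V} \in A(\mathbb{F}^{m \times n})$ admissible for $d_{D,r}(\mathcal{C})$ is then of the form $\mathcal{V} = \mathcal{V}_\mathcal{L}$ for some $\mathcal{L} \subseteq \mathbb{F}^n$ (by Lemma \ref{basic lemma matrix modules}, this $\mathcal{L}$ is just $\mathrm{RSupp}(\mathcal{V})$), and again $m^{-1}\dim(\mathcal{V}) = \dim(\mathcal{L})$ while the condition $\dim(\mathcal{C} \cap \mathcal{V}_\mathcal{L}) \geq r$ is exactly the one appearing in the definition of $d_{M,r}(\mathcal{C})$. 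Hence every competitor for $d_{D,r}$ gives a competitor for $d_{M,r}$ with the same numerical value, and therefore $d_{M,r}(\mathcal{C}) \leq d_{D,r}(\mathcal{C})$. Combined with the first inequality this yields equality when $m \neq n$.

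There is no real obstacle here: all the substantive content has been absorbed into Proposition \ref{proposition ravagnani} and equation (\ref{dimension matrix modules}). The only mild subtlety is to keep track of the factor $m^{-1}$ in the definition of $d_{D,r}$ versus its absence in $d_{M,r}$, which is exactly compensated by $\dim(\mathcal{V}_\mathcal{L}) = m\dim(\mathcal{L})$; and to note that for $m=n$ one cannot in general reverse the inequality because $A(\mathbb{F}^{n \times n})$ is strictly larger than $RS(\mathbb{F}^{n \times n})$ (e.g.\ transposes of rank support spaces are optimal anticodes but typically not rank support spaces), so additional competitors may reduce the DGW below the GMW.
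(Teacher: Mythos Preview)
Your proof is correct and follows exactly the approach the paper indicates: the paper does not spell out a proof but states that the theorem ``follows from the previous proposition and the corresponding definitions,'' referring to Proposition \ref{proposition ravagnani}. You have carried out precisely this comparison, rewriting $d_{M,r}$ as a minimum over rank support spaces via $\dim(\mathcal{V}_\mathcal{L}) = m\dim(\mathcal{L})$ and then using $RS(\mathbb{F}^{m \times n}) \subseteq A(\mathbb{F}^{m \times n})$ (with equality when $m\neq n$) to compare the two minima.
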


{\color{black}Due to Theorem \ref{maximum secrecy}, when considering universal security on linearly coded networks it is desirable to obtain linear codes $ \mathcal{C} \subseteq \mathbb{F}_q^{m \times n} $ with large GMWs. Therefore, linear codes with large DGWs serve this purpose, but linear codes with low DGWs may still have large GMWs when $ m = n $. }

The next example shows that not all linear optimal anticodes are rank support spaces when $ m=n $, that is, $ RS(\mathbb{F}^{n \times n}) \subsetneqq A(\mathbb{F}^{n \times n}) $, for any $ n $ and any field $ \mathbb{F} $. As a consequence, in some cases GMWs are strictly larger than DGWs. To that end, we will use the characterization of rank support spaces as matrix modules from Appendix \ref{app matrix modules}.

\begin{example} \label{example delsarte less than}
Consider $ m = n = 2 $ and the linear code
\begin{displaymath}
\mathcal{C} = \left\langle \left( \begin{array}{cc}
1 & 0 \\
0 & 0
\end{array} \right), \left( \begin{array}{cc}
0 & 1 \\
0 & 0
\end{array} \right) \right\rangle \subseteq \mathbb{F}^{2 \times 2}.
\end{displaymath}
It holds that $ \dim(\mathcal{C}) = 2 $, $ m = 2 $ and $ {\rm MaxRk}(\mathcal{C}) = 1 $. Therefore $ \mathcal{C} $ is an optimal anticode. However, it is not a matrix module, and therefore it is not a rank support space (see Appendix \ref{app matrix modules}), since
\begin{displaymath}
\left( \begin{array}{cc}
0 & 0 \\
1 & 0
\end{array} \right) \left( \begin{array}{cc}
1 & 0 \\
0 & 0
\end{array} \right) = \left( \begin{array}{cc}
0 & 0 \\
1 & 0
\end{array} \right) \notin \mathcal{C}.
\end{displaymath}
In other words, $ RS(\mathbb{F}^{2 \times 2}) \subsetneqq A(\mathbb{F}^{2 \times 2}) $.

On the one hand, we have that $ d_{D,1}(\mathcal{C}) = d_{D,2}(\mathcal{C}) = 1 $, by \cite[Corollary 32]{ravagnaniweights}, or just by inspection.

On the other hand, it is easy to check that $ d_{M,1}(\mathcal{C}) = 1 $, and since $ {\rm RSupp}(\mathcal{C}) = \mathbb{F}^2 $, it holds that $ d_{M,2}(\mathcal{C}) = 2 $. Therefore $ d_{M,2}(\mathcal{C}) > d_{D,2}(\mathcal{C}) $.

Observe that we may trivially extend this example to any value of $ m=n $, and it holds for an arbitrary field $ \mathbb{F} $.
\end{example}

{\color{black}
\section{Conclusion and open problems}

In this work, we have extended the study of universal security provided by $ \mathbb{F}_{q^m} $-linear nested coset coding schemes from \cite{rgrw, silva-universal} to that provided by $ \mathbb{F}_q $-linear schemes, where $ \mathbb{F}_q $ is the field used on the network and $ m $ is the packet length. 

Thanks to this study, we have completed the list of parameters $ \ell $, $ t $, $ m $ and $ n $ for which we can obtain optimal universal secure $ \mathbb{F}_q $-linear codes for noiseless networks from \cite{silva-universal}, and we have added \textit{near optimal} universal security to the rank list-decodable codes from \cite{list-decodable-rank-metric}, providing the first universal secure linear coset coding schemes able to list-decode in polynomial time roughly twice the rank errors that optimal universal secure schemes can unique-decode, with almost the same secret message size $ \ell $ and security parameter $ t $. 

Motivated by our study, we defined a family of security equivalences between linear coset coding schemes and gave mathematical characterizations of such equivalences, which allowed us to obtain, in terms of the last generalized matrix weight, ranges of parameters $ m $ and $ n $ of networks on which a linear code can be applied with the same security performance.

Finally, we give the following list of open problems:

1) Obtain optimal universal secure and error-correcting linear coset coding schemes for noisy networks for all possible parameters $ \ell $, $ t $, $ m $, $ n $, and number of rank errors. 

2) Extend the concept of universal \textit{strong security} from \cite[Definition 6]{rgrw} to general $ \mathbb{F}_q $-linear coset coding schemes, and provide optimal universal strong secure schemes as those in \cite[Section V]{rgrw} for all possible parameters $ \ell $, $ t $, $ m $ and $ n $, for either noiseless or noisy networks.

3) Subsection \ref{subsec near optimality list decoding} implies that $ \ell $ is close to but smaller than $ n - t - e $, where $ e $ is the number of list-decodable rank errors with polynomial list sizes $ L $. We conjecture, but leave as open problem, that a bound similar to $ \ell \leq n - t - e $ holds in general.

4) Study the sharpness of the bounds given in Theorem \ref{lower singleton bound}.  }

\appendices

{\color{black}
\section{Duality theory} \label{app duality theory}

In this appendix, we collect technical results concerning trace duality of linear codes in $ \mathbb{F}^{m \times n} $ used throughout the paper. Some of the results are taken or expanded from the literature, and some are new. Recall first the definition of trace product and dual of a linear code in $ \mathbb{F}^{m \times n} $ (Definition \ref{def trace inner product}).
}

First, since the trace product in $ \mathbb{F}^{m \times n} $ coincides with the usual inner product in $ \mathbb{F}^{mn} $, it holds that
$$ \dim(\mathcal{C}^\perp) = mn - \dim(\mathcal{C}), \quad \mathcal{C} \subseteq \mathcal{D} \Longleftrightarrow \mathcal{D}^\perp \subseteq \mathcal{C}^\perp, $$
$$ \mathcal{C}^{\perp \perp} = \mathcal{C}, \quad (\mathcal{C} + \mathcal{D})^\perp = \mathcal{C}^\perp \cap \mathcal{D}^\perp, \quad (\mathcal{C} \cap \mathcal{D})^\perp = \mathcal{C}^\perp + \mathcal{D}^\perp, $$
for linear codes $ \mathcal{C}, \mathcal{D} \subseteq \mathbb{F}^{m \times n} $. {\color{black}We have the following:}

\begin{lemma}[\textbf{\cite[Lemma 27]{ravagnani}}] \label{lemma dual rank support}
If $ \mathcal{V} \in RS(\mathbb{F}^{m \times n}) $, then $ \mathcal{V}^\perp \in RS(\mathbb{F}^{m \times n}) $. More concretely, for any subspace $ \mathcal{L} \subseteq \mathbb{F}^n $, it holds that
$$ (\mathcal{V}_\mathcal{L})^\perp = \mathcal{V}_{(\mathcal{L}^\perp)}. $$
\end{lemma}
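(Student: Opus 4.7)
The plan is to prove the concrete identity $(\mathcal{V}_\mathcal{L})^\perp = \mathcal{V}_{\mathcal{L}^\perp}$ directly; the first (weaker) statement, namely that $\mathcal{V}^\perp$ lies in $RS(\mathbb{F}^{m \times n})$ whenever $\mathcal{V}$ does, then follows at once from Lemma \ref{basic lemma matrix modules}, item 1.

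The key observation is that the trace inner product factors as a row-by-row sum: for any $V, D \in \mathbb{F}^{m \times n}$ with rows $\mathbf{v}_1, \ldots, \mathbf{v}_m$ and $\mathbf{d}_1, \ldots, \mathbf{d}_m$, we have $\langle V, D \rangle = \sum_{i=1}^m \mathbf{v}_i \cdot \mathbf{d}_i$. First I would establish the inclusion $\mathcal{V}_{\mathcal{L}^\perp} \subseteq (\mathcal{V}_\mathcal{L})^\perp$: if every $\mathbf{v}_i \in \mathcal{L}$ and every $\mathbf{d}_i \in \mathcal{L}^\perp$, then each summand $\mathbf{v}_i \cdot \mathbf{d}_i$ vanishes, so $\langle V, D \rangle = 0$.

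For the reverse inclusion, the cleanest route is a dimension count. By Equation (\ref{dimension matrix modules}) in Proposition \ref{prop characterization}, applied to both $\mathcal{L}$ and $\mathcal{L}^\perp$, we have
\[
\dim \mathcal{V}_{\mathcal{L}^\perp} = m \dim \mathcal{L}^\perp = m(n - \dim \mathcal{L}) = mn - \dim \mathcal{V}_\mathcal{L} = \dim (\mathcal{V}_\mathcal{L})^\perp,
\]
where the last equality uses that the trace product on $\mathbb{F}^{m \times n}$ is a non-degenerate bilinear form (being the standard inner product after identifying $\mathbb{F}^{m \times n}$ with $\mathbb{F}^{mn}$). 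Combined with the first inclusion, this forces equality. Alternatively, if one prefers a constructive argument avoiding dimensions, one notes that a matrix $D \notin \mathcal{V}_{\mathcal{L}^\perp}$ has some row $\mathbf{d}_{i_0} \notin \mathcal{L}^\perp$; choosing $\mathbf{v} \in \mathcal{L}$ with $\mathbf{v} \cdot \mathbf{d}_{i_0} \neq 0$ and letting $V \in \mathcal{V}_\mathcal{L}$ be the matrix with $\mathbf{v}$ in row $i_0$ and zero elsewhere yields $\langle V, D \rangle \neq 0$, so $D \notin (\mathcal{V}_\mathcal{L})^\perp$.

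There is no substantial obstacle here: once the row-wise decomposition of the trace product is noted, both the inclusion and the dimension identity are immediate from Proposition \ref{prop characterization}. The only thing to be slightly careful about is that one is invoking the non-degeneracy of the trace product (equivalently, $\dim \mathcal{C}^\perp = mn - \dim \mathcal{C}$ in $\mathbb{F}^{m \times n}$), which is legitimate since $\langle \cdot, \cdot \rangle$ is literally the standard inner product on $\mathbb{F}^{mn}$ after vectorization.
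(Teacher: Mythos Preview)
Your proof is correct. Note, however, that the paper does not actually supply its own proof of this lemma: it is stated in Appendix~\ref{app duality theory} with a citation to \cite[Lemma 27]{ravagnani} and no further argument. Your approach---first establishing $\mathcal{V}_{\mathcal{L}^\perp} \subseteq (\mathcal{V}_\mathcal{L})^\perp$ via the row-wise decomposition of the trace product (which is exactly how Definition~\ref{def trace inner product} presents $\langle C,D\rangle$), and then closing with a dimension count using Equation~(\ref{dimension matrix modules}) and the identity $\dim(\mathcal{C}^\perp)=mn-\dim(\mathcal{C})$ recorded at the start of Appendix~\ref{app duality theory}---is the standard argument and is fully self-contained within the tools the paper already provides. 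The alternative constructive argument you sketch for the reverse inclusion is also valid and avoids the dimension identity, though the dimension route is cleaner here.
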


\begin{lemma} [\textbf{Forney's duality \cite{forney}}] \label{forney lemma}
Given vector spaces $ \mathcal{C}, \mathcal{V} \subseteq \mathbb{F}^{m \times n} $, it holds that
\begin{equation*}
\dim(\mathcal{V}) - \dim((\mathcal{C}^\perp) \cap \mathcal{V}) = \dim(\mathcal{C}) - \dim(\mathcal{C} \cap (\mathcal{V}^\perp)).
\end{equation*}
\end{lemma}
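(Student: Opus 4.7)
The plan is to reduce the claim to the standard identities of duality for linear subspaces of $\mathbb{F}^{mn}$, recalling that the Hilbert--Schmidt product on $\mathbb{F}^{m \times n}$ coincides with the usual dot product on $\mathbb{F}^{mn}$ under the obvious isomorphism. Hence the three basic facts already displayed in the text immediately before the lemma apply: namely $\dim(\mathcal{A}) + \dim(\mathcal{A}^\perp) = mn$, the double-dual identity $\mathcal{A}^{\perp\perp} = \mathcal{A}$, and $(\mathcal{A} + \mathcal{B})^\perp = \mathcal{A}^\perp \cap \mathcal{B}^\perp$, together with Grassmann's formula $\dim(\mathcal{A} + \mathcal{B}) = \dim(\mathcal{A}) + \dim(\mathcal{B}) - \dim(\mathcal{A} \cap \mathcal{B})$.

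The key algebraic step is to apply $\perp$ to the intersection $\mathcal{C} \cap \mathcal{V}^\perp$, which by the intersection-to-sum duality gives
\[
\mathcal{C} \cap \mathcal{V}^\perp = (\mathcal{C}^\perp + \mathcal{V})^\perp.
\]
Taking dimensions on both sides yields $\dim(\mathcal{C} \cap \mathcal{V}^\perp) = mn - \dim(\mathcal{C}^\perp + \mathcal{V})$. Then I would expand the right-hand side with Grassmann's formula and substitute $\dim(\mathcal{C}^\perp) = mn - \dim(\mathcal{C})$ to obtain
\[
\dim(\mathcal{C} \cap \mathcal{V}^\perp) = \dim(\mathcal{C}) - \dim(\mathcal{V}) + \dim(\mathcal{C}^\perp \cap \mathcal{V}).
\]
Rearranging this equality yields precisely the statement of the lemma.

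There is no real obstacle here: the result is a direct one-line consequence of standard duality identities, and the only thing to be careful about is bookkeeping of the $mn$ terms that cancel when substituting $\dim(\mathcal{C}^\perp)$. As an alternative, one could instead consider the $\mathbb{F}$-linear map $\phi : \mathcal{C} \longrightarrow \mathcal{V}^*$ sending $C$ to the functional $V \mapsto \langle C, V \rangle$, whose kernel is $\mathcal{C} \cap \mathcal{V}^\perp$ and whose image has a natural description as the annihilator in $\mathcal{V}^*$ of $\mathcal{C}^\perp \cap \mathcal{V}$; applying the rank--nullity theorem then gives the same identity. I would, however, favour the short computational approach outlined above since it reuses identities already listed just before the lemma and avoids introducing dual spaces.
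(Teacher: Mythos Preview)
Your proof is correct. The paper does not give its own proof of this lemma: it simply states the identity and attributes it to Forney \cite{forney}, so there is no argument to compare against. Your computation via $(\mathcal{C} \cap \mathcal{V}^\perp)^\perp = \mathcal{C}^\perp + \mathcal{V}$, the dimension formula for duals, and Grassmann's identity is the standard route and uses exactly the basic facts listed just before the lemma in the appendix.
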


{\color{black}We now show that all GMWs of a linear code determine uniquely those of the corresponding dual code. Since GMWs and DGWs \cite{ravagnaniweights} coincide when $ \mathbb{F} = \mathbb{F}_q $ and $ m \neq n $ by Theorem \ref{th GMW extend DGW}, the next result coincides with \cite[Corollary 38]{ravagnaniweights} in such cases:}

\begin{proposition} \label{prop duality theorem}
Given a linear code $ \mathcal{C} \subseteq \mathbb{F}^{m \times n} $ with $ k = \dim(\mathcal{C}) $, and given an integer $ p \in \mathbb{Z} $, define
\begin{equation*}
\begin{split}
W_p(\mathcal{C}) = & \{ d_{M,p + rm}(\mathcal{C}) \mid r \in \mathbb{Z}, 1 \leq p+rm \leq k \}, \\
\overline{W}_p(\mathcal{C}) = & \{ n + 1 - d_{M,p + rm}(\mathcal{C}) \mid r \in \mathbb{Z}, 1 \leq p+rm \leq k \}.
\end{split}
\end{equation*}
Then it holds that
$$ \{ 1,2, \ldots, n \} = W_p(\mathcal{C}^\perp) \cup \overline{W}_{p + k}(\mathcal{C}), $$
where the union is disjoint.
\end{proposition}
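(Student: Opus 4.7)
The plan is to reduce the statement to a counting argument plus a disjointness argument, both driven by the Forney-type duality for rank support profiles. The first step is to combine Forney's duality (Lemma \ref{forney lemma}), applied to $\mathcal{C}$ and $\mathcal{V}=\mathcal{V}_\mathcal{L}$, with $(\mathcal{V}_\mathcal{L})^\perp = \mathcal{V}_{\mathcal{L}^\perp}$ (Lemma \ref{lemma dual rank support}) and equation (\ref{dimension matrix modules}) to obtain the key identity
$$ K_s(\mathcal{C}^\perp) \;=\; ms \;-\; k \;+\; K_{n-s}(\mathcal{C}), \qquad 0\le s\le n, $$
where I write $K_s(\mathcal{D})=K_{M,s}(\mathcal{D},\{0\})$. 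This is the engine of the whole proof.

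Next I would handle sizes. By strict $m$-step monotonicity in Proposition \ref{monotonicity of RGMW} the map $r\mapsto d_{M,p+rm}(\mathcal{D})$ is strictly increasing on each valid index range, so the elements of $W_p(\mathcal{C}^\perp)$ and of $\overline{W}_{p+k}(\mathcal{C})$ are pairwise distinct, and both sets lie in $\{1,\dots,n\}$ since $1\le d_{M,r}\le n$ (the upper bound from Theorem \ref{th upper singleton}). Counting valid indices, $|W_p(\mathcal{C}^\perp)|$ equals the number of integers in $[1,\,mn-k]$ congruent to $p\pmod m$, while after shifting $r\mapsto r+k$ the set $|\overline{W}_{p+k}(\mathcal{C})|$ equals the number of integers in $[1-k,\,0]$ congruent to $p\pmod m$. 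The disjoint union $[1-k,0]\cup[1,mn-k]$ has length $mn$, hence contains exactly $n$ integers in every residue class, giving $|W_p(\mathcal{C}^\perp)|+|\overline{W}_{p+k}(\mathcal{C})|=n$.

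The heart of the proof is disjointness, which I expect to be the main obstacle. Suppose, aiming at a contradiction, that $d := d_{M,p+rm}(\mathcal{C}^\perp) = n+1 - d_{M,p+k+sm}(\mathcal{C})$ for admissible $r,s$; write $d' := d_{M,p+k+sm}(\mathcal{C})$, so $n-d+1=d'$ and $n-d=d'-1$. Proposition \ref{connection between RDIP RGMW} gives the bracketing
$$ K_{d-1}(\mathcal{C}^\perp) \;<\; p+rm \;\le\; K_d(\mathcal{C}^\perp), \qquad K_{d'-1}(\mathcal{C}) \;<\; p+k+sm \;\le\; K_{d'}(\mathcal{C}). $$
Substituting the Forney identity from Step 1 into the first bracketing turns it into
$$ K_{d'}(\mathcal{C}) + m(d-1) - k \;<\; p+rm \;\le\; K_{d'-1}(\mathcal{C}) + md - k. $$
Combining the two bracketings, the upper bound on $p+rm$ and the lower bound on $p+k+sm$ yield $rm < sm + md$, i.e.\ $s \ge r-d+1$; symmetrically the lower bound on $p+rm$ and the upper bound on $p+k+sm$ yield $sm < rm - md + m$, i.e.\ $s \le r-d$. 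These contradict each other, establishing $W_p(\mathcal{C}^\perp)\cap \overline{W}_{p+k}(\mathcal{C})=\emptyset$. Together with Steps 2 and 3, this forces the union to equal $\{1,\dots,n\}$.

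The main care needed is at the boundary indices: when $d=1$ one uses $K_0(\mathcal{C}^\perp)=0$, and when $d=n$ one uses $K_n(\mathcal{C})=\dim(\mathcal{C})$, both obtained from Proposition \ref{monotonicity RDRP}; the Forney identity in Step 1 is consistent with these via the sanity checks $K_0(\mathcal{C}^\perp)=0$ at $s=0$ and $K_n(\mathcal{C}^\perp)=mn-k$ at $s=n$, so no special case is actually needed.
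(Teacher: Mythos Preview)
Your proof is correct and follows essentially the same approach as the paper. The paper isolates the disjointness statement as a separate lemma (proved via the same Forney--RDRP bracketing contradiction you give) and defers the counting part to \cite[Corollary 38]{ravagnaniweights}; your version is slightly more self-contained because you state the global identity $K_s(\mathcal{C}^\perp)=ms-k+K_{n-s}(\mathcal{C})$ up front and spell out the residue-class counting explicitly, but the underlying mechanism is identical.
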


{\color{black}The proof of this proposition can be translated word by word from the proof of \cite[Corollary 38]{ravagnaniweights} using the monotonicity properties from Proposition \ref{monotonicity of RGMW}. However, \cite[Corollary 38]{ravagnaniweights} relies on \cite[Theorem 37]{ravagnaniweights}, and therefore we need to extend such result to the cases $ \mathbb{F} \neq \mathbb{F}_q $ or $ m = n $. The following lemma constitutes such extension:}

\begin{lemma}
Given a linear code $ \mathcal{C} \subseteq \mathbb{F}^{m \times n} $ with $ k = \dim(\mathcal{C}) $, and given $ 1 \leq r \leq k $ and $ 1 \leq s \leq mn - k $, it holds that
$$ d_{M,s} (\mathcal{C}^\perp) \neq n + 1 - d_{M,r}(\mathcal{C}) $$
if $ r = p + k + r^\prime m $ and $ s = p + s^\prime m $, for some integers $ p, r^\prime, s^\prime \in \mathbb{Z} $.
\end{lemma}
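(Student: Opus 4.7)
My plan is to argue by contradiction, translating the statement about the RGMWs $d_{M,r}$ and $d_{M,s}(\mathcal{C}^\perp)$ into a statement about the RDRPs $K_{M,\mu}$ via Proposition~\ref{connection between RDIP RGMW}, and then squeezing via a duality identity for $K_{M,\mu}$.

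First, I would establish a clean duality identity:
\begin{equation*}
K_{M,\mu}(\mathcal{C}) - K_{M,n-\mu}(\mathcal{C}^\perp) = k - m(n-\mu),
\end{equation*}
for every $0 \leq \mu \leq n$. This follows by fixing $\mathcal{L} \subseteq \mathbb{F}^n$ with $\dim(\mathcal{L}) = \mu$, applying Forney's duality (Lemma~\ref{forney lemma}) to the pair $\mathcal{C}$ and $\mathcal{V} = \mathcal{V}_{\mathcal{L}^\perp}$, and using $\mathcal{V}^\perp = \mathcal{V}_\mathcal{L}$ (Lemma~\ref{lemma dual rank support}) together with $\dim(\mathcal{V}_{\mathcal{L}^\perp}) = m(n-\mu)$ from (\ref{dimension matrix modules}); taking the maximum over such $\mathcal{L}$ and noting that $\mathcal{L} \mapsto \mathcal{L}^\perp$ is a bijection between subspaces of $\mathbb{F}^n$ of dimension $\mu$ and of dimension $n-\mu$ gives the identity.

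Next, suppose for contradiction that $d_{M,s}(\mathcal{C}^\perp) = n + 1 - d_{M,r}(\mathcal{C})$ under the hypotheses $r = p + k + r'm$ and $s = p + s'm$, so that $s + k - r \equiv 0 \pmod m$. Setting $d = d_{M,r}(\mathcal{C})$, Proposition~\ref{connection between RDIP RGMW} gives $K_{M,d}(\mathcal{C}) \geq r$, $K_{M,d-1}(\mathcal{C}) \leq r - 1$, and analogously $K_{M,n+1-d}(\mathcal{C}^\perp) \geq s$, $K_{M,n-d}(\mathcal{C}^\perp) \leq s - 1$. Plugging $\mu = d$ into the duality identity yields $s - 1 \geq K_{M,n-d}(\mathcal{C}^\perp) = K_{M,d}(\mathcal{C}) - k + m(n-d) \geq r - k + m(n-d)$, while $\mu = d - 1$ yields $s \leq K_{M,n-d+1}(\mathcal{C}^\perp) = K_{M,d-1}(\mathcal{C}) - k + m(n-d+1) \leq r - 1 - k + m(n-d+1)$.

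Combining the two inequalities gives
\begin{equation*}
m(n-d) + 1 \leq s + k - r \leq m(n-d) + m - 1,
\end{equation*}
so $s + k - r \not\equiv 0 \pmod m$, contradicting our congruence hypothesis and finishing the proof. The only mildly delicate step is the derivation of the duality identity for $K_{M,\mu}$; everything else is a routine manipulation of inequalities, and the congruence squeeze at the end is essentially forced by the monotonicity bound $K_{M,\mu+1} - K_{M,\mu} \leq m$ from Proposition~\ref{monotonicity RDRP}, which is implicit in the gap $m(n-d)+1,\ldots,m(n-d)+m-1$ above.
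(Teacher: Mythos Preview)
Your proof is correct and follows essentially the same route as the paper's: both argue by contradiction, translate to the dimension/rank support profile via Proposition~\ref{connection between RDIP RGMW}, invoke Forney duality (Lemma~\ref{forney lemma}) together with Lemma~\ref{lemma dual rank support} and~(\ref{dimension matrix modules}), and squeeze $s+k-r$ into an interval of length $m-1$ that excludes multiples of $m$. The only cosmetic differences are that you package Forney duality as a clean global identity $K_{M,\mu}(\mathcal{C}) - K_{M,n-\mu}(\mathcal{C}^\perp) = k - m(n-\mu)$ and then plug in $\mu=d$ and $\mu=d-1$, whereas the paper applies the pointwise Forney relation once and obtains the second inequality by interchanging the roles of $\mathcal{C}$ and $\mathcal{C}^\perp$; and you phrase the final contradiction as a congruence $s+k-r\equiv 0\pmod m$ versus the paper's pair of integer inequalities on $r',s'$.
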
 
\begin{proof}
Assume that equality holds for a pair of such $ r $ and $ s $. {\color{black}Denote $ \mathcal{C}_\mathcal{L} = \mathcal{C} \cap \mathcal{V}_\mathcal{L} $, for a linear subspace $ \mathcal{L} \subseteq \mathbb{F}^n $, and rewrite} Proposition \ref{connection between RDIP RGMW} as follows:
\begin{equation}
\begin{split}
d_{M,r}(\mathcal{C}) = \min \{ \mu \mid \max \{ & \dim(\mathcal{C}_\mathcal{L}) \mid \mathcal{L} \subseteq \mathbb{F}^n, \\
 & \dim(\mathcal{L}) = \mu \} \geq r \}.
\end{split}
\label{connection equation}
\end{equation}

Write $ d_{M,r}(\mathcal{C}) = \mu $. Then Equation (\ref{connection equation}) implies that
\begin{equation}
\max \{ \dim(\mathcal{C}_\mathcal{L}) \mid \mathcal{L} \subseteq \mathbb{F}^n, \dim(\mathcal{L}) = \mu \} \geq r,
\label{duality theorem proof 1}
\end{equation}
and $ \mu $ is the minimum integer with such property. Now write $ d_{M,s}(\mathcal{C}^\perp) = \nu = n+1-\mu $. In the same way, Equation (\ref{connection equation}) implies that
$$ \max \{ \dim((\mathcal{C}^\perp)_\mathcal{L}) \mid \mathcal{L} \subseteq \mathbb{F}^n, \dim(\mathcal{L}) = \nu \} \geq s. $$
On the other hand, given a subspace $ \mathcal{L} \subseteq \mathbb{F}^n $ with $ \dim(\mathcal{L}) = \nu $, we have that
$$ \dim(\mathcal{C}_{\mathcal{L}^\perp}) = \dim(\mathcal{C} \cap (\mathcal{V}_\mathcal{L})^\perp) = k - m \nu + \dim((\mathcal{C}^\perp)_\mathcal{L}), $$
where the first equality follows from Lemma \ref{lemma dual rank support}, and the second equality follows from Lemma \ref{forney lemma} and Equation (\ref{dimension matrix modules}). Therefore, it holds that
\begin{equation}
\begin{split}
& \max \{ \dim(\mathcal{C}_\mathcal{L}) \mid \mathcal{L} \subseteq \mathbb{F}^n, \dim(\mathcal{L}) = \mu - 1 \} \\
& \geq k - m \nu + s = k - mn - m + m \mu + s.
\end{split}
\label{duality theorem proof 2}
\end{equation}
From the fact that $ \mu $ is the minimum integer satisfying Equation (\ref{duality theorem proof 1}), and from Equation (\ref{duality theorem proof 2}), we conclude that
$$ k - mn - m + m \mu + s < r. $$
Now if we interchange the roles of $ \mathcal{C} $ and $ \mathcal{C}^\perp $, and the roles of $ r $ and $ s $, then we automatically interchange the roles of $ \mu $ and $ n+1-\mu $, and the roles of $ k $ and $ mn-k $. Therefore, we may also conclude that
$$ k - mn + m \mu + s > r. $$
Using the expressions $ r = p + k + r^\prime m $ and $ s = p + s^\prime m $, and dividing everything by $ m $, the previous two inequalities are, respectively
$$ s^\prime - n - 1 + \mu < r^\prime, \quad \textrm{and} \quad s^\prime - n + \mu > r^\prime, $$
which contradict each other. Hence the lemma follows.
\end{proof}

{\color{black}Observe} that the duality theorem for GRWs \cite{jerome} is a direct consequence of Theorem \ref{th RGMW extend RGRW} and Proposition \ref{prop duality theorem}:

\begin{corollary}[\textbf{\cite{jerome}}] \label{corollary duality grw}
Given an $ \mathbb{F}_{q^m} $-linear code $ \mathcal{C} \subseteq \mathbb{F}_{q^m}^n $ of dimension $ k $ over $ \mathbb{F}_{q^m} $, denote $ d_r = d_{R,r}(\mathcal{C}) $ and $ d_s^\perp = d_{R,s}(\mathcal{C}^\perp) $, for $ 1 \leq r \leq k $ and $ 1 \leq s \leq n-k $. Then
$$ \{ 1,2, \ldots, n \} = \{ d_1, d_2, \ldots, d_k \} $$
$$ \cup \{ n+1 - d_1^\perp, n+1 - d_2^\perp, \ldots, n+1 - d_{n-k}^\perp \}, $$
where the union is disjoint.
\end{corollary}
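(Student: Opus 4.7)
The plan is to translate the duality statement for $\mathbb{F}_{q^m}$-linear codes into an instance of Proposition \ref{prop duality theorem} applied to the $\mathbb{F}_q$-linear matrix representation of $\mathcal{C}$. Fix a basis $\boldsymbol\alpha = (\alpha_1, \ldots, \alpha_m)$ of $\mathbb{F}_{q^m}/\mathbb{F}_q$ and set $\mathcal{C}' = M_{\boldsymbol\alpha}(\mathcal{C}) \subseteq \mathbb{F}_q^{m \times n}$, which is $\mathbb{F}_q$-linear of dimension $km$. By Theorem \ref{th RGMW extend RGRW} (with $p = 0$), for every $1 \leq r \leq k$ we have $d_{R,r}(\mathcal{C}) = d_{M, rm}(\mathcal{C}')$, and similarly $d_{R,s}(\mathcal{C}^\perp) = d_{M, sm}(M_{\boldsymbol\alpha}(\mathcal{C}^\perp))$ for every $1 \leq s \leq n-k$.

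Next, I would apply Proposition \ref{prop duality theorem} to $\mathcal{C}'$ with the parameter $p = 0$. Writing $k' = \dim_{\mathbb{F}_q}(\mathcal{C}') = km$, the result reads
\[
\{1, 2, \ldots, n\} \;=\; W_0((\mathcal{C}')^\perp) \;\cup\; \overline{W}_{km}(\mathcal{C}'),
\]
as a disjoint union. A direct inspection of the index sets (and using the change of variable $r' = r + k$ in the second piece) shows that
\[
W_0((\mathcal{C}')^\perp) = \{ d_{M, rm}((\mathcal{C}')^\perp) \mid 1 \leq r \leq n-k \},
\]
\[
\overline{W}_{km}(\mathcal{C}') = \{ n + 1 - d_{M, rm}(\mathcal{C}') \mid 1 \leq r \leq k \}.
\]
By the identification $d_{M, rm}(\mathcal{C}') = d_{R,r}(\mathcal{C})$ from the previous paragraph, the second set is exactly $\{n+1 - d_1, \ldots, n+1 - d_k\}$.

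The remaining step, which I expect to be the main technical obstacle, is to identify the RGMWs of $(\mathcal{C}')^\perp = (M_{\boldsymbol\alpha}(\mathcal{C}))^\perp$ with those of $M_{\boldsymbol\alpha}(\mathcal{C}^\perp)$. These two $\mathbb{F}_q$-linear codes in $\mathbb{F}_q^{m \times n}$ need not coincide in general, because the trace product on $\mathbb{F}_q^{m \times n}$ and the inner product on $\mathbb{F}_{q^m}^n$ are matched by $M_{\boldsymbol\alpha}$ only when $\boldsymbol\alpha$ is trace-self-dual. The standard workaround is to let $\boldsymbol\beta = (\beta_1, \ldots, \beta_m)$ be the trace-dual basis of $\boldsymbol\alpha$; then a direct computation gives $M_{\boldsymbol\beta}(\mathcal{C}^\perp) = (M_{\boldsymbol\alpha}(\mathcal{C}))^\perp$. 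The change-of-basis matrix from $\boldsymbol\beta$ to $\boldsymbol\alpha$ induces a map of the form $C \mapsto AC$ with $A \in \mathbb{F}_q^{m \times m}$ invertible, which by Theorem \ref{theorem characterization} (property (P\,1)) is a security equivalence and hence preserves every RGMW by Remark \ref{remark sec equivalences preserve RGMWs}. Consequently $d_{M, rm}((\mathcal{C}')^\perp) = d_{M, rm}(M_{\boldsymbol\alpha}(\mathcal{C}^\perp)) = d_{R,r}(\mathcal{C}^\perp)$, so $W_0((\mathcal{C}')^\perp) = \{d_1^\perp, \ldots, d_{n-k}^\perp\}$.

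Substituting the two identifications into the displayed disjoint union yields the claimed decomposition of $\{1, 2, \ldots, n\}$, completing the proof.
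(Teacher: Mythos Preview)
Your proposal is correct and follows exactly the route the paper indicates: the paper merely states that the corollary ``is a direct consequence of Theorem \ref{th RGMW extend RGRW} and Proposition \ref{prop duality theorem}'', and your argument is precisely the natural unpacking of that sentence (apply Proposition \ref{prop duality theorem} to $M_{\boldsymbol\alpha}(\mathcal{C})$ with $p=0$ and translate the index sets via Theorem \ref{th RGMW extend RGRW}). Your extra step---passing through the trace-dual basis $\boldsymbol\beta$ and invoking the security equivalence $C\mapsto AC$ to identify the GMWs of $(M_{\boldsymbol\alpha}(\mathcal{C}))^\perp$ with those of $M_{\boldsymbol\alpha}(\mathcal{C}^\perp)$---is a genuine subtlety that the paper leaves implicit, and your treatment of it is correct.
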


Finally, we show that the duality theorem for GHWs \cite{wei} is a consequence of {\color{black}Theorem \ref{th RGMW extend RGHW}} and Proposition \ref{prop duality theorem}:

\begin{corollary}[\textbf{\cite{wei}}] \label{corollary duality wei}
Given a linear code $ \mathcal{C} \subseteq \mathbb{F}^n $ of dimension $ k $, denote $ d_r = d_{H,r}(\mathcal{C}) $ and $ d_s^\perp = d_{H,s}(\mathcal{C}^\perp) $, for $ 1 \leq r \leq k $ and $ 1 \leq s \leq n-k $. Then
$$ \{ 1,2, \ldots, n \} = \{ d_1, d_2, \ldots, d_k \} $$
$$ \cup \{ n+1 - d_1^\perp, n+1 - d_2^\perp, \ldots, n+1 - d_{n-k}^\perp \}, $$
where the union is disjoint.
\end{corollary}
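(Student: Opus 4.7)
The plan is to reduce Wei's duality to Proposition \ref{prop duality theorem} applied to the diagonal embedding $\Delta(\mathcal{C}) \subseteq \mathbb{F}^{n \times n}$. By Theorem \ref{th RGMW extend RGHW}, $d_r = d_{M,r}(\Delta(\mathcal{C}))$ for $r = 1, \ldots, k$ and $d_s^\perp = d_{M,s}(\Delta(\mathcal{C}^\perp))$ for $s = 1, \ldots, n-k$, so everything can be cast in terms of GMWs of matrix codes in $\mathbb{F}^{n \times n}$. Since $\Delta(\mathcal{C})$ has matrix row dimension $n$ and $\mathbb{F}$-dimension $k$, Proposition \ref{prop duality theorem} gives, for each integer $p$, the disjoint decomposition
$$ \{1, \ldots, n\} = W_p(\Delta(\mathcal{C})^\perp) \cup \overline{W}_{p+k}(\Delta(\mathcal{C})). $$

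I would then run this for each $p \in \{n - k + 1, \ldots, n\}$. For every such $p$ the only integer $r$ satisfying $1 \leq (p + k) + rn \leq k$ is $r = -1$, yielding the singleton $\overline{W}_{p+k}(\Delta(\mathcal{C})) = \{n + 1 - d_{p + k - n}\}$. As $p$ ranges over $\{n-k+1, \ldots, n\}$, the index $p + k - n$ runs over $\{1, 2, \ldots, k\}$, so
$$ \bigcup_{p = n - k + 1}^{n} \overline{W}_{p + k}(\Delta(\mathcal{C})) = \{n + 1 - d_r : 1 \leq r \leq k\}. $$
Taking complements in $\{1, \ldots, n\}$ and intersecting yields
$$ \bigcap_{p = n - k + 1}^{n} W_p(\Delta(\mathcal{C})^\perp) = \{1, \ldots, n\} \setminus \{n + 1 - d_r\}_{r = 1}^{k}, $$
a set of cardinality $n - k$.

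The main obstacle is identifying this intersection with $\{d_s^\perp\}_{s = 1}^{n - k}$. I would prove the inclusion $\{d_s^\perp\}_s \subseteq \bigcap_p W_p(\Delta(\mathcal{C})^\perp)$ directly: for each $s$, take a Hamming support $I$ with $|I| = d_s^\perp$ and $\dim(\mathcal{C}^\perp \cap \mathcal{L}_I) \geq s$, use the diagonal-projection identity $\dim(\Delta(\mathcal{C})^\perp \cap \mathcal{V}_{\mathcal{L}_I}) = |I|(n - 1) + \dim(\mathcal{C}^\perp \cap \mathcal{L}_I)$, and invoke the strict monotonicity $d_{M, t + n}(\Delta(\mathcal{C})^\perp) \geq d_{M, t}(\Delta(\mathcal{C})^\perp) + 1$ from Proposition \ref{monotonicity of RGMW} to realize $d_s^\perp$ as $d_{M, p + rn}(\Delta(\mathcal{C})^\perp)$ in each residue class $p \pmod n$ with $p \in \{n - k + 1, \ldots, n\}$. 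Matching cardinalities forces equality, yielding $\{1, \ldots, n\} = \{n + 1 - d_r\}_r \cup \{d_s^\perp\}_s$, which is equivalent to Wei's partition via the involution $i \mapsto n + 1 - i$ on $\{1, \ldots, n\}$.
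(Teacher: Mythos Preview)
Your overall strategy matches the paper's: apply Proposition~\ref{prop duality theorem} to $\Delta(\mathcal{C})\subseteq\mathbb{F}^{n\times n}$, use Theorem~\ref{th RGMW extend RGHW} to identify the $\overline{W}_{p+k}(\Delta(\mathcal{C}))$ side with $\{n+1-d_r\}$, and then show that each $d_s^\perp$ lies in every $W_p(\Delta(\mathcal{C})^\perp)$. The paper proceeds in exactly this way (with the cosmetic difference that it writes the union as $\bigcup_p W_{p-k}(\Delta(\mathcal{C}))$ and the complement as $\bigcap_p \overline{W}_p(\Delta(\mathcal{C})^\perp)$).

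The gap is in your last step. The diagonal-projection identity $\dim(\Delta(\mathcal{C})^\perp\cap\mathcal{V}_{\mathcal{L}_I})=|I|(n-1)+\dim(\mathcal{C}^\perp\cap\mathcal{L}_I)$ only yields the \emph{upper} bound
\[
d_{M,\,d(n-1)+s}(\Delta(\mathcal{C})^\perp)\le d,\qquad d=d_s^\perp,
\]
because it controls intersections with Hamming support spaces $\mathcal{L}_I$ but not with arbitrary $\mathcal{L}\subseteq\mathbb{F}^n$. Strict monotonicity by itself does not convert this into the statement that the value $d$ is attained in every residue class $p$: knowing $d_{M,t_0}\le d$ and $d_{M,t+n}\ge d_{M,t}+1$ gives no lower bound on $d_{M,t_0}$, and in particular does not exclude the possibility that the sequence $\bigl(d_{M,p+rn}\bigr)_r$ skips the value $d$ for some $p$. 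To force $d$ into every residue class you need the companion lower bound
\[
d_{M,\,(d-1)(n-1)+s}(\Delta(\mathcal{C})^\perp)\ge d,
\]
so that $d_{M,t}=d$ on an interval of $n$ consecutive indices. This lower bound requires bounding $\dim(\Delta(\mathcal{C})^\perp\cap\mathcal{V}_\mathcal{L})$ from above for an \emph{arbitrary} $\mathcal{L}$ of dimension $d-1$, and the identity you invoke is false for general $\mathcal{L}$ (e.g.\ $\mathcal{L}=\langle\mathbf{e}_1+\mathbf{e}_2\rangle$). The paper supplies this via the decomposition $\Delta(\mathcal{C})^\perp=\Delta(\mathcal{C}^\perp)\oplus\mathcal{D}_n$ and the inequality $\dim(\mathcal{V}_\mathcal{L}\cap\mathcal{D}_n)=n\dim(\mathcal{L})-\mathrm{wt_H}(\mathcal{L})$, followed by a puncturing argument on the diagonal part; this is the genuinely nontrivial step and is not captured by your sketch.
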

\begin{proof}
We will use the notation in Proposition \ref{prop duality theorem} during the whole proof. First of all, by {\color{black}Theorem \ref{th RGMW extend RGHW}} it holds that $ W_p(\Delta(\mathcal{C})) = \{ d_{H,p}(\mathcal{C}) \} $ if $ 1 \leq p\bmod{n} \leq k $ and $ W_p(\Delta(\mathcal{C})) = \emptyset $ if $ k+1 \leq p\bmod{n} \leq n-1 $ or $ p \bmod{n} = 0 $. Therefore
$$ \bigcup_{p=1}^n W_{p - k}(\Delta(\mathcal{C})) = \{ d_1, d_2, \ldots, d_k \}. $$
On the other hand, from Proposition \ref{prop duality theorem} it follows that
$$ \left( \bigcup_{p=1}^n W_{p-k}(\Delta(\mathcal{C})) \right) \cup \left( \bigcap_{p=1}^n \overline{W}_p(\Delta(\mathcal{C})^\perp) \right) = \{ 1,2, \ldots, n \}, $$
where the union is disjoint. Hence we only need to show that $ n+1 - d_s^\perp \in \overline{W}_p(\Delta(\mathcal{C})^\perp) $, for $ p = 1,2, \ldots, n $ and $ s=1,2, \ldots, n-k $.

Denote by $ \mathcal{D}_n \subseteq \mathbb{F}^{n \times n} $ the vector space of matrices with zero components in their diagonals. It holds that $ \Delta(\mathcal{C})^\perp = \Delta(\mathcal{C}^\perp) \oplus \mathcal{D}_n $.

Fix $ 1 \leq s \leq n-k $ and denote $ d = d_{H,s}(\mathcal{C}^\perp) $. First, consider a subspace $ \mathcal{D} \subseteq \mathcal{C}^\perp $ with $ {\rm wt_H}(\mathcal{D}) = d $ and $ \dim(\mathcal{D}) = s $, and define $ \mathcal{D}^\prime \subseteq \Delta(\mathcal{C})^\perp $ as the direct sum of $ \Delta(\mathcal{D}) $ and all matrices in $ \mathcal{D}_n $ with columns in the Hamming support of $ \mathcal{D} $. Since $ \dim(\mathcal{D}^\prime) = d(n-1) + s $ and $ {\rm wt_R}(\mathcal{D}^\prime) = d$, by Proposition \ref{proposition as minimum rank weights} it follows that
\begin{equation}
d_{M,d(n-1)+s}(\Delta(\mathcal{C})^\perp) \leq d.
\label{wei duality eq 1}
\end{equation}

On the other hand, assume that
$ d_{M,(d-1)(n-1)+s}(\Delta(\mathcal{C})^\perp) = d^\prime < d $. Let
$ \mathcal{E} \subseteq \Delta(\mathcal{C})^\perp $ be such that $ {\rm wt_R}(\mathcal{E}) = d^\prime $ and $ \dim(\mathcal{E}) = (d-1)(n-1) + s $. Denote by $ \mathcal{E}_D $ the vector space of matrices obtained by replacing the elements outside the diagonal of those matrices in $ \mathcal{E} $ by zero. If $ \mathcal{L} = {\rm RSupp}(\mathcal{E}) \subseteq \mathbb{F}^n $, we claim that
\begin{equation}
  \dim(\mathcal{E} \cap \mathcal{D}_n) \leq n \mathrm{wt_R}(\mathcal{E})
  - {\rm wt_H}(\mathcal{L}). \label{eqaa}
\end{equation}
It is sufficient to show that $ \dim (\mathcal{V}_\mathcal{L} \cap \mathcal{D}_n) = n \dim(\mathcal{L}) - {\rm wt_H}(\mathcal{L}) $. Denote by $ \mathcal{V}_{\mathcal{L} D} $ the vector space of matrices obtained by replacing the elements outside the diagonal of those matrices in $ \mathcal{V}_\mathcal{L} $ by zero. Then, by Proposition \ref{prop characterization},
$ \dim(\mathcal{V}_{\mathcal{L}D}) = {\rm wt_H}(\mathcal{L}) $, and $ \dim( \mathcal{V}_\mathcal{L} \cap \mathcal{D}_n) = \dim(\mathcal{V}_\mathcal{L}) - \dim( \mathcal{V}_{\mathcal{L}D}) = n \dim (\mathcal{L}) - {\rm wt_H}(\mathcal{L}) $.

By monotonicity (Proposition \ref{monotonicity of RGMW}), we have that $ d^\prime = d-1 $, and thus $\dim( \mathcal{E}) = d^\prime(n-1)+s$. Therefore, by (\ref{eqaa}), $ \dim(\Delta^{-1}(\mathcal{E}_D)) = \dim (\mathcal{E}_D)
= \dim (\mathcal{E}) - \dim (\mathcal{E} \cap \mathcal{D}_n) \geq s + {\rm wt_H}(\mathcal{L}) - d^\prime $. Choose indices $i_1, i_2, \ldots, i_{{\rm wt_H}(\mathcal{L}) - d'}$ from $\mathrm{HSupp}(\Delta^{-1}(\mathcal{E}_D))$,
and define
$$ \mathcal{W} = \{ \mathbf{c} \in \Delta^{-1}(\mathcal{E}_D) \mid c_{i_j} = 0, 1 \leq j \leq {\rm wt_H}(\mathcal{L}) - d^\prime \}. $$
Then $ \mathcal{W} \subseteq \mathcal{C}^\perp $, $ \dim (\mathcal{W}) \geq s $, and $ {\rm wt_H}(\mathcal{W}) \leq
{\rm wt_H}( \Delta^{-1}(\mathcal{E}_D)) - {\rm wt_H}(\mathcal{L}) + d^\prime \leq d^\prime $, which implies $ d_{H,s}(\mathcal{C}^\perp) = d^\prime < d$, which is a contradiction. Hence
\begin{equation}
d_{M,(d-1)(n-1)+s}(\Delta(\mathcal{C})^\perp) \geq d.
\label{wei duality eq 2}
\end{equation}

Combining Equation (\ref{wei duality eq 1}) and Equation (\ref{wei duality eq 2}), we conclude that
$$ d_{M,(d-1)(n-1)+s+j}(\Delta(\mathcal{C})^\perp) = d, $$
for $ j = 0,1,2, \ldots, n-1 $, which implies that $ n+1 - d_s^\perp \in \overline{W}_p(\Delta(\mathcal{C})^\perp) $, for $ p = 1,2, \ldots, n $, and we are done.
\end{proof}

{\color{black}
\section{Construction of explicit subspace designs} \label{app explicit designs}

In this appendix, we recall how to construct the subspace design formed by $ \mathcal{H}_0, \mathcal{H}_1, \mathcal{H}_2, \ldots \subseteq  \mathbb{F}_{q^m} $ in Section \ref{sec list decoding}. This construction is given in \cite{list-decodable-rank-metric}, based on a construction in \cite{guruswami-designs}, and is explicit in the sense that it can be constructed using an algorithm of polynomial complexity on $ q $.

Fix $ \varepsilon > 0 $ and a positive integer $ s $ such that $ 4 s n \leq \varepsilon m $, and assume that $ n $ divides $ m $. Let $ d_1=q^{m/n-1}, d_2 = q^{m/n-2}, \ldots, d_{m/n}=1 $ and let $\gamma_1, \gamma_2, \ldots, \gamma_{m/n} $ be distinct non-zero elements of $\mathbb{F}_{q^n}$. Define
$$ f_i(x_1, x_2 \ldots, x_{m/n}) = \sum_{j=1}^{m/n} \gamma_j^i x_j^{d_j}, $$
for $ i = 1, 2, \ldots, s $, and let $ \mathcal{S} \subseteq \mathbb{F}_{q^n}^{m/n}$ be the set of common zeros of
$ f_1, f_2, \ldots, f_s $, which is an $ \mathbb{F}_q $-linear vector space. We may assume that $ \mathcal{S} \subseteq \mathbb{F}_{q^m} $ by an $ \mathbb{F}_{q^n} $-linear vector space isomorphism $ \mathbb{F}_{q^n}^{m / n} \cong \mathbb{F}_{q^m} $ (any isomorphism works).

Let $\beta$ be a primitive element of $\mathbb{F}_{q^n}$. For $\alpha \in \mathbb{F}_{q^{n\lfloor \frac{\varepsilon m}{2ns}\rfloor}}$, let 
$$ \mathcal{S}_\alpha = \left\lbrace \alpha^{q^j} \beta^i \mid 0 \leq j < \left\lfloor \frac{\varepsilon m}{2ns} \right\rfloor, 0 \leq i < 2s \right\rbrace . $$
The algorithm in \cite[Subsection 4.3]{guruswami-designs} gives in polynomial time over $ q $ a set $\mathcal{F} \subseteq \mathbb{F}_{q^{n\lfloor \frac{\varepsilon m}{2ns}\rfloor}}$ of size $ q^{\Omega(\frac{\varepsilon m}{n s})} $ such that:
\begin{enumerate}
\item 
$\mathbb{F}_q(\alpha) = \mathbb{F}_{q^{n\lfloor \frac{\varepsilon m}{2ns}\rfloor}}$, for all $ \alpha \in \mathcal{F} $,
\item 
$ \mathcal{S}_\alpha \cap \mathcal{S}_\beta = \emptyset $, for all distinct $ \alpha, \beta \in \mathcal{F} $, and
\item 
$ | \mathcal{S}_\alpha | = 2s \lfloor \frac{\varepsilon m}{2ns}\rfloor $, for all $ \alpha \in \mathcal{F} $.
\end{enumerate}

Define the $ \mathbb{F}_{q^n} $-linear vector space $ \mathcal{V}_\alpha \subseteq \mathbb{F}_{q^n}^{m/n} $ as
\begin{equation*}
\begin{split}
\mathcal{V}_\alpha = \{ & (a_0, a_1, \ldots, a_{m/n-1}) \in \mathbb{F}_{q^n}^{m/n}\mid \\
 & \sum_{i=0}^{m/n-1} a_i (\alpha \beta^j)^i = 0 \mid 0\leq j < 2s \},
\end{split}
\end{equation*}
for every $\alpha \in \mathcal{F}$, where we may consider $ \mathcal{V}_\alpha \subseteq \mathbb{F}_{q^m} $ as before.

Finally, the $ \mathbb{F}_q $-linear vector spaces $ \mathcal{H}_0, \mathcal{H}_1, \mathcal{H}_2, \ldots \subseteq  \mathbb{F}_{q^m} $ in Section \ref{sec list decoding} are defined as $ \mathcal{H}_i = \mathcal{S} \cap \mathcal{V}_{\alpha_i} $, for distinct $ \alpha_i \in \mathcal{F} $.

The constructions of $ \mathcal{F} $ and $ \mathcal{V}_\alpha $ appeared first in \cite[Subsection 4.2]{guruswami-designs} and $ \mathcal{S} $ appeared first in \cite[Corollary 6]{list-decodable-rank-metric}.

We conclude the appendix by computing the dimensions of the vector spaces $ \mathcal{H}_0, \mathcal{H}_1, \mathcal{H}_2, \ldots \subseteq  \mathbb{F}_{q^m} $, which is done in the proof of \cite[Theorem 8]{list-decodable-rank-metric}:

\begin{lemma}[\textbf{\cite{list-decodable-rank-metric}}] \label{lemma dimensions of Hi}
The vector spaces $ \mathcal{H}_0, \mathcal{H}_1, \mathcal{H}_2, \ldots \subseteq  \mathbb{F}_{q^m} $ have dimension at least $ m (1 - 2 \varepsilon) $ over $ \mathbb{F}_q $.
\end{lemma}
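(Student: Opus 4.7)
The plan is to control the $\mathbb{F}_q$-codimensions of the two spaces whose intersection defines $\mathcal{H}_i = \mathcal{S} \cap \mathcal{V}_{\alpha_i}$, and then combine them by the standard bound $\dim(U \cap V) \geq \dim U + \dim V - \dim(\text{ambient})$ applied inside $\mathbb{F}_{q^m} \cong \mathbb{F}_{q^n}^{m/n}$, which has $\mathbb{F}_q$-dimension $m$.

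First I would analyze $\mathcal{S}$. Each exponent $d_j = q^{m/n - j}$ is a power of $q$, so the map $x_j \mapsto \gamma_j^i x_j^{d_j}$ is a Frobenius-type map composed with multiplication by a constant in $\mathbb{F}_{q^n}$, hence $\mathbb{F}_q$-linear from $\mathbb{F}_{q^n}$ to $\mathbb{F}_{q^n}$ (it is not $\mathbb{F}_{q^n}$-linear because the Frobenius is not). Summing over $j$, each $f_i : \mathbb{F}_{q^n}^{m/n} \to \mathbb{F}_{q^n}$ is $\mathbb{F}_q$-linear. Since the target has $\mathbb{F}_q$-dimension $n$, the kernel of each $f_i$ has $\mathbb{F}_q$-codimension at most $n$, so $\mathcal{S} = \bigcap_{i=1}^s \ker(f_i)$ has $\mathbb{F}_q$-codimension at most $sn$, giving $\dim_{\mathbb{F}_q} \mathcal{S} \geq m - sn$.

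Next I would analyze $\mathcal{V}_\alpha$. It is defined by $2s$ $\mathbb{F}_{q^n}$-linear equations on $\mathbb{F}_{q^n}^{m/n}$, so its $\mathbb{F}_{q^n}$-codimension is at most $2s$, and multiplying by $n$ gives $\mathbb{F}_q$-codimension at most $2sn$, i.e., $\dim_{\mathbb{F}_q} \mathcal{V}_{\alpha_i} \geq m - 2sn$. Combining with the previous estimate via the intersection bound,
\begin{equation*}
\dim_{\mathbb{F}_q}(\mathcal{H}_i) \geq (m - sn) + (m - 2sn) - m = m - 3sn.
\end{equation*}

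Finally I would invoke the hypothesis $4sn \leq \varepsilon m$, which yields $3sn \leq \tfrac{3}{4}\varepsilon m \leq 2\varepsilon m$, so $\dim_{\mathbb{F}_q}(\mathcal{H}_i) \geq m - 2\varepsilon m = m(1 - 2\varepsilon)$, as claimed. The only subtle step is the first one: one must be careful to distinguish $\mathbb{F}_q$-linearity from $\mathbb{F}_{q^n}$-linearity when bounding the codimension of $\mathcal{S}$, since the polynomials $f_i$ are $\mathbb{F}_q$-linear but genuinely not $\mathbb{F}_{q^n}$-linear; otherwise the dimension counts are routine.
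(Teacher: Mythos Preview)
Your overall strategy---bound the $\mathbb{F}_q$-codimensions of $\mathcal{S}$ and $\mathcal{V}_{\alpha_i}$ separately and combine them via the intersection inequality---is exactly the approach taken in \cite{list-decodable-rank-metric}, and your analysis of $\mathcal{S}$ is correct. However, your bound on $\mathcal{V}_{\alpha_i}$ contains a genuine gap.

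You claim that $\mathcal{V}_\alpha$ is cut out by $2s$ $\mathbb{F}_{q^n}$-linear equations and therefore has $\mathbb{F}_{q^n}$-codimension at most $2s$. But look at where $\alpha$ lives: by construction $\alpha\in\mathbb{F}_{q^{nD}}$ with $D=\lfloor \varepsilon m/(2ns)\rfloor$, and $\mathbb{F}_{q^n}(\alpha)=\mathbb{F}_{q^{nD}}$. Hence each linear form $(a_0,\ldots,a_{m/n-1})\mapsto \sum_i a_i(\alpha\beta^j)^i$ is $\mathbb{F}_{q^n}$-linear but takes values in $\mathbb{F}_{q^{nD}}$, not in $\mathbb{F}_{q^n}$. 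A single such form can therefore impose up to $D$ independent $\mathbb{F}_{q^n}$-linear constraints, not just one. Equivalently, requiring a polynomial with $\mathbb{F}_{q^n}$-coefficients to vanish at $\alpha\beta^j$ forces it to vanish at all $D$ Galois conjugates of $\alpha\beta^j$ over $\mathbb{F}_{q^n}$; this is precisely why the set $\mathcal{S}_\alpha$ has size $2sD$ rather than $2s$. The correct codimension bound is thus $\dim_{\mathbb{F}_{q^n}}\mathcal{V}_\alpha \ge m/n - 2sD$, i.e., $\dim_{\mathbb{F}_q}\mathcal{V}_\alpha \ge m - 2snD \ge m - \varepsilon m$.

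With this correction the argument goes through: $\dim_{\mathbb{F}_q}\mathcal{H}_i \ge (m-sn)+(m-\varepsilon m)-m = m - sn - \varepsilon m$, and since $4sn\le\varepsilon m$ gives $sn\le\varepsilon m/4$, one obtains $\dim_{\mathbb{F}_q}\mathcal{H}_i \ge m(1-\tfrac{5}{4}\varepsilon)\ge m(1-2\varepsilon)$. So the conclusion survives, but your intermediate bound $\dim_{\mathbb{F}_q}\mathcal{V}_{\alpha_i}\ge m-2sn$ is in general false (it would hold only when $D=1$, which is not the regime of interest here).
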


}

{\color{black}
\section{Proof of Theorem \ref{theorem characterization}} \label{app 1}

In this appendix, we give the proof of Theorem \ref{theorem characterization}, which we now recall:

\begin{reptheorem}{theorem characterization}
Let $ \phi : \mathcal{V} \longrightarrow \mathcal{W} $ be a vector space isomorphism between rank support spaces $ \mathcal{V} \in RS(\mathbb{F}^{m \times n}) $ and $ \mathcal{W} \in RS(\mathbb{F}^{m \times n^\prime}) $, and consider the following properties:
\begin{itemize}
\item[(P 1)] There exist full-rank matrices $ A \in \mathbb{F}^{m \times m} $ and $ B \in \mathbb{F}^{n \times n^\prime} $ such that $ \phi(C) = ACB $, for all $ C \in\mathcal{V} $.
\item[(P 2)] A subspace $ \mathcal{U} \subseteq \mathcal{V} $ is a rank support space if, and only if, $ \phi(\mathcal{U}) $ is a rank support space. 
\item[(P 3)] For all subspaces $ \mathcal{D} \subseteq \mathcal{V} $, it holds that $ {\rm wt_R}(\phi(\mathcal{D})) = {\rm wt_R}(\mathcal{D}) $.
\item[(P 4)] $ \phi $ is a rank isometry.
\end{itemize}
Then the following implications hold:
$$ (\textrm{P 1}) \Longleftrightarrow (\textrm{P 2}) \Longleftrightarrow (\textrm{P 3}) \Longrightarrow (\textrm{P 4}). $$
In particular, a security equivalence is a rank isometry and, in the case $ \mathcal{V} = \mathcal{W} = \mathbb{F}^{m \times n} $ and $ m \neq n $, the reversed implication holds by Proposition \ref{morrison proposition}.
\end{reptheorem} 
}
\begin{proof} 
First we prove $ (\textrm{P 1}) \Longrightarrow (\textrm{P 2}) $: It follows immediately from the characterization of rank support spaces in Proposition \ref{prop characterization}, item 3.

Now we prove $ (\textrm{P 2}) \Longrightarrow (\textrm{P 3}) $: Let $ \mathcal{L} = {\rm RSupp}(\mathcal{D}) \subseteq \mathbb{F}^n $ and $ \mathcal{L}^\prime = {\rm RSupp}(\phi(\mathcal{D})) \subseteq \mathbb{F}^{n^\prime} $. It holds that $ \mathcal{V}_\mathcal{L} \subseteq \mathcal{V} $ and $ \mathcal{V}_{\mathcal{L}^\prime} \subseteq \mathcal{W} $, and they are the smallest rank support spaces in $ \mathcal{V} $ and $ \mathcal{W} $ containing $ \mathcal{D} $ and $ \phi(\mathcal{D}) $, respectively, by Lemma \ref{basic lemma matrix modules}. Since $ \phi $ preserves rank support spaces and their inclusions, we conclude that $ \phi(\mathcal{V}_\mathcal{L}) = \mathcal{V}_{\mathcal{L}^\prime} $, which implies that $ \dim(\mathcal{L}) = \dim(\mathcal{L}^\prime) $ by (\ref{dimension matrix modules}), and (P 3) follows.

Next we prove $ (\textrm{P 2}) \Longleftarrow (\textrm{P 3}) $: Assume that $ \mathcal{U} \subseteq \mathcal{V} $ is a rank support space. This means that $ m{\rm wt_R}(\mathcal{U}) = \dim(\mathcal{U}) $ by (\ref{dimension matrix modules}). Since $ \phi $ satisfies (P 3) and is a vector space isomorphism, we conclude that $ m{\rm wt_R}(\phi(\mathcal{U})) = \dim(\phi(\mathcal{U})) $, and thus $ \phi(\mathcal{U}) $ is a rank support space also by (\ref{dimension matrix modules}). Similarly we may prove that, if $ \phi(\mathcal{U}) $ is a rank support space, then $ \mathcal{U} $ is a rank support space.

Now we prove $ (\textrm{P 3}) \Longrightarrow (\textrm{P 4}) $: Trivial from the fact that $ {\rm wt_R}(\langle \{ C \} \rangle) = {\rm Rk}(C) $, for all $ C \in \mathcal{V} $.

Finally we prove $ (\textrm{P 1}) \Longleftarrow (\textrm{P 2}) $: Denote $ \dim(\mathcal{V}) = \dim(\mathcal{W})= mk $ and consider bases of $ \mathcal{V} $ and $ \mathcal{W} $ as in Proposition \ref{prop characterization}, item 2. By defining vector space isomorphisms $ \mathbb{F}^{m \times k} \longrightarrow \mathcal{V} $ and $ \mathcal{W} \longrightarrow \mathbb{F}^{m \times k} $, sending such bases to the canonical basis of $ \mathbb{F}^{m \times k} $, we see that we only need to prove the result for the particular case $ \mathcal{V} = \mathcal{W} = \mathbb{F}^{m \times n} $.

Denote by $ E_{i,j} \in \mathbb{F}^{m \times n} $ the matrices in the canonical basis, for $ 1 \leq i \leq m $, $ 1 \leq j \leq n $, that is, $ E_{i,j} $ has $ 1 $ in its $ (i,j) $-th component, and zeroes in its other components. 

Consider the rank support space $ \mathcal{U}_j = \langle E_{1,j}, E_{2,j}, \ldots, E_{m,j} \rangle \subseteq \mathbb{F}^{m \times n} $, for $ 1 \leq j \leq n $. Since $ \phi(\mathcal{U}_j) $ is a rank support space, it has a basis $ B_{i,j} $, $ i = 1,2, \ldots, m $, as in Proposition \ref{prop characterization}, item 2, for a vector $ \mathbf{b}_j \in \mathbb{F}^n $. This means that
$$ \phi(E_{i,j}) = \sum_{s=1}^m a_{s,i}^{(j)} B_{s,j}, $$
for some $ a_{s,i}^{(j)} \in \mathbb{F} $, for all $ s,i = 1,2, \ldots, m $ and $ j = 1,2, \ldots, n $. If we define the matrix $ A^{(j)} \in \mathbb{F}^{m \times m} $ whose $ (s,i) $-th component is $ a_{s,i}^{(j)} $, and $ B \in \mathbb{F}^{n \times n} $ whose $ j $-th row is $ \mathbf{b}_j $, then a simple calculation shows that
$$ \phi(E_{i,j}) = A^{(j)} E_{i,j} B, $$
and the matrices $ A^{(j)} $ and $ B $ are invertible. If we prove that there exist non-zero $ \lambda_j \in \mathbb{F} $ with $ A^{(j)} = \lambda_j A^{(1)} $, for $ j = 2,3, \ldots, n $, then we are done, since we can take the vectors $ \lambda_j \mathbf{b}_j $ instead of $ \mathbf{b}_j $, define $ A = A^{(1)} $, and then it holds that
$$ \phi(E_{i,j}) = A E_{i,j} B, $$
for all $ i = 1,2, \ldots, m $ and $ j = 1,2, \ldots, n $, implying (P 1).

To this end, we first denote by $ \mathbf{a}_i^{(j)} \in \mathbb{F}^m $ the $ i $-th column in $ A^{(j)} $ (written as a row vector). Observe that we have already proven that $ \phi $ preserves ranks. Hence $ {\rm Rk}(\phi(E_{i,j} + E_{i,1})) = 1 $, which means that $ {\rm Rk}(A^{(j)} E_{i,j} + A^{(1)} E_{i,1}) = 1 $, which implies that there exist $ \lambda_{i,j} \in \mathbb{F} $ with
$$ \mathbf{a}_i^{(j)} = \lambda_{i,j} \mathbf{a}_i^{(1)}. $$
On the other hand, a matrix calculation shows that
$$ \phi \left( \sum_{i=1}^m \sum_{j=1}^n E_{i,j} \right) = \left( \sum_{i=1}^m \mathbf{a}_i^{(1)}, \sum_{i=1}^m \mathbf{a}_i^{(2)}, \ldots, \sum_{i=1}^m \mathbf{a}_i^{(n)} \right) B $$
$$ = \left( \sum_{i=1}^m \mathbf{a}_i^{(1)}, \sum_{i=1}^m \lambda_{i,2} \mathbf{a}_i^{(1)}, \ldots, \sum_{i=1}^m \lambda_{i,n} \mathbf{a}_i^{(1)} \right) B. $$
Since $ {\rm Rk}(\sum_{i=1}^m \sum_{j=1}^n E_{i,j}) = 1 $ and the vectors $ \mathbf{a}_i^{(1)} $, $ 1 \leq i \leq m $, are linearly independent, we conclude that $ \lambda_{i,j} $ depends only on $ j $ and not on $ i $, and we are done.
\end{proof}

\section{Matrix modules} \label{app matrix modules}

Rank support spaces can also be seen as left submodules of the left module $ \mathbb{F}^{m \times n} $ over the (non-commutative) ring $ \mathbb{F}^{m \times m} $. This has been used in Example \ref{example delsarte less than}. Since we think this result is of interest by itself, we include the characterization in this appendix.

\begin{definition} [\textbf{Matrix modules}]
We say that a set $ \mathcal{V} \subseteq \mathbb{F}^{m \times n} $ is a matrix module if 
\begin{enumerate}
\item
$ V + W \in \mathcal{V} $, for every $ V, W \in \mathcal{V} $, and
\item
$ M V \in \mathcal{V} $, for every $ M \in \mathbb{F}^{m \times m} $ and every $ V \in \mathcal{V} $.
\end{enumerate}
\end{definition}

\begin{proposition} 
A set $ \mathcal{V} \subseteq \mathbb{F}^{m \times n} $ is a rank support space if, and only if, it is a matrix module.
\end{proposition}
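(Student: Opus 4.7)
The plan is to prove both directions separately, using the characterization of rank support spaces from Proposition \ref{prop characterization} (item 2) as the main bridge between the geometric description via row spaces and the algebraic module-theoretic description.

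For the forward direction, suppose $\mathcal{V} = \mathcal{V}_\mathcal{L}$ for some subspace $\mathcal{L} \subseteq \mathbb{F}^n$. Closure under addition is immediate since if $V,W$ have all rows in $\mathcal{L}$ then so does $V+W$. For closure under left multiplication, observe that for any $M \in \mathbb{F}^{m \times m}$, the rows of $MV$ are $\mathbb{F}$-linear combinations of the rows of $V$, hence lie in $\mathcal{L}$, so $MV \in \mathcal{V}_\mathcal{L}$. This direction is essentially routine.

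For the backward direction, suppose $\mathcal{V}$ is a matrix module. First note that $\mathcal{V}$ is in particular an $\mathbb{F}$-vector space, since scalar multiplication corresponds to left multiplication by $\lambda I$. Define $\mathcal{L} = \mathrm{RSupp}(\mathcal{V}) \subseteq \mathbb{F}^n$. The inclusion $\mathcal{V} \subseteq \mathcal{V}_\mathcal{L}$ is obvious by definition. For the reverse inclusion, pick a basis $\mathbf{b}_1, \mathbf{b}_2, \ldots, \mathbf{b}_k$ of $\mathcal{L}$ and consider the basis $B_{i,j}$ of $\mathcal{V}_\mathcal{L}$ from Proposition \ref{prop characterization}, item 2, where $B_{i,j}$ has $\mathbf{b}_j$ in row $i$ and zeros elsewhere. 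It suffices to show that each $B_{i,j}$ lies in $\mathcal{V}$.

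The key observation, which will be the only mildly delicate step, is the following: for any $V \in \mathcal{V}$ and any indices $i,l$, the matrix $E_{i,l}V \in \mathcal{V}$ (where $E_{i,l}$ is the elementary $m \times m$ matrix with a $1$ in position $(i,l)$) has the $l$-th row of $V$ placed in row $i$ and zeros elsewhere. Hence every matrix of the form ``row of some $V \in \mathcal{V}$, placed in a chosen row position'' belongs to $\mathcal{V}$. Since each $\mathbf{b}_j$ is by definition of $\mathrm{RSupp}(\mathcal{V})$ an $\mathbb{F}$-linear combination of rows of matrices in $\mathcal{V}$, closure of $\mathcal{V}$ under addition and scalar multiplication yields $B_{i,j} \in \mathcal{V}$ for all $i,j$, completing the inclusion $\mathcal{V}_\mathcal{L} \subseteq \mathcal{V}$ and the proof.

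The only potential obstacle is ensuring that we really can isolate individual rows and replant them in arbitrary positions using only the module structure; this is handled cleanly by the $E_{i,l}$ trick above, after which the argument reduces to bookkeeping with the basis supplied by Proposition \ref{prop characterization}.
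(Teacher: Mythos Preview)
Your proof is correct and follows essentially the same strategy as the paper's. The only cosmetic differences are that the paper invokes item~3 of Proposition~\ref{prop characterization} (the parity-check description $\mathcal{V}=\{V:VB^T=0\}$) for the forward direction rather than arguing directly with rows, and for the backward direction the paper works with an arbitrary $\mathbf{v}\in\mathcal{L}$ and asserts the existence of a suitable $M\in\mathbb{F}^{m\times m}$, whereas you fix a basis of $\mathcal{L}$ and write the required $M$ explicitly as a combination of elementary matrices $E_{i,l}$; the underlying row-extraction idea is identical.
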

\begin{proof}
Assume that $ \mathcal{V} $ is a rank support space. Using the characterization in Proposition \ref{prop characterization}, item 3, it is trivial to see that $ \mathcal{V} $ is a matrix module.

Assume now that $ \mathcal{V} $ is a matrix module. It holds that $ \mathcal{V} $ is a vector space. Let $ \mathcal{L} = {\rm RSupp}(\mathcal{V}) $, and take $ \mathbf{v} \in \mathcal{L} $. There exist $ V_1, V_2, \ldots, V_s \in \mathcal{V} $ and $ \mathbf{v}_j \in {\rm Row}(V_j) $, for $ j = 1,2, \ldots, s $, such that $ \mathbf{v} = \sum_{j=1}^s \mathbf{v}_j $.

For fixed $ 1 \leq i \leq m $ and $ 1 \leq j \leq s $, it is well-known that there exists $ M_{i,j} \in \mathbb{F}^{m \times m} $ such that $ M_{i,j}V_j $ has $ \mathbf{v}_j $ as its $ i $-th row and the rest of its rows are zero vectors. Since $ \mathcal{V} $ is closed under sums of matrices, we conclude that $ \mathcal{V}_\mathcal{L} \subseteq \mathcal{V} $ and therefore both are equal.
\end{proof}

\appendices

\section*{Acknowledgment}

The authors wish to thank Alberto Ravagnani for clarifying the relation between GMWs and DGWs. The first author is also thankful for the support and guidance of his advisors Olav Geil and Diego Ruano. At the time of submission, the first author was visiting the Edward S. Rogers Sr. Department of Electrical and Computer Engineering, University of Toronto. He greatly appreciates the support and hospitality of Frank R. Kschischang. {\color{black}Finally, the authors also wish to thank the editor and anonymous reviewers for their very helpful comments.}

\ifCLASSOPTIONcaptionsoff
  \newpage
\fi



\bibliographystyle{IEEEtranS}
\end{document}